\definecolor{paleblue}{rgb}{0.69, 0.93, 0.93}
\definecolor{powderblue}{rgb}{0.69, 0.88, 0.9}
\tikzstyle{startstop} = [rectangle, rounded corners, minimum width=3cm, minimum height=1cm, text centered, draw=black, fill=red!30]
\tikzstyle{process} = [rectangle, rounded corners, minimum width=2cm, minimum height=1cm, text centered, draw=black, fill=powderblue]
\tikzstyle{arrow} = [thick,->,>=stealth]
\tikzstyle{decision} = [diamond, minimum width=3cm, minimum height=1cm, text centered, draw=black, fill=green!30]
\newtheorem{theorem}{Theorem}[section]
\newtheorem{corollary}{Corollary}[theorem]
\newtheorem{proposition}{Proposition}[theorem]
\newtheorem{lemma}[theorem]{Lemma}
\theoremstyle{definition}
\newtheorem{definition}[theorem]{Definition}
\begin{document}

\title{Fault-tolerant Preparation of Distant Logical Bell Pair - with application in the magic square game}

\author{Andy Zeyi Liu}
\email{andy.liu@yale.edu}
\affiliation{Institute for Quantum Computing, University of Waterloo, Ontario, Canada, N2L 3G1}
\affiliation{Department of Combinatorics and Optimization, University of Waterloo}
\affiliation{Perimeter Institute of Theoretical Physics, Ontario, Canada, N2L 2Y5}
\affiliation{Yale Quantum Institute, Yale University, New Haven, Connecticut 06511, USA}
\affiliation{Department of Applied Physics, Yale University, New Haven, Connecticut 06511, USA}

\author{Debbie Leung}
\affiliation{Institute for Quantum Computing, University of Waterloo, Ontario, Canada, N2L 3G1}
\affiliation{Department of Combinatorics and Optimization, University of Waterloo}
\affiliation{Perimeter Institute of Theoretical Physics, Ontario, Canada, N2L 2Y5}

\maketitle

\begin{abstract}



Measures of quantum nonlocal properties are traditionally defined assuming perfect and unlimited local computational ability of each remote party.  In real experiments, each computational primitive will be imperfect. Fault-tolerant techniques, developed to enable simulation of arbitrarily accurate quantum computation using noisy primitives, applies only to problems with classical input and output. and need not preserve optimized measures of nonlocality.  

In this paper, we examine the impact of very low noise in measures of quantum nonlocality in the context of nonlocal games.  We observe that, as a nonlocal game has classical inputs and outputs, fault-tolerant techniques can approximate the game value, yet, even arbitrarily small imperfection can disproportionately affect the amount of entanglement required for approximating the game value. 

Focusing on the \textit{fault-tolerant magic square game}, we seek to optimize the tradeoff between noisy entanglement consumption and the deficit in the game value.
We introduce a novel approach leveraging an interface circuit and entanglement purification protocol (EPP) to translate states between physical and logical qubits and purify noisy logical ebits. This method significantly reduces the number of initial ebits needed compared to conventional strategies. Our analytical and numerical analyses, particularly for the $[[7^k,1,3^k]]$ concatenated Steane code, demonstrate substantial(actually, exponential) ebit savings and higher noise threshold. Analytical lower bounds for local noise threshold of $4.70\times10^{-4}$ and initial ebit infidelity threshold of $18.3\%$ are obtained.

Our framework is adaptable to various quantum error-correcting codes \space (QECCs) and experimental platforms. Our protocol will not only enhance understanding of fault-tolerant nonlocal games, but also inspire further exploration of interfacing between different QECCs, promoting the development of modular quantum architectures and advancing quantum internet.
\end{abstract}

\newpage
\tableofcontents
\section{Introduction}
\textit{Quantum Entanglement}, one of the most profound phenomena in quantum mechanics, allows particles to exhibit correlations that cannot be explained by classical physics, even when separated by large distances. This nonlocal behavior has significant implications for various quantum technologies, such as quantum computation and communication. The entangled particles act as a single system, showcasing nonlocal effects that challenge classical notions of locality and providing a foundational resource for tasks that are otherwise impossible or inefficient with classical systems.

Nonlocal games are one of the scenarios where quantum entanglement demonstrates its power. A nonlocal game involves interaction between a referee and spatially isolated players, where the referee assigns questions to each player based on a publicly known distribution. Players respond to their questions, aiming to satisfy specific criteria and win the game. While players cannot communicate during the game, they can use quantum entanglement prior to the game to increase their chances of winning, thus highlighting the power of quantum resources in achieving correlations that surpass classical limits. Nonlocal games not only offer insights into the fundamental nature of quantum mechanics but also have practical applications in quantum cryptography and device-independent quantum key distribution~\cite{Buhrman_2010}. 

Experimental demonstrations of nonlocal games, such as loophole-free Bell tests, have validated these theoretical predictions under various conditions~\cite{1981PhRvL..47..460A, PhysRevLett.115.250401, PhysRevLett.119.010402, loophole-free}. In recent years, special types of nonlocal games like quantum pseudotelepathy~\cite{Brassard_2005} have been explored, where players can achieve success probability 1 with quantum strategies. They have also been verified to show quantum advantage~\cite{Jiamin}. For the Magic Square Game, in the presence of hardware errors, the average winning probability across all assignments of indices is $93.8\%$, surpassing the classical value but still suboptimal. This prompts the question: Can we achieve a winning probability of 1 when each gate in the circuit has a constant error rate?  Additionally, in both communication and nonlocality, what is the minimal amount of resources necessary to achieve this? One key resource is quantum entanglement, or more specifically \textit{EPR pairs(ebits)}, that enables quantum advantage in nonlocal games. And this quantity will be a central focus of this work. 

One promising solution to the first question is \textit{Fault-tolerant Quantum Computation (FTQC)}, which permits arbitrarily low logical error rates in the presence of physical noise, provided that the error rates of all operations remain below a threshold value~\cite{Knill_1998,aharonov1996fault}. An established method for error suppression is code concatenation~\cite{knill1996concatenated, Rahn_2002} and it was later proved that by concatenating the $[[7,1,3]]$ Steane code~\cite{1996Steane}, \textit{FTQC} is possible with a theoretically proven lower bound on the threshold~\cite{AGP05}. Recent advancements have revitalized interest in this approach, showcasing the feasibility of time-efficient and constant-space-overhead \textit{FTQC}\cite{Yamasaki_2024}. 
Although these code concatenation constructions work well under quantum computation settings, complexities arise when we hope to apply them to quantum communication or nonlocal game scenarios, which are underexplored areas. It was shown that in the noisy setting, capacities of quantum channels can be asymptotically approached with a fault-tolerant construction based on Steane code concatenation\cite{christandl2022faulttolerant, Belzig_2024}. However, an exact threshold value is unknown and the question of resources is not addressed. Another popular family of codes for FTQC is topological codes, or more specifically, surface codes. By encoding logical qubits in 2D arrays of physical qubits and using local measurements, they effectively manage both bit-flip and phase-flip errors. This code is highly favored due to its high error threshold and compatibility with existing quantum hardware technologies~\cite{acharya2022suppressing, Krinner_2022, Bluvstein_2022, ryananderson2024highfidelity}. 

In this work, we restrict to the context of \textit{fault-tolerant magic square game} and provide a partial solution to the following question \\

\textit{Given some fixed noise strength in local devices, if Alice and Bob hope to play the magic square game with a success probability arbitrarily close to 1, what is the minimal number of ebits required?} \\

One essential component to establish the results on fault-tolerant quantum communication is an interface circuit that translates between physical qubits and logical qubits. A similar idea was experimentally tested with 9-qubit Shor's code\cite{Luo_2021}. This concept can potentially save ebits as it moves entangling operations from the logical space to the physical space. In the $[[7^k,1,3^k]]$ case, it circumvents the need for transversal entangling gates. However the interface itself is not fault-tolerant, so we invoke the idea of entanglement purification~\cite{CB} to filter out bad ebits. Combining these ideas, we use a modified interface and propose an alternative scheme for preparing a logical ebit and subsequently playing the magic square game. 
\begin{figure}[H] 
    \centering 
    \includegraphics[width=0.6\linewidth]{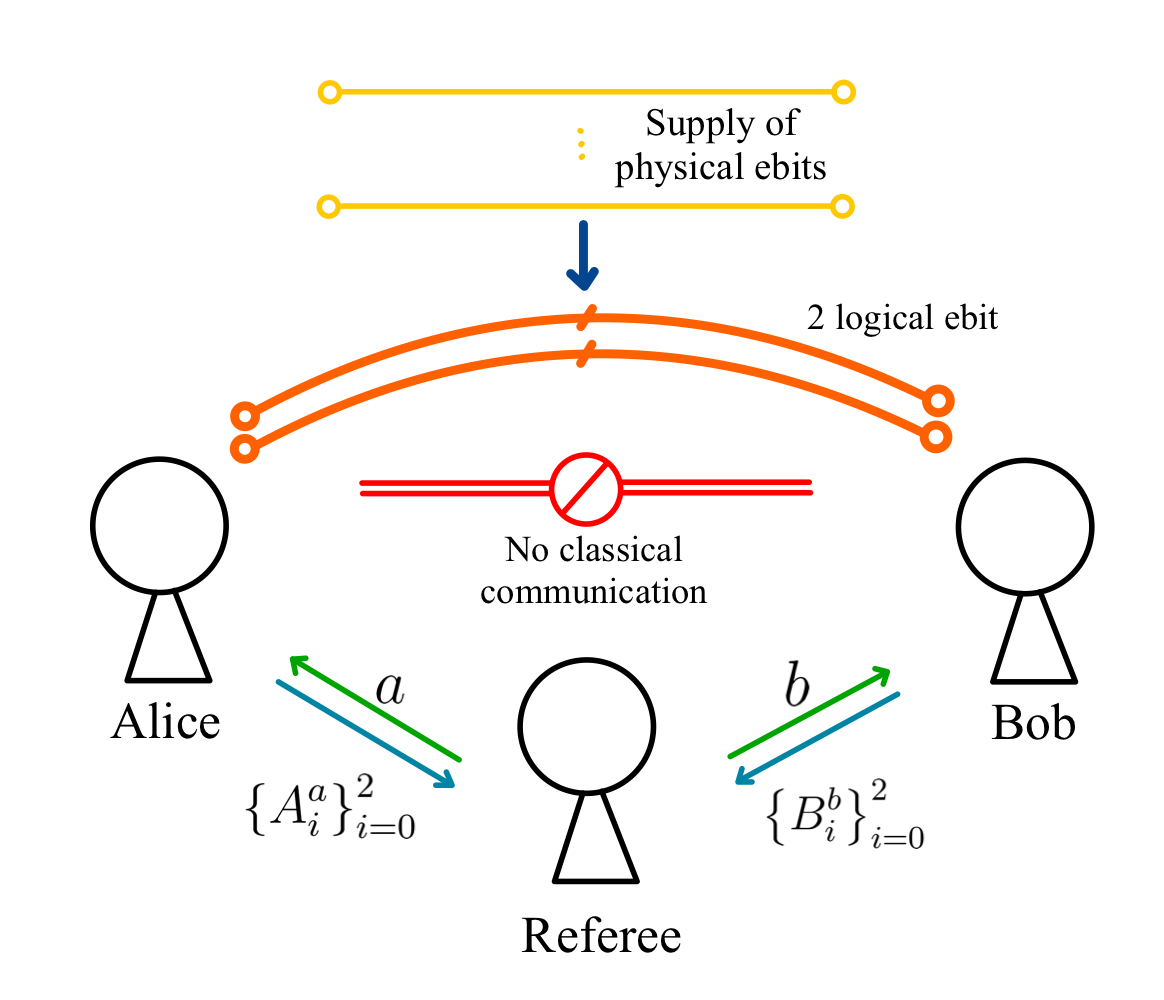} 
    \caption{The procedure of the \textit{Fault-tolerant Magic Square Game}. For the \textit{Magic Square Game}, Alice and Bob are randomly assigned a row and a column index $a,b\in\{0,1,2\}$. They then each reply with three answers $[A_0^a,A_1^a,A_2^a]$ and $[B_0^b,B_1^b,B_2^b]$ with $A_i^a,B_j^b\in\{\pm1\}$ $\forall i,j$. The winning condition is that $\prod_{i=0}^2A_i^a=+1$ and $\prod_{i=0}^2B_i^b=-1$ and $A_b^a=B_a^b$. No communication is allowed during the game. However, if Alice and Bob share two EPR pairs prior to the game and perform corresponding measurements, they are able to win the game with probability 1 (in the error-free scenario). In the FT case, Alice and Bob are supplied with noisy physical ebits. In order to implement the same strategy, they use these to create 2 logical ebits and then perform logical measurements.} 
    \label{intro_illus} 
\end{figure} 
We've compared our scheme with \textit{Direct Encoding}, that is, Alice and Bob prepare $\ket{\overline{+}}$ and $\ket{\overline{0}}$ respectively and they use transversal CNOT to create a logical ebit. We used numerics-assisted methods for  $k\geq1$ and performed exact Monte-Carlo simulation for $k=1$. Let $\Delta$ denote the failure probability of the magic square game. Given that $0<\Delta\ll1$, physical error rate $\epsilon=2.25\times10^{-4}$ and initial infidelity of ebits being $10\%$, our proposal shows substantial savings on the initial ebits consumed. A more explicit comparison can be seen in Figure~\ref{FTMSG_result_intro}.
\begin{figure}[H]
    \centering
    \includegraphics[width=0.6\linewidth]{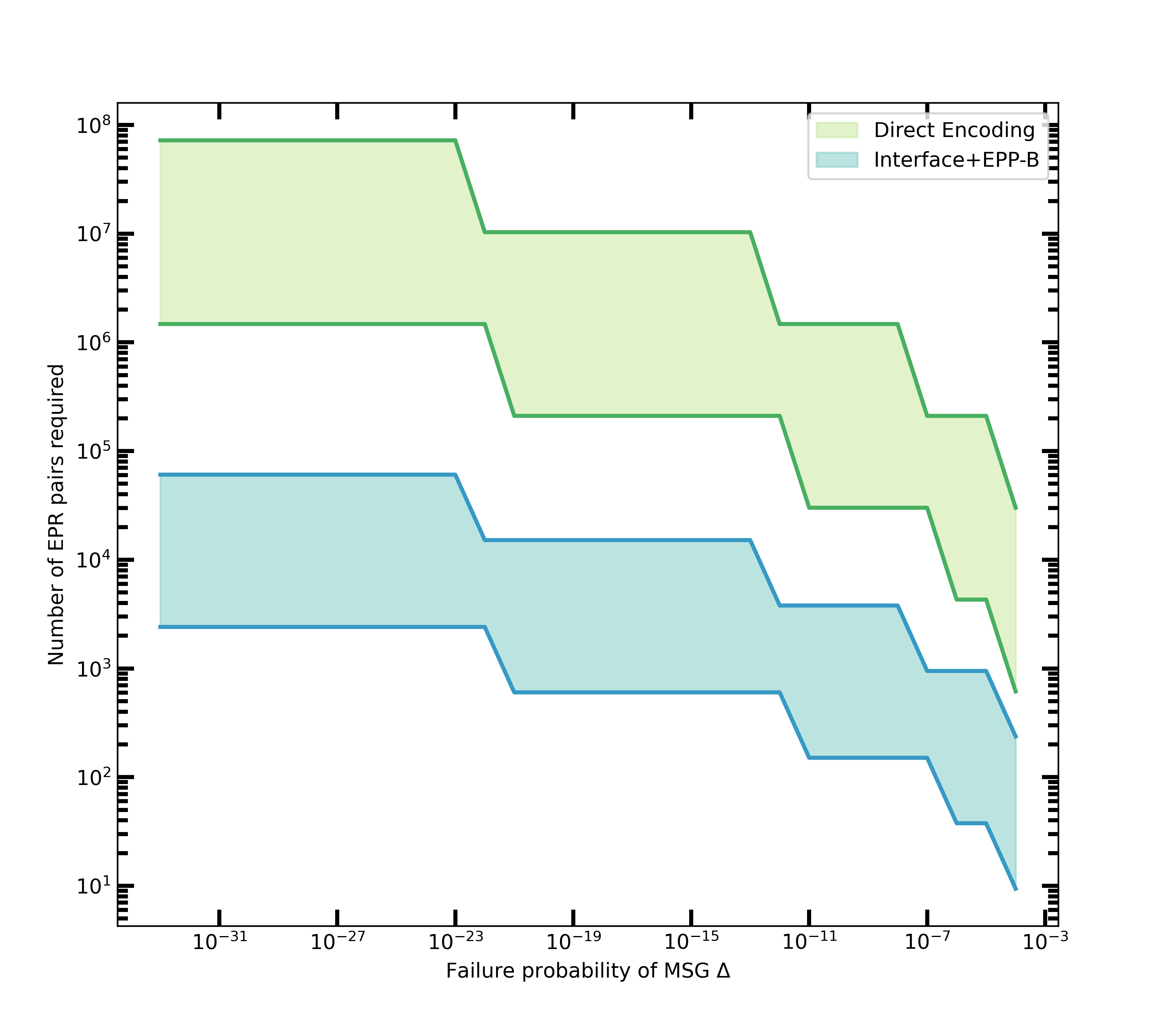}
    \caption{The number of raw ebits required to achieve magic square game failure probability $\Delta$ versus $\Delta$. Due to the complexity of simulation when the concatenation level is high, analytical bounds are obtained instead. The green and blue bands above represent regions bounded by the upper and lower bounds for the two methods respectively.}
    \label{FTMSG_result_intro}
\end{figure}
In fact, there is another advantage of our scheme, the above $\epsilon$ is actually a derived lower bound on the threshold for \textit{Direct Encoding}. Notably, \textit{Interface+EPP} has the potential to tolerate higher local noise levels. Our analysis establishes a lower bound $4.70\times10^{-4}$. Additionally, this method is capable of handling significantly noisier initial ebits, with a lower bound on the infidelity threshold being $18.3\%$.

The rest of this paper is organized as follows. In Section~\ref{chap:QECC} we review the basics of entanglement purification protocol(EPP), fault-tolerance and the magic square game. Familiar readers may skip this section. In Section~\ref{initial_EPP} we begin by discussing the preparation of high-fidelity ebits from noisy ebits. In Section~\ref{exRecthres}, we provide analytical bounds on the logical error rate of exRecs used in our work and thus derive a lower bound on the threshold. Using these results, Section \ref{Dir_Enc} presents bounds on the logical error rate of a logical ebit prepared via \textit{Direct Encoding}. Section \ref{IEPP} provides an overview of the novel \textit{Interface+EPP} method, and describes our modified interface for the Steane code. We then obtain bounds on the logical error rate of the \textit{Interface+EPP} schemes in Section \ref{logicalerrEPP}. In Section \ref{FTMSG}, we combine the results and obtain our main result as above. Finally, we present a full numerical simulation comparing the methods for encoding level-1. Lastly, we summarize our results and outline future directions in Sec.\ref{conclusion}.

\section{Preliminaries}\label{chap:QECC}
\subsection{Notations}
Throughout this work, we use $k$ to denote the level of concatenation. Quantities with superscript $a^{(k)}$ represent the corresponding $a$ encoded to level-$k$. The term \textit{ebit} refers to an EPR pair $|\Psi\rangle=(|00\rangle+|11\rangle)/\sqrt{2}$, while $\overline{\text{ebit}}$ denotes a logical ebit. Hence $\overline{\text{ebit}}^{(k)}$ would denote a logical ebit encoded to level-$k$. We use $\overline{\text{EPP}}$ to denote the local operation component of a logical EPP procedure.

\subsection{Entanglement Purification Protocol (EPP)}
Entanglement purification protocol (EPP) aims at generating high-fidelity entangled quantum states from a larger set of low-fidelity one, through local operations and classical communication (LOCC). It was first introduced by Bennett et al.~\cite{CB}. In this paper we will mainly use two-way communication protocols, where classical communication is bidirectional. It starts with two parties, Alice and Bob initially sharing imperfect ebits. Through local operations, measurements, and classical communication, they refine these pairs into a single pair with higher fidelity. In this work, we shall assume the initial noisy EPR pairs are in the Werner state form, so in the Bell basis they can be written as,
\begin{align*}
    \rho=F|\Phi^{+}\rangle\langle\Phi^+|&+\left(\frac{1-F}{3}\right)\bigg(|\Psi^+\rangle\langle\Psi^+|+|\Psi^-\rangle\langle\Psi^-|+|\Phi^-\rangle\langle\Phi^-|\bigg)
\end{align*}
where $|\Phi^-\rangle=\frac{1}{\sqrt{2}}(|00\rangle-|11\rangle)$, $|\Psi^+\rangle=\frac{1}{\sqrt{2}}(|01\rangle+|10\rangle)$, $|\Psi^-\rangle=\frac{1}{\sqrt{2}}(|01\rangle-|10\rangle)$.
It was shown that any two-qubit state can be converted into this form via the appropriate `twirl' operation. Due to the fact $(U\otimes I)|\Phi\rangle=(I\otimes U^T)|\Phi\rangle$, an error on the ebit is one of $XI,YI,ZI$. The simplest example of EPP is shown in Figure \ref{simple_EPP},
\begin{figure}[H]
    \centering
    \includegraphics[width=0.25\linewidth]{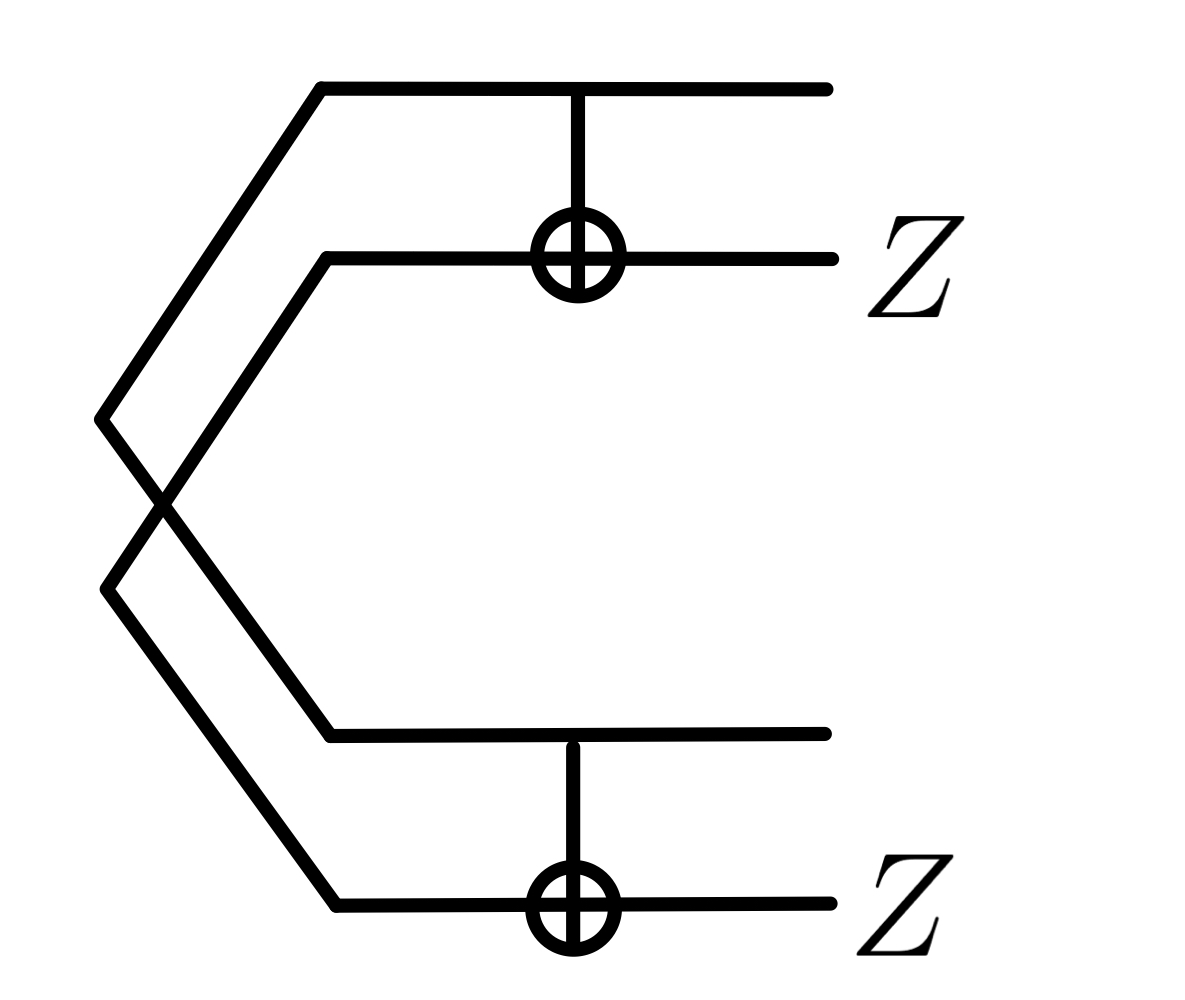}
    \caption{A simple EPP. The upper half and lower half are owned by Alice and Bob respectively. Upon measuring the second qubit in the $Z$-basis, if the outcomes agree it marks a purification.}
    \label{simple_EPP}
\end{figure}
It can be alternatively thought of as an error-detecting circuit. Assuming we start with two perfect ebits, an $XI$ error occurs in the first ebit, then on Alice's side, it will be propagated by the CNOT to the second pair, and the $Z$-basis measurement, the second ebit will now be in the state $|\Psi^+\rangle$. Thus if both parties measure in the $Z$-basis, the outcomes will disagree, leading to rejection. Following the paradigm, alternative EPP schemes are explored, a more detailed discussion can be found in the work of Krastanov et al.~\cite{Krastanov_2019}. In the presence of circuit-level noise, the results ebits will contain errors after EPP. For the specific EPP circuits we use in this work, we will provide detailed descriptions in subsequent sections as they become pertinent to our analysis.

\subsection{Quantum Fault-tolerance}\label{FTQC1}
To achieve a reliable simulation of a quantum circuit, we need to construct logical components that are `good'. This means that the error correction process must not introduce more errors than it can handle. Indeed, the core principle of \textit{fault-tolerance} is to control the propagation of errors within the circuit. We primarily adhere to the framework described in Aliferis et al.~\cite{AGP05}, introducing only the concepts relevant to our context and making appropriate adjustments as needed. 

For the basics, we refer \textit{location} to all the single components in the circuit. Broadly speaking, locations can be categorized into the following types: preparation location, gate location, wait location and measurement location. Classical computation is assumed to be error-free. To derive our main results, it suffices to list the following locations:
\begin{enumerate}
    \item preparation of $|0\rangle$
    \item preparation of $|+\rangle$
    \item measurement of $X$
    \item measurement of $Z$
    \item (local) CNOT gate 
    \item nonlocal resource
    \item identity gate
\end{enumerate}
The location of type $i$ is denoted $Loc_i$. Note that in $Loc_5$ and $Loc_6$ we have the notion of `local'. Let us recall EPP introduced above, the initial ebits shared between Alice and Bob are considered `non-local' while any operation Alice(Bob) does on her(his) side is `local'. Thus $Loc_6$ can be a nonlocal CNOT (CNOT across Alice and Bob) or an ebit, which will be specified depending on the context. 
It's worth distinguishing \textit{error} from \textit{fault}. An \textit{error} occurs when one physical qubit is corrupted while a \textit{fault} occurs when a location goes bad. For example, in the case of Pauli noise, when a CNOT gate is faulty, and $XX$ follows the perfect CNOT. In this case, we say 2 errors occur but there is only one fault. Now we can define the error model.
\begin{definition}[Independent Pauli Noise]
    In an independent Pauli noise model, each location in the circuit is assumed to fail independently. A faulty location $Loc_i$ can be seen as a perfect $Loc_i$ followed by Pauli noise with strength $\epsilon_i$, except the case of measurement, where is seen as an error followed by perfect measurement.
\end{definition}
This noise model falls into the broad category of \textit{circuit-level} noise. In later analysis, we will treat the $Loc_5$ error rate $\epsilon$ as the baseline, and all the other locations are proportional to $\epsilon$ with $\epsilon_i=\sigma_i\epsilon,\space\forall i\in\{1,2,3,4,6,7\}$. By default, in this paper we follow conventions in Knill~\cite{Knill_2005} and have $\sigma_i=\frac{4}{15}$ for $i\in\{1,2,3,4\}$  and $\sigma_7=\frac{4}{5}$. $\epsilon_6$ is the error rate of the nonlocal resource, in the case of an ebit, it is the infidelity. The analytical and numerical methods in this paper will be applicable to alternative configurations of $\sigma_i$'s, accommodating implementation across various platforms and devices. Next, we set to perform simulation for the quantum circuit. We first define what we mean by a \textit{gadget}.
\begin{definition}[Gadget]
   A gadget for a quantum operation is a circuit that executes the corresponding operation on the logical state when no fault occurs. An error-correction (EC) gadget functions by measuring the stabilizer generators to extract the syndrome and subsequently applying Pauli corrections. When operating without faults, it can correct up to $t=\lfloor\frac{d-1}{2}\rfloor$ errors for a distance $d$ QECC.
\end{definition}
Given different types of locations and their corresponding gadgets in the logical space, we may define what we mean by \textit{fault-tolerance}.
\begin{definition}[Fault-tolerance criteria~\cite{AGP05}]\label{FTcriteria}
Suppose we encode qubits in a QECC with distance $d$. Let $r$ denote the number of input errors into a gadget, $s$ denote the number of faults in the gadget ($r=0$ for preparation gadget), and $t=\lfloor\frac{d-1}{2}\rfloor$. Then a gadget is fault-tolerant if
\begin{enumerate}
    \item Preparation gadget\\
        When $s\leq t$, a preparation gadget with $s$ faults produces a logical state with at most $t$ errors.
    \item Measurement gadget \\
        When $r+s\leq t$, the outcome of a measurement gadget agrees with an ideal measurement. In particular, for a non-destructive measurement, we also need the number of errors on the output data block to be at most $t$.
    \item Gate gadget \\
    When $r+s\leq t$, an output state from a gate gadget has at most $t$ errors in each output block.
    \item Error-correction(EC) gadget
        When $r+s\leq t$, the output state deviates from the codespace by at most $t$ errors. In particular, if $s=0$, EC gadget takes any input in the codespace with $r\leq t$ to an output with no errors.
\end{enumerate}
\end{definition}
Suppose we have gadgets that individually meet the specified criteria and we aim to use them to simulate an ideal circuit. We observe that unless the physical error rates are arbitrarily small, the errors will accumulate when the gadgets are combined as the ideal circuit gets larger, thereby violating the criteria. Hence, to ensure fault-tolerance even when gadgets are put together, one solution is to perform error correction following every logical operation. This leads to the following definition.
\begin{definition}[\textit{Rec and exRec}]
An \textit{extended rectangle}(\textit{exRec}) of a FT simulation circuit is defined as
\begin{enumerate}
    \item Preparation-exRec\\
    A preparation-exRec consists of a preparation gadget and the EC gadget after it.
    \item Measurement-exRec\\
    A measurement-exRec consists of a measurement gadget and the preceding EC gadget.
    \item Gate-exRec\\
    A gate-exRec consists of a gate gadget and the EC gadgets immediately before and after it. If omitting the preceding EC, we call it a \textit{gate-rectangle}(\textit{Rec}).
\end{enumerate}
\end{definition}
In Figure \ref{exRec} we show the exRecs for various locations. $|\overline{+}\rangle$-exRec and $\overline{Z}$-mmt-exRecs will be identical to their counterparts in the figure except for changing $|\overline{0}\rangle$ to $|\overline{+}\rangle$ and $\overline{X}$ to $\overline{Z}$. For the CNOT-exRec we illustrate the difference between \textit{Rec} and \textit{exRec}, the part within the dashed line is a \textit{Rec}.
\begin{figure}[H]
    \centering
    \begin{subfigure}[b]{0.45\textwidth}
    \centering
    \begin{tikzpicture}
    \node[scale=1.2]
    {
    \begin{quantikz}
    \lstick{\ket{\overline{0}}}&\qwbundle{}&\gate{\text{EC}}&\qw
    \end{quantikz}
    }; 
    \end{tikzpicture}
    \end{subfigure}
    \hfill
    \begin{subfigure}[b]{0.45\textwidth}
    \centering
    \begin{tikzpicture}
    \node[scale=1.2]
    {
    \begin{quantikz}
    &\qwbundle{}&\gate{\text{EC}}&\meter{\overline{X}}
    \end{quantikz}
    };  
    \end{tikzpicture}
    \end{subfigure}
    \vfill
    \begin{subfigure}[b]{0.45\textwidth}
    \centering
    \begin{tikzpicture}
    \node[scale=1.2]
    {
    \begin{quantikz}
    &\qwbundle{}&\gate{\text{EC}}&\gate{\overline{H}}&\gate{\text{EC}}&\qw
    \end{quantikz}
    }; 
    \end{tikzpicture}
    \end{subfigure}
    \hfill
    \begin{subfigure}[b]{0.45\textwidth}
    \centering
    \begin{tikzpicture}
    \node[scale=1]
    {
    \begin{quantikz}
    &\qwbundle{}&\gate{\text{EC}}&\gate[2]{\overline{\text{CNOT}}}\gategroup[2,steps=2,style={dashed,rounded
corners, inner
xsep=2pt},background,label style={label
position=below,anchor=north,yshift=-0.2cm}]{\text{Rec}}&\gate{\text{EC}} & \qw \\
    &\qwbundle{}&\gate{\text{EC}}&&\gate{\text{EC}}& \qw
    \end{quantikz}
    };  
    \end{tikzpicture}
    \end{subfigure}
    \caption{Examples of different exRec. Upper-left is the $\ket{\overline{0}}$-exRec. Upper-right is the $\overline{X}$-measurement-exRec. Lower-left is a $\overline{H}$-exRec. Other one-qubit gates are constructed analogously. Lower-right is an example of a two-qubit exRec, the $\overline{\text{CNOT}}$-exRec. The part in the dashline, without the preceding ECs, is a \textit{Rec}.} 
    \label{exRec}
\end{figure}
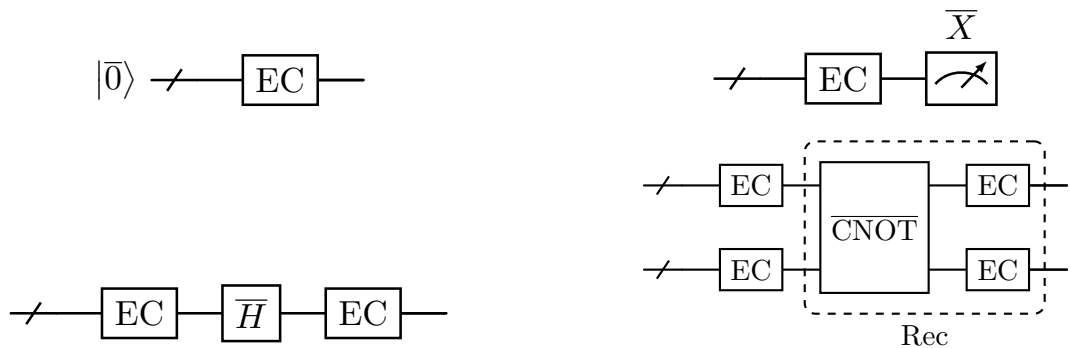
Now, to fault-tolerantly simulate a large circuit, we need to suppress the error rate of each component. To achieve this, we resort to code concatenation. Based on the exRec construction before, we can concatenate QECC and have the following definitions. 
\begin{definition}[\textit{Recursive simulation}]
    Let $C_0$ be the ideal circuit and let $C_l$ be the level-$l$ fault-tolerant simulation of $C_0$. $C_l$ is constructed in the following recursive way: At level-1, each location is simulated by its corresponding Rec. At level-$k$, $k\geq2$, each location is replaced by the $(k-1)$-Rec.
\end{definition}
Next, adopted from Aliferis et al.~\cite{AGP05}, we give definitions on when the exRecs `fail', i.e. have logical errors.
\begin{definition}[\textit{Malignant set}]\label{malig_set}
    A set of locations in an exRec is \textit{benign} if the Rec contained in the exRec has no logical error at the output when any choice of errors occur in these locations. If the set of locations is not benign, they form a malignant set. 
\end{definition}
The reason why we address the correctness of the Rec will be evident in the following definition. 
\begin{definition}[\textit{Badness of exRecs}]
For $k=1$ an exRec is \textit{bad} if it contains it contains faults that form a malignant set; if it is not bad it is good. For $k>1$, a $k$-exRec is bad if it contains independent $(k-1)$-exRecs at a malignant set of locations. Two bad exRecs are independent if they are non-overlapping or if they overlap and the earlier $k$-exRec is still bad when the shared $k$-EC is removed. If it is not bad it is good. For a simulation circuit consisting of exRecs, we call the whole circuit \textit{bad} if the output contains a logical error.
\label{badness}
\end{definition}
The idea in this `independence' definition is that if there are in total 3 faults in two consecutive 1-exRecs in which one fault is in the overlapping EC, then the overlapping pair of bad 1-exRec is really no worse than a single bad 1-exRec. In Aliferis et al.~\cite{AGP05}, it was shown that if all exRecs in the fault-tolerant circuit are good, then this would give a correct simulation of the ideal circuit, i.e. the probability distribution of the final measurement outcomes are the same, thereby validating the fault-tolerance. Based on these definitions we have the following theorem.
\begin{theorem}(Theorem 5~\cite{AGP05})\label{CorrectSim}
    Let $C$ be a circuit that begins with state preparation and ends with measurement. Let $\Tilde{C}$ be the fault-tolerant simulation of $C$ under independent Pauli noise. Suppose that for a particular fault path $\gamma$, the exRecs in $\gamma$ form a benign set. Then, the output distribution of $\Tilde{C}$, is identical to the output distribution of \( C \) with ideal gates.
\end{theorem}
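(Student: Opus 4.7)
The plan is to proceed by induction on the number of exRecs encountered while stepping through the simulation circuit $\tilde{C}$, tracking the encoded state against the ideal state of $C$ at the corresponding logical time step. The inductive hypothesis I would maintain is: if every exRec processed so far is good (i.e., the faults within it form a benign set in the sense of Definition~\ref{malig_set}), then after its trailing EC the state differs from the ideal-encoded state by at most $t$ Pauli errors per code block --- errors which, by Definition~\ref{FTcriteria}, the next EC is guaranteed to handle.

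For the base case I would treat a preparation-exRec: its goodness means the contained Rec produces the intended codeword, and item~1 of Definition~\ref{FTcriteria} guarantees at most $t$ residual errors. For the inductive step on a gate-exRec, the leading EC (carrying at most $t$ input errors by hypothesis) cleans the input back into the codespace, the gate gadget applies the correct logical operation by item~3, and the trailing EC again leaves at most $t$ residual errors. Benignness forbids any fault pattern in the exRec from promoting a data error to a logical error within the Rec. For measurement-exRecs, item~2 of Definition~\ref{FTcriteria} directly yields that the outcome matches an ideal measurement on the correctly-encoded state.

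The main obstacle, I expect, is the careful handling of the EC gadget that is shared between two consecutive exRecs --- the trailing EC of one is syntactically the leading EC of the next, and faults within it must be accounted for coherently in both analyses. The independence clause in Definition~\ref{badness} is the lever: it is set up precisely so that two consecutive good exRecs sharing an EC still behave like two benign segments. I would argue that since independent Pauli noise lets each fault act locally, the output of a good Rec carries at most $t$ errors per block, and the fresh EC inside the next exRec removes them before the next logical operation fires; consequently the inductive hypothesis propagates without double-counting faults, and the Rec/exRec distinction in the definitions is exactly what allows this bookkeeping to close.

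Once the induction is complete, every measurement-exRec in $\tilde{C}$ yields the same quantum outcome distribution as the corresponding ideal measurement in $C$ acting on the correctly evolved logical state. Taking the joint distribution over all measurement outcomes (which is determined by the common ideal logical evolution in both $C$ and $\tilde{C}$ conditioned on $\gamma$) gives equality of the full output distributions, establishing the theorem.
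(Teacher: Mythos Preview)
The paper does not supply its own proof of this theorem: it is stated as a citation of Theorem~5 in Aliferis--Gottesman--Preskill~\cite{AGP05}, and the surrounding text simply asserts that the result was shown there. So there is no in-paper argument to compare against.

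Your outline is essentially the AGP05 argument: induction along the sequence of exRecs, using the fault-tolerance criteria of Definition~\ref{FTcriteria} to propagate the invariant ``at most $t$ errors per block after the trailing EC,'' and invoking the independence clause of Definition~\ref{badness} to handle the shared EC between consecutive exRecs. One refinement worth noting: in AGP05 the formal bookkeeping is done not by tracking a state directly but by introducing ideal decoders and showing that a good exRec is \emph{correct} in the sense that the Rec followed by ideal decoders equals ideal decoders followed by the ideal gate; this commutation is what cleanly pushes the decoders through the circuit and reduces $\tilde C$ to $C$. Your state-tracking formulation is morally equivalent but would need that decoder device (or something like it) to be made fully rigorous, since ``differs by at most $t$ errors'' is a statement about operators, not a well-defined intermediate quantum state once measurements and classical feed-forward are in play.
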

These constructions also lead to the well-known threshold theorem~\cite{aharonov1996fault,AGP05}. It is important to clarify what we mean by \textit{threshold} here, as we will later derive the threshold value specific to the constructions in this work. The following definition is adopted from Svore et al.~\cite{svore2006flowmap}.
\begin{definition}[FT threshold]\label{FTthreshold}
    Consider any ideal quantum circuit $C$. For a series of FT schemes consisting of a family of QECC $[[n(L), k(L), d(L)]]$ parametrized by $L$ and their corresponding FT gadget sets, let $C_L$ denote the $L$th simulation circuit. For a noise model $\mathcal{N}$ of strength $\epsilon$, the $L$-th simulation circuit with noise is denoted $C_{L,\mathcal{N}}$. The failure probability $p_L$ of $C_{L,\mathcal{N}}$ is defined as 
    \begin{equation*}
        p_L=\sup_{\rho}T(C(\rho),C_{L,\mathcal{N}}(\rho))
    \end{equation*}
    where $T(\rho,\sigma)=\frac{1}{2}\mathbf{Tr}|\rho-\sigma|$ is the trace distance. The \textit{fault-tolerance threshold} is defined as $\epsilon_{\text{th}}(C)$ such that when $\epsilon<\epsilon_{\text{th}}(C)$,
    \begin{equation*}
        \lim_{L \rightarrow\infty}p_L=0
    \end{equation*}
\end{definition}

\subsection{Magic Square Game}\label{MSG}
The Mermin-Peres Magic Square game is one of the simplest non-local games in which a referee randomly assigns a row and a column index $a, b\in\{0,1,2\}$ to two parties Alice and Bob. They then each replied with three answers $[A_0^a, A_1^a, A_2^a]$ and $[B_0^b, B_1^b, B_2^b]$. The winning condition is $\Pi_{i=0}^2 A_i^a=+1$ for all $a$ and $\Pi_{i=0}^2 B_i^b=-1$ for all $b$ and most importantly we require $A_b^a=B_a^b$, i.e. the overlapping element of their answers must be the same. The best classical strategy succeeds with probability 8/9, but there is a perfect quantum strategy, assuming all operations are faultless. The strategy is as follows: $A$ and $B$ share the state $\frac{1}{2}(|00\rangle+|11\rangle)_{A_1B_1}(|00\rangle+|11\rangle)_{A_2B_2}$ before the game starts. They then measure in the set of basis given in Table \ref{msgmmt} corresponding to the indices they are assigned. For instance, if the referee assigns Alice 1 and Bob 2, when the game starts Alice will measure her qubits in $ZI,IX,ZX$-basis sequentially and Bob will measure in $XZ,ZX,YY$-basis. Following this procedure, their measurement outcomes will satisfy the criteria. 
\begingroup
\setlength{\tabcolsep}{10pt}
\renewcommand{\arraystretch}{1.35}
\begin{table}[H]
\centering
\renewcommand{\arraystretch}{1.8} 
\centering
\begin{tabular}{cccccc}
                       & 0                         & 1                          & 2                        & \multicolumn{1}{l}{} &  \\ \cline{2-4}
\multicolumn{1}{l|}{0} & \multicolumn{1}{c|}{$I\otimes Z$}  & \multicolumn{1}{c|}{$Z\otimes I$}   & \multicolumn{1}{c|}{$Z\otimes Z$} &            &  \\ \cline{2-4}
\multicolumn{1}{l|}{1} & \multicolumn{1}{c|}{$X\otimes I$}  & \multicolumn{1}{c|}{$I\otimes X$}   & \multicolumn{1}{c|}{$X\otimes X$} &                  &  \\ \cline{2-4}
\multicolumn{1}{l|}{2} & \multicolumn{1}{c|}{$X\otimes Z$} & \multicolumn{1}{c|}{$Z\otimes X$} & \multicolumn{1}{c|}{$Y\otimes Y$} & 
&  \\ \cline{2-4}
\end{tabular}
\caption{Measurement basis for Alice and Bob when they are assigned different indices. Alice holds the row index while Bob holds the column index.}
\end{table}
\label{msgmmt}
\endgroup

\section{Preparation of High-fidelity Physical EPRs}\label{initial_EPP} 
In practice, an ebit shared between two distant parties will have significantly higher infidelity compared to the local physical error rates, e.g. ebits generated with photons, and the attenuation will worsen in proportion to distance. In this paper, we will assume the infidelity of the initial EPR pair to be $3q\approx10\%$, where $q$ is the probability of one of the other Bell states. For higher initial infidelity we can perform some rounds of EPP to reduce to the desired error rate. The physical error rate of local operations is assumed to be below the FT threshold (which is usually much lower than the initial infidelity) to ensure fault-tolerance. Due to this large discrepancy and by the threshold theorem, it's necessary to first bring down the infidelity to a level comparable to local error rates before we prepare $\overline{\text{ebit}}^{(k)}$ so that we can treat them as `the same type of error'. To accomplish this, we will perform physical EPPs (EPPs that do not involve encoded logical qubits) at the start. We will use a scheme explored by Krastanov et al~\cite{Krastanov_2019}, which purifies 1 ebit from 5, as shown below (on one side)
\begin{figure}[H]
    \centering
    \begin{tikzpicture}
    \node[scale=1]
    {
    \begin{quantikz}
        & \ctrl{1} & \qw & \qw & \qw & \targ{} & \qw & \qw \\
        & \control{} & \ctrl{1} & \meter{X} & & \ctrl{-1} & \ctrl{1} & \meter{X} \\
        & \qw & \control{} & \meter{X} & & \qw & \control{} & \meter{X}
    \end{quantikz}
    }; 
    \end{tikzpicture}
    \caption{A purification circuit (one side). Alice and Bob share 5 ebits initially. They then locally perform the circuit and measurements above. Upon comparing the 4 measurement results, if all of them agree, they keep the first ebit; otherwise, discard.}
    \label{expedient}
\end{figure}
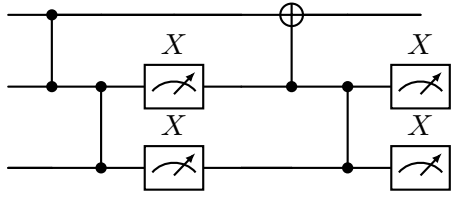
Given the local physical error rate $\epsilon$, we have that the infidelity after one round of purification being 
\begin{align*}
    I(q,\epsilon)&=\frac{1}{2}\epsilon+\frac{63}{16}\epsilon^2+\frac{49}{4}q\epsilon+6q^2+\frac{879}{64}\epsilon^3+\frac{107}{2}q\epsilon^2+\frac{345}{4}q^2\epsilon+48q^3
\end{align*}
If we perform another round of EPP, the infidelity will be $I(I(q,\epsilon)/3,\epsilon)$ etc. In the next section, we will obtain a theoretical lower bound for the threshold value. With this information, we can determine the necessary number of rounds and the success probability of EPP, facilitating subsequent resource comparisons. We note that if we only perform physical EPP, the infidelity is bounded away from zero due to local errors in the locations. This justifies the necessity of performing logical EPP, as detailed later. 
\section{ExRec and Threshold}\label{exRecthres}
In this section, we consider the concatenated $[7^k,1,3^k]$ Steane code. The preparation and measurement gadgets used in this paper are detailed in Appendix~\ref{FT_construction}. At $k=1$, since [7,1,3] is a doubly even self-dual CSS code, it admits a transversal implementation of the logical Clifford group. Hence, transversal CNOT is $\overline{\mathbf{CNOT}}$. Throughout this work, the EC used will be the Steane EC because of its relative convenience in theoretical analysis and relatively high pseudo-threshold. In practice, we may use the flag error correction~\cite{RuiChao} to save physical qubits and similar conclusions will follow. Besides, since the Steane code is of distance 3, it's capable of correcting one error. So at least two faults are needed to cause a logical error. We will thus confine Definition~\ref{malig_set} to \textit{malignant pairs} of locations. In our simulations, a pair of locations in an exRec is identified as malignant when noise is introduced into these specific locations, while others remain fault-free, and this leads to a logical error for the Rec. However, the counting procedure needs more prudent treatment. The Steane code has the nice property that a faultless EC will take any input to the codespace and an EC with one fault will take any input to a state that deviates at most weight-one error from the codespace. So essentially we are testing the correctness of the Rec given the input to the Rec is an operator $E_i$ of weight at most 1. This is sufficient for fault-tolerance since the whole circuit can be seen as a string of Recs followed by one another except the preparation exRec, which was justified to be fault-tolerant. Given previous definitions of malignant set and badness, we can treat the exRecs as independent when generalizing to a higher level of concatenation. Combining the above constructions and ideas, we can prove the following theorem:
\begin{theorem}
    Suppose that independent stochastic noise occurs with probability at most $\epsilon$ at each location in a noisy quantum circuit. Then the logical error rate $\varepsilon^{(k)}$ of the largest $k$-exRec satisfies the following bounds
    \begin{equation*}
        \frac{1}{A_5^{(k)}}\left(A_5^{(k)}\prod_{i=1}^{k-1}\left(A_5^{(i)} \right)^{2^{k-1-i}}\epsilon^{2^{k-1}} \right)^2\leq\varepsilon_5^{(k)}\leq\frac{1}{D_5^{(k)}}\left(D_5^{(k)}\prod_{i=1}^{k-1}\left(D_5^{(i)} \right)^{2^{k-1-i}}\epsilon^{2^{k-1}} \right)^2
    \end{equation*}
    for $\epsilon<\epsilon_0\approx4.70\times10^{-4}$, where $\epsilon_0$ is the threshold value, $\lim_{k\rightarrow\infty}A_5^{(k)}=A_5^*=979.7$ and $\lim_{k\rightarrow\infty}D_5^{(k)}=D_5^*=1827.1$. Both bounds approach 0 as $k\rightarrow\infty$. 
\end{theorem}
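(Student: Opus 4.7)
The plan is to follow the standard AGP05 level-reduction argument: bound the level-1 Rec failure probability by exhaustive enumeration of malignant pairs of locations, lift to arbitrary level $k$ via the recursive badness definition, unroll the resulting one-step recursion into the displayed product form, and read off the threshold by requiring the recursion to contract as $k\to\infty$.

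First I would carry out the level-1 malignant-pair counting for the CNOT-exRec, which is the largest $1$-exRec (a transversal logical CNOT flanked by four Steane EC gadgets). For each unordered pair of locations, I would inject every nontrivial Pauli fault into the chosen pair (with all other locations fault-free), propagate through the remaining perfect operations, and test whether the Rec produces a logical error on a data block that enters within weight-one of the codespace; this input assumption is justified because a preceding EC either belongs to a prior exRec (and is absorbed into its own analysis) or is a preparation gadget, while faultless ECs clean to the codespace and one-fault ECs leave at most a weight-1 residue. The upper bound $D_5^{(1)}$ sums over all malignant pairs, weighted by the per-location fault probabilities $\sigma_i\epsilon$ from the Knill noise conventions; the lower bound $A_5^{(1)}$ sums only over pair/fault configurations guaranteed to leave an uncorrectable logical error, so that every such event contributes directly to $\varepsilon_5^{(1)}$. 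Analogous but smaller enumerations are carried out for the other $1$-exRec types (preparations, measurements, single-qubit gates, identity), since those appear as constituent pieces at higher concatenation levels.

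Next, I would lift the analysis to level $k$ via Definition~\ref{badness}: a $k$-exRec is bad iff it contains two independent bad $(k-1)$-exRecs at a malignant pair of level-1 locations. By Theorem~\ref{CorrectSim}, the Rec failure probability at level $k$ is upper-bounded by a union bound over such configurations, each contributing $(\varepsilon_5^{(k-1)})^2$. The independence clause (overlapping $k$-ECs collapse two adjacent bad exRecs into one) forces a level-dependent correction to the malignant-pair count, producing sequences $A_5^{(k)}$ and $D_5^{(k)}$ that converge geometrically to fixed points $A_5^*, D_5^*$. This yields the one-step recursion $\varepsilon_5^{(k)} \leq D_5^{(k)}(\varepsilon_5^{(k-1)})^2$ together with its symmetric lower-bound analogue; iterating from $\varepsilon_5^{(0)} = \epsilon$ reproduces the double-exponential product form in the statement. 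The threshold condition $\varepsilon_5^{(k)}\to 0$ reduces, after dividing the log-recursion by $2^k$ and invoking convergence of $\{D_5^{(k)}\}$, to $\log\epsilon + \sum_{i\geq 1} 2^{-i}\log D_5^{(i)} < 0$; evaluating the sum on the numerically computed sequence gives $\epsilon_0 \approx 4.70\times 10^{-4}$, slightly below the naive $1/D_5^* \approx 5.47\times 10^{-4}$ because the early $D_5^{(i)}$ exceed the limit.

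The main obstacle I expect is the level-1 malignant-pair enumeration for the CNOT-exRec: once the four Steane ECs are unrolled around the transversal CNOT, the number of location pairs runs into the thousands, and correctness of the count hinges on faithfully propagating Pauli faults through ancilla preparation, syndrome extraction, and feedback corrections, while respecting the weight-one residue convention at exRec boundaries. A secondary obstacle is making the convergence of $\{D_5^{(k)}\}$ explicit enough to justify the numerical fixed points $D_5^* = 1827.1$ and $A_5^* = 979.7$; the shared-EC collapse argument yields geometric contraction, but pinning down the exact limit requires a careful recursive accounting of how many malignant configurations are eliminated at each level by shared-EC overlaps.
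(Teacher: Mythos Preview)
Your high-level plan (malignant-pair enumeration at level $1$, then level reduction via the badness recursion) matches the paper, but you misidentify the mechanism that makes the coefficients $A_5^{(k)}, D_5^{(k)}$ level-dependent. You attribute it to the independence/shared-EC clause in Definition~\ref{badness}; the paper's coefficients are \emph{not} driven by that. What actually changes with $k$ is the vector of relative error rates $\sigma_i^{(k)} = \varepsilon_i^{(k)}/\varepsilon_5^{(k)}$ across the seven location types: at level $1$ these are the Knill conventions $\sigma_i=4/15$, $4/5$, etc., but at level $2$ the ``physical'' locations are themselves $1$-exRecs of different sizes, so the ratios shift. The malignant-pair matrix $\alpha_5$ is fixed by the circuit structure; what varies is the weighted sum $\sum_{i,j}\alpha_5(i,j)\sigma_i^{(k-1)}\sigma_j^{(k-1)}$. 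The paper tracks upper and lower envelopes $\sigma_{U,i}^{(k)}, \sigma_{L,i}^{(k)}$ for all seven types simultaneously via a coupled discrete dynamical system, and obtains $A_5^*, D_5^*$ as components of a stable fixed point found by iteration and verified via the Jacobian. Your ``shared-EC collapse'' story does not produce this system and would not yield the stated numerical limits.

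Two further omissions: the paper's upper bound at each level uses Lemma~\ref{third_order_thm} to control the $\mathcal{O}(\epsilon^3)$ contribution without double-counting malignant pairs (this is what tightens $D_5^{(1)}$ to $2127.4$ rather than a crude $\binom{\gamma_5}{2}$), and it divides by the ancilla-acceptance probability $\mathbb{P}(|\overline{0}\rangle\text{ accepted})^{-8}$ via Bayes' rule, which contributes the factor $(1-C\epsilon)^{-8}$ you have not accounted for. Finally, the paper extracts the threshold simply as $\epsilon_0 = 1/\max_k D_5^{(k)} = 1/2129.4$, not via your summed-log criterion; your formula is arguably sharper but is not what produces the quoted $4.70\times 10^{-4}$.
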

The formal proof of this theorem is provided in Appendix~\ref{thres_thm_proof} and the proof of the auxiliary lemma is provided in Appendix~\ref{third-order}.
\begin{proof}[Proof sketch]
\hangindent = 2em
\hangafter = 1
To set the stage, we will first introduce the concept of \textit{malignant pair matrix (MPM)}, denoted $\alpha$. It is a $7\times7$ real symmetric matrix where the rows and columns correspond to the 7 types of locations. The entry $\alpha(i,j)$ denotes the number of malignant pairs caused by one location of type $i$ and another of type $j$. The MPMs associated with different gadgets/exRecs are provided in Appendix~\ref{MPM}. Additionally, we denote the vector representing the number of different locations by $\mathbf{n}$.\\
\hspace*{1em}To establish bounds for the logical error rate at level $k$, we begin by deriving bounds for the logical error rates of 1-exRecs of all locations. These bounds serve as the foundation for determining error rates at higher levels through recursive simulation. This initial step is crucial, particularly when aiming for rigorous lower bounds and tighter upper bounds because the locations are interdependent. For example, CNOT $k$-exRec uses $(k-1)$-exRecs of preparation/measurement/single-gate/CNOT. Thus, the relative proportions of error rates at $k=1$ do not hold at $k=2$. \\
\hspace*{1em}Next, we use combinatorial methods to obtain the bounds. Second-order terms (in $\epsilon$) can be computed from MPMs. For third-order terms and beyond, we will apply a lemma, which prevents double-counting the cases already accounted for by the second-order term. Since this lemma will be applied in various derivations of the main text, we explicitly state it here:
\begin{lemma}
\hangindent = 2em
\hangafter = 0
    Given malignant pair matrix $\alpha\in\mathbb{R}^{n\times n}$, $\mathcal{O}(\epsilon^3)$ that potentially causes a logical error can be upper bounded by $F\epsilon^3$ where
    \begin{equation*}
        F=F(\mathbf{n},
\mathbf{\sigma},\alpha
)=\sum_{s=1}^n f_{ss}+\sum_{\{s,t\}\in\binom{n}{2}}f_{st}
    \end{equation*}
    for $f_{ss}=\binom{n_s}{3}\sigma_s^3-\frac{1}{3}(n_s-2)\sigma_s\alpha_{ss}$ and $f_{st}=\binom{n_s}{2}\cdot n_t\cdot\sigma_s^2\sigma_t+\binom{n_t}{2}\cdot n_s\cdot\sigma_s\sigma_t^2-\frac{1}{3}(n_s-1)\sigma_s\alpha_{st}-\frac{1}{3}(n_t-1)\sigma_t\alpha_{st}$.
    \label{third_order_thm}
\end{lemma}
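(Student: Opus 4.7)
The plan is to bound the $\mathcal{O}(\epsilon^3)$ contribution to the logical error probability by enumerating all triples of faulty locations that could cause an error, and then correcting for those triples whose contribution is already absorbed into the second-order (malignant-pair) bound. Concretely, I would start from the union-bound decomposition
\begin{equation*}
P(\text{error}) \leq P(\text{some malignant pair both fail}) + P(\text{error from a pattern containing no malignant pair}),
\end{equation*}
and focus on the second summand, which is the genuine new third-order contribution. Any fault pattern that causes an error while containing no malignant pair must involve at least three failed locations, so bounding this term reduces to a weighted count of triples of fault locations classified by their type profile.

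Next I would partition the triples by how many distinct types appear. A same-type triple of type $s$ contributes a weighted count of $\binom{n_s}{3}\sigma_s^3$, giving the leading summand of $f_{ss}$. A triple containing two locations of type $s$ and one of type $t \neq s$ contributes $\binom{n_s}{2} n_t \sigma_s^2 \sigma_t$, yielding the first summand of $f_{st}$; swapping the roles of $s$ and $t$ gives the symmetric summand $\binom{n_t}{2} n_s \sigma_s \sigma_t^2$. Independence of faults in the noise model is what legitimizes the product form $\sigma_i \sigma_j \sigma_k \epsilon^3$, and the sum over type profiles provides a loose upper bound on the total weighted triple count.

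Finally I would subtract a lower bound on the weighted number of triples that already contain a malignant pair, since such triples are absorbed into the $\epsilon^2$ bound and should not be recounted. For an $(s,s)$ malignant pair, appending any of the $n_s - 2$ remaining type-$s$ locations produces the bookkeeping factor $(n_s - 2)\sigma_s \alpha_{ss}$; the analogous construction for $(s,t)$ malignant pairs contributes $(n_s - 1)\sigma_s \alpha_{st}$ and its $s \leftrightarrow t$ partner. A single triple can appear up to three times in this tally, because each triple contains exactly three pairs, so dividing by three converts the raw extension count into a valid lower bound on the number of overcounted triples and produces the $\tfrac{1}{3}$ prefactors in both $f_{ss}$ and $f_{st}$. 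The main obstacle I expect is cleanly justifying this $\tfrac{1}{3}$ prefactor — i.e., checking that the direction of the inequality is preserved after subtraction so that the result remains an upper bound — together with verifying that three-distinct-type triples and the partially overlapping malignant pairs are either dominated by or explicitly absorbed within the stated $F \epsilon^3$ estimate.
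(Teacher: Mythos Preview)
Your proposal is correct and reaches the same formula, but by a genuinely different route than the paper. The paper recasts the count of ``triples containing no malignant pair'' as the expected number of triangles in a random graph whose edges are Bernoulli indicators $A_{ij}=\mathbbm{1}[\{i,j\}\text{ not malignant}]$, then bounds $\mathbb{E}[A_{ij}A_{jk}A_{ki}]$ via a three-variable H\"older inequality followed by AM--GM on $\sqrt[3]{(1-a_{ij})(1-a_{jk})(1-a_{ki})}\le 1-\tfrac{1}{3}(a_{ij}+a_{jk}+a_{ki})$; the $\tfrac{1}{3}$ in $f_{ss}$ and $f_{st}$ emerges from AM--GM rather than from any multiplicity argument. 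Your approach is more elementary: enumerate all type-profiled triples, then subtract a lower bound on those containing a malignant pair, with the $\tfrac{1}{3}$ coming from the observation that each triple has at most three pairs. Both arguments are valid and yield identical expressions; yours is more transparent combinatorially, while the paper's H\"older/AM--GM machinery makes it clearer that each step is an inequality in the right direction. On the three-distinct-type triples you flag as an obstacle: the paper does not handle these explicitly either, proving only the two-type case and then writing ``applying the lemma inductively would give the desired result,'' so your concern is shared with the source and would need the same additional work to close fully.
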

Another subtlety to be taken care of is that EC and preparation gadgets contain ancilla verification. Thus we need to apply Bayes' rule to account for these. In the end, we hope to obtain bounds on the logical error rates of the form
\begin{equation*}
    A_i^{(k)}\left(\epsilon_i^{(k-1)} \right)^2\leq\epsilon_i^{(k)}\leq D_i^{(k)}\left(\epsilon_i^{(k-1)} \right)^2
\end{equation*}
where $\epsilon_i^{(k)}$ denotes the logical error rate of $Loc_i$ $k$-exRec, $A_i^{(k)},D_i^{(k)}$ are constants. As we generalize to level-$k$, these constants form a discrete-variable dynamical system in $k$. As we have the initial point $k=1$, we can use the fixed-point iteration method to find the non-trivial fixed point and compute the Jacobian to verify its stability. With the numerics, we can find $A_i^{(k)}, D_i^{(k)}$ $\forall i,k$. The recursive relation will give the desired result. For the threshold value, we identify CNOT-exRec as the largest exRec. Thus we may simply take $1/\max_kD_5^{(k)}$ as the threshold value, although the actual threshold will be higher than this. 
\end{proof}
In the rest of the paper, for each $k$, we denote the lower and upper bound by $\mu_5^{(k)}$ and $\nu_5^{(k)}$ respectively. Similarly we can obtain $\mu_i^{(k)}$ and $\nu_i^{(k)}$ for other locations given the corresponding $A_i^{(k)}$ and $D_i^{(k)}$. Up to now, we have obtained bounds on the logical error rates of the exRecs and a threshold on fault-tolerant computation locally.

\section{Direct Encoding}\label{Dir_Enc}
In this section, we discuss about the commonly conceived way of preparing $\overline{\text{ebit}}^{(k)}$. We will outline the implementation under the fault-tolerant construction and subsequently provide an analysis of the logical error rate for $\overline{\text{ebit}}^{(k)}$ prepared.
\begin{figure}[H]
    \centering
    \begin{subfigure}[b]{0.45\textwidth}
        \centering
        \includegraphics[width=\textwidth]{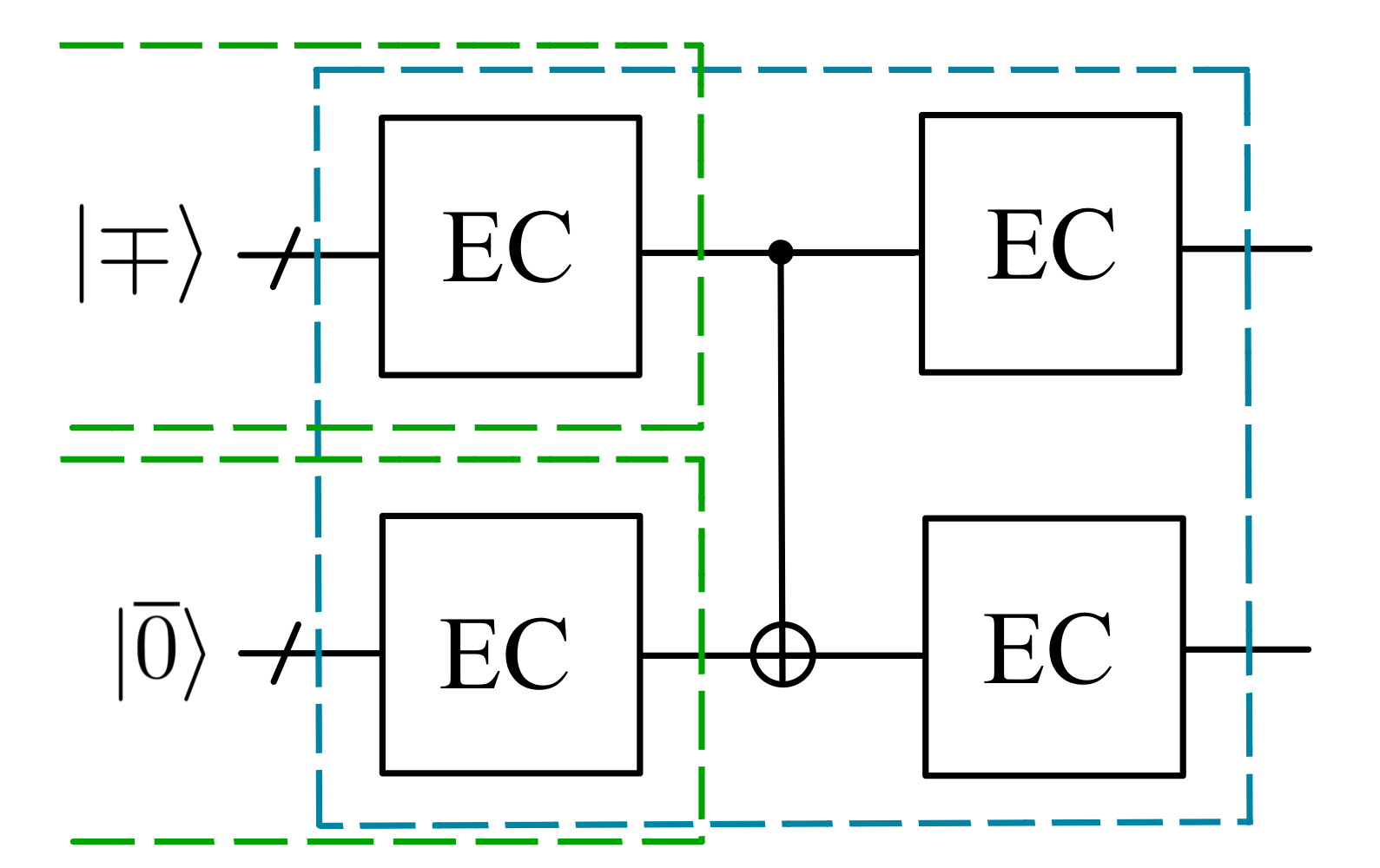}
        \subcaption{}
        \label{DEscheme1}
    \end{subfigure}
    \hfill
    \begin{subfigure}[b]{0.45\textwidth}
        \centering
    \begin{tikzpicture}
    \node[scale=1.2]
    {
    \begin{quantikz}
      \qw  &\ctrl{1} & \qw & \qw & \gate{Z} & \qw \\
     \makeebit[angle=-40,label style=blue]{}  &\targ{} &\meter{Z} & \cwbend{2} \\
     \qw & \ctrl{1}  &\meter{X} & \setwiretype{c}& \cwbend{-2} \\
      & \targ{} & & \gate{X} & &
    \end{quantikz}
    }; 
    \end{tikzpicture}
    \subcaption{}
    \label{gate_tele}
    \end{subfigure}
    \caption{(a)\textit{Direct Encoding}: Alice and Bob locally prepare $|\overline{+}\rangle$ and $|\overline{0}\rangle$ respectively. They then prepare $\overline{\text{ebit}}$ via a non-local logical CNOT. The green boxes represent preparation-exRec while the blue box is a CNOT-exRec. (b)The circuit for performing a CNOT with one ebit and local operations. The first and last qubits are the independent qubits we hope to perform CNOT on with the first being control and the last being target. In the middle two sides share an ebit.}
    \label{DE_schemes}
\end{figure}
In this case, Alice locally prepares a $|\overline{+}\rangle$ and Bob prepares $|\overline{0}\rangle$. They then jointly perform a logical-CNOT, giving $\overline{\text{ebit}}$. The procedure is illustrated in Figure~\ref{DEscheme1}. However, there is an extra layer of complication here. Based on our assumption that the only entangling resources shared by two parties are noisy ebits, i.e. Alice and Bob cannot directly apply CNOT between their qubits. To tackle this issue, we observe that by utilizing a single ebit and gate teleportation, we can implement a CNOT gate between two independent qubits. Such an efficient circuit is shown in Figure~\ref{gate_tele} (Zhou et al.~\cite{Zhou_2000}, also experimentally demonstrated by Chou et al.~\cite{Chou_2018}). In fact, Alice and Bob can also start with $7^k$ physical ebits and locally measure the stabilizers to prepare $\overline{\text{ebit}}^{(k)}$. However, to ensure fault-tolerance, multiple rounds of stabilizer measurements are required. Due to the complexity of the circuit, our analysis will focus solely on the previous scheme. 
\subsection{Logical error rate}
The error rate of the nonlocal CNOT circuit in Figure~\ref{gate_tele} will be $\epsilon_6$, which depends on $q$, the infidelity of ebit and $\epsilon$. To guarantee that the above construction works, we require $\epsilon_6\leq\epsilon_0$. Concerning the fidelity of the EPR pair, if we start with an initial value of $3q=10\%$, we observe $I(q,0)=8.44\times10^{-3}$ and $I(I(q,0)/3,0)=4.86\times10^{-5}\leq\epsilon_0$. Since $I(q,\epsilon)> I(q,0)\geq\epsilon_0$, at least two iterations of initial EPP are required to sufficiently reduce the infidelity to a level comparable to the threshold. Subsequent iterations of EPP are not expected to significantly decrease the infidelity as it's lower bounded by $I(0,\epsilon)$. As we need $\epsilon_6(I(I(q,\epsilon)/3,\epsilon),\epsilon)\leq\epsilon_0$, from simulation we obtain an upper bound on $\epsilon$ being $\epsilon\leq\epsilon_0'=2.25\times10^{-4}$. Hence, to ensure the efficacy of \textit{Direct Encoding}, it is imperative to reduce the threshold, and thus the physical error rate. For higher-level simulation, when $\epsilon=\epsilon_0',\epsilon_6=\epsilon_0$, we will run the system of equations with $\epsilon_6$ included, such that $\sigma_6=2.09$ and $\epsilon_0^{(0)}=\epsilon_0'$. The threshold equation is modified as $\epsilon_0^{(k+1)}=\min_{i\leq (k+1) } \left\{1/D_5^{(i)}, 1/D_6^{(i)}\right\}$.
\\

Now, to analyze the bounds for $\mathbb{P}(\overline{\text{ebit}}^{(k)}\text{ bad})$, we will employ the bounds on exRecs outlined in the preceding section. Instead of directly utilizing the bounds on $\varepsilon_1^{(k)}, \varepsilon_2^{(k)},$ and $\varepsilon_5^{(k)}$, we choose to analyze with level-$(k-1)$ gadgets as a detour. As we will see later, this is necessary to make a fair comparison with the \textit{Interface+EPP} methods and also to obtain tighter bounds. The total number of locations in the circuit is $\gamma_{\text{EPR}} = 319$. It is noteworthy that, in addition to the standard stabilizers of the Steane code,  the encoded EPRs are stabilized by $\{\overline{XX}, \overline{ZZ}\}$. For this reason we will obtain the MPM for $\overline{\text{ebit}}^{(k)}$ and bound the logical error rate with level-$(k-1)$ simulation. Definition \ref{badness} justifies the feasibility of the following calculations. In summary, we have, for level-$k$ encoding,
\begin{align*}
    &\sum_{j\leq i=1}^6\alpha_{\text{EPR}}(i,j)\mu_i^{(k-1)}\mu_j^{(k-1)}\left(1-\nu_5^{(k-1)}\right)^{\gamma_{\text{EPR}}-9}\dots\\
    &\dots\left(1-\nu_6^{(k-1)}\right)^{7}\leq \mathbb{P}(\overline{\text{ebit}}^{(k)}\text{ bad})\\
    &\leq\sum_{j\leq i=1}^6\alpha_{\text{EPR}}(i,j)\nu_i^{(k-1)}\nu_j^{(k-1)}+\dots\\
    &\dots\overline{F}(\vec{n}_{\text{EPR}},\vec{\sigma}_L^{(k)},\vec{\sigma}_U^{(k)},\alpha_{\text{EPR}})\left(\nu_5^{(k-1)}\right)^3
\end{align*}
where $\alpha_{\text{EPR}}$ is as in Appendix~\ref{MPM}. For $k=1$, if $\epsilon,\epsilon_6$ are as above, then
\begin{align*}
    &2197.6\epsilon^2(1-\epsilon)^{310}(1-\epsilon_6)^7\leq\mathbb{P}(\overline{\text{ebit}}^{(1)}\text{ bad})\\
    \leq &2197.6\epsilon^2+1.355\times10^6\epsilon^3
\end{align*}
For $k\geq2$, To put the bounds in a simpler form, we note that $\{\vec{\sigma}_L^{(k)}\}, \{\vec{\sigma}_U^{(k)}\}$ are bounded sequences, so for the lower and upper bounds we may take the infimum and supremum respectively. The detailed data is left in Appendix \ref{sys_eqn}. For the upper bound, if we denote $\sigma_{U,i}^{\sup}=\sup_k\sigma_{U,i}^{(k)}$ and $\sigma_{L,i}^{\inf}=\inf_k\sigma_{L,i}^{(k)}$,
\begin{align*}
    &\mathbb{P}(\overline{\text{ebit}}^{(k)}\text{ bad})\\
    \leq&\sum_{i,j}\alpha_{\text{EPR}}(i,j)\sigma_{U,i}^{\sup}\sigma_{U,j}^{\sup}\left(\nu_5^{(k-1)} \right)^2\\
    &+\overline{F}(\mathbf{n}_{\text{EPR}},\underline{\sigma}_L^{\inf},\underline{\sigma}_U^{\sup},\alpha_{\text{EPR}})\left(\nu_5^{(k-1)} \right)^3\\
    \leq& 1757.3\left(\nu_5^{(k-1)} \right)^2+1.143\times10^{6}\left(\nu_5^{(k-1)} \right)^3\\
    \leq& 1880.5\left(\nu_5^{(k-1)} \right)^2\\
    \leq& 0.883\epsilon_0\left(\frac{\epsilon}{\epsilon_0} \right)^{2^{k}}
\end{align*}
where the last two inequalities follow from the fact that $\epsilon\leq\epsilon_0'<\epsilon_0/2$ and given this, when $k\geq2$, $\nu_5^{(k)}\leq\epsilon_0(\epsilon/\epsilon_0)^{2^k}$ is a good upper bound. Similarly for the lower bound,
\begin{align*}
        &\mathbb{P}(\overline{\text{ebit}}^{(k)}\text{ bad})\\
    \geq&\sum_{i,j}\alpha_{\text{EPR}}(i,j)\sigma_{L,i}^{\inf}\sigma_{L,j}^{\inf}\left(\mu_5^{(k-1)} \right)^2-317\nu_5^{(k-1)}-7\nu_6^{(k-1)})\\
    \geq&1087.2\left(\mu_5^{(k-1)} \right)^2\\
    \geq&1.02\mu_0\left(\frac{\epsilon}{\mu_0} \right)^{2^k}
\end{align*}
where $\mu_0=1/1061.0\approx9.43\times10^{-4}$.

\section{Interface + EPP}\label{IEPP}
\subsection{Overview}\label{IEPPSchemes}
We will first give an overview of the \textit{Interface+EPP} method. We start with Alice and Bob sharing several noisy ebits. They then use an interface to encode the information in their physical qubits to the logical level. However, any such interface cannot be made fault-tolerant since they are encoding (locally) unknown states. Therefore they perform EPP to filter out the `bad' $\overline{\text{ebit}}$. The logical EPP scheme used in our work is illustrated below.
\begin{figure}[H]
    \centering
    \includegraphics[width=0.6\linewidth]{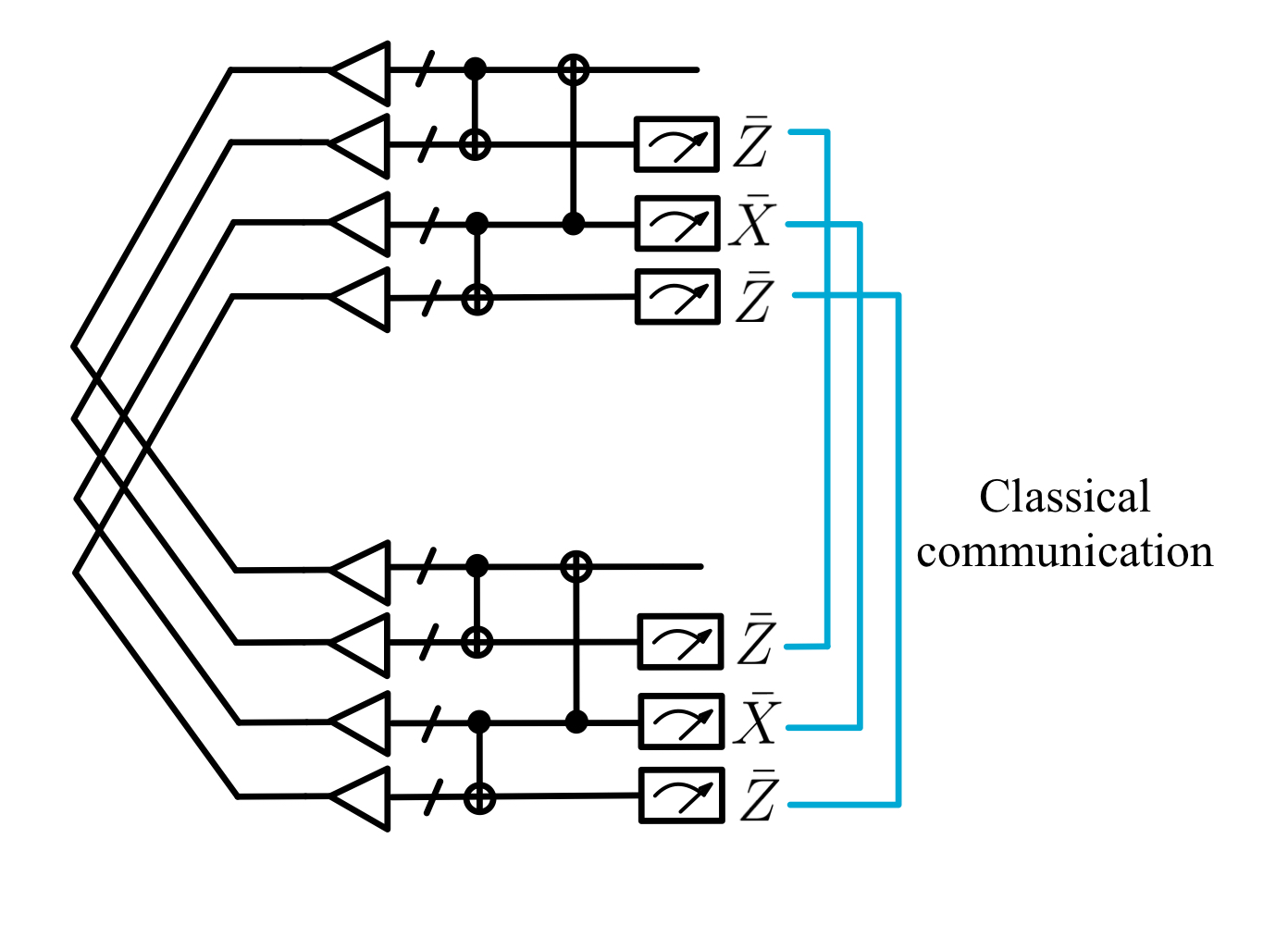}
    \caption{A broad overview of the Interface+EPP method. On the left-hand side, the curved lines represent ebits. The triangles are the interfaces. This is followed by $\overline{\text{EPP}}$ and classical communication. }
    \label{Interface+EPP}
\end{figure}
Alice and Bob commence with 4 physical ebits. They then employ interfaces to encode them into $\overline{\text{ebit}}^{(k)}$. Subsequently, they individually perform logical CNOT operations on the four logical qubits they possess. Finally, they conduct destructive measurements on the last three logical qubits, applying measurements in the $\overline{Z}$, $\overline{X}$, and $\overline{Z}$ bases, respectively. Upon obtaining classical outputs, they post-process (decode) the results and compare them through error-free classical communication.

In the remainder of the section, we will first introduce the specific construction of the interface in this paper. We then establish that the failure probability of the EPP using the interface remains bounded, independent of the parameter $k$. Following this, we analyze the logical error rate of the $\overline{\text{ebit}}^{(k)}$ given acceptance of $\overline{\text{EPP}}$. Regarding the physical error rate, the local threshold value depends solely on the local CNOT-exRec, which follows the previously described construction, with the same threshold $\epsilon_0$ applicable. We note that the nonlocal resources, in this case, are bare ebits(potentially after initial rounds of EPP), thus $\epsilon_6=3q$, the infidelity of the initial ebits. Unlike in Direct Encoding, there's no need to convert them to CNOT gates here. Besides, from the circuit construction, it's evident that $\epsilon_6$ doesn't affect the local threshold. 

However, as demonstrated later, the final logical error rate will have a dependence on $q$, thus we will also establish a lower bound on the threshold value for $q$. The number of initial EPP as in Figure \ref{expedient} needed will also depend on this threshold. To achieve exponential suppression, we explore two schemes.
\begin{figure}[H]
    \centering
    \begin{subfigure}[b]{0.7\textwidth}
        \centering
        \includegraphics[width=\linewidth]{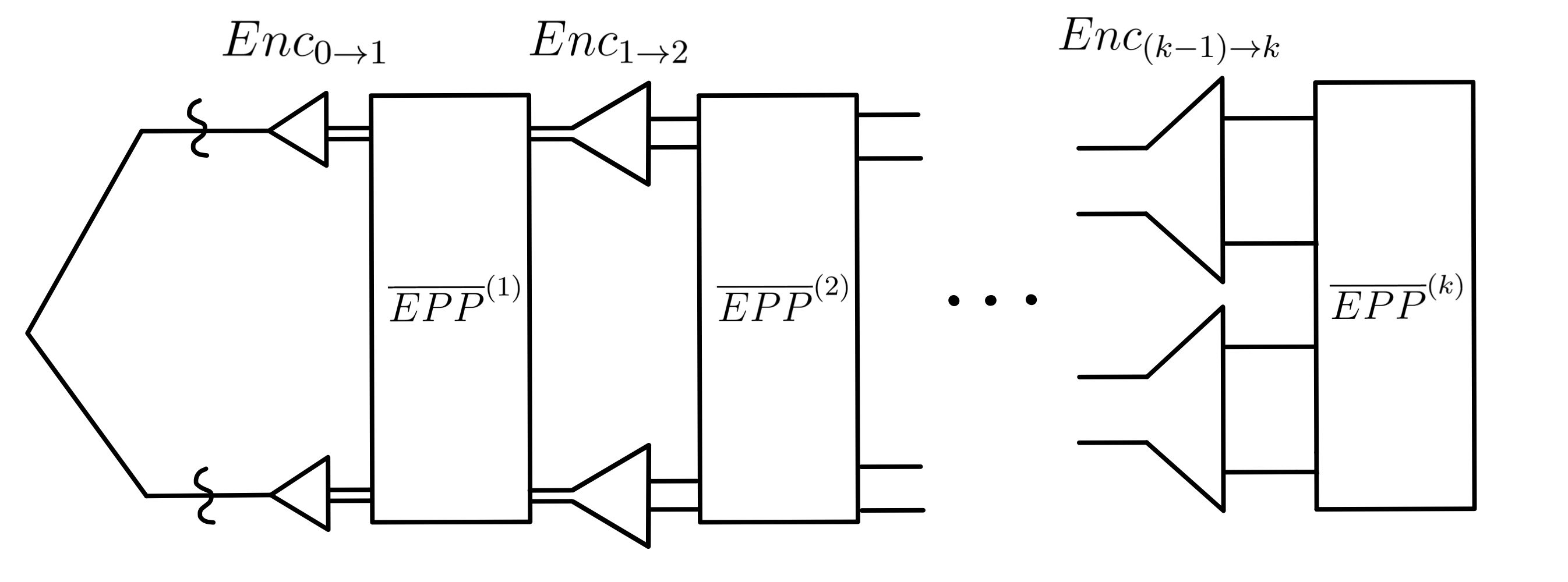}
        \subcaption{}
        \label{IFEPP_1}
    \end{subfigure}
    \vfill
    \begin{subfigure}[b]{0.6\textwidth}
        \centering
        \includegraphics[width=\linewidth]{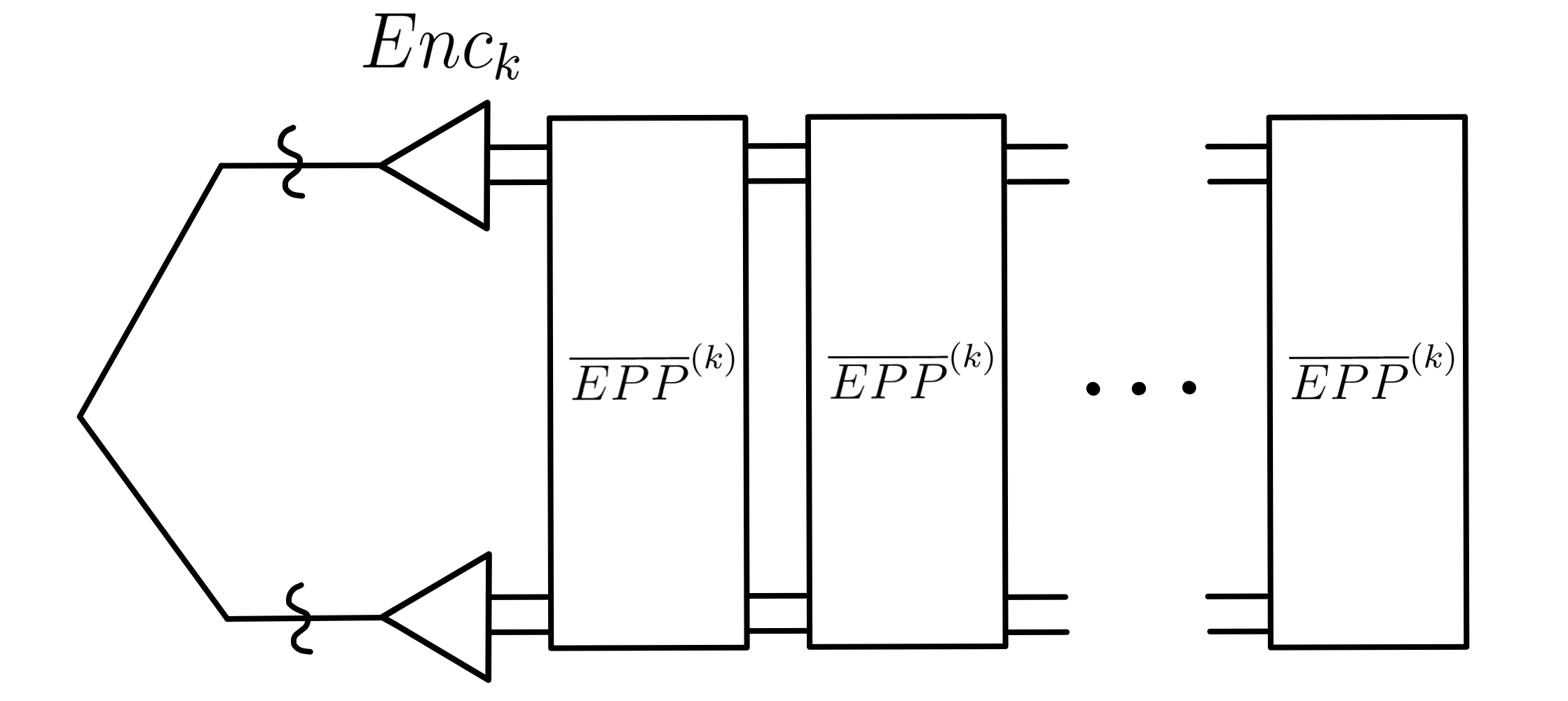}
        \subcaption{}
        \label{IFEPP_2}
    \end{subfigure}
    \caption{(a) Scheme $A$. Encoding is done sequentially, level by level, each followed by a logical EPP with 3 other similarly prepared $\overline{\text{ebit}}$. (b) Scheme $B$. State is encoded to the $k$-th level in a single interface, followed by rounds of $\overline{\text{EPP}}^{(k)}$ to reduce the logical error rate.}
    \label{IFEPP}
\end{figure}
In Scheme $A$, for the ebits Alice and Bob share, they encode locally to level-$k$ through the interface first. They then conduct $\overline{\text{EPP}}^{(k)}$ repeatedly with similarly prepared $\overline{\text{ebit}}^{(k)}$.  In Scheme $B$, Alice and Bob share 4 physical EPR pairs, encode to $k=1$, and perform $\overline{\text{EPP}}^{(1)}$ as above. They then encode to $k=2$ and do $\overline{\text{EPP}}^{(2)}$ with other 3 similarly prepared $\overline{\text{ebit}}^{(2)}$ etc. Through these two procedures, the logical error rate will decrease as $\mathcal{O}\left((\epsilon_6)^{2^l} \right)$ where $l$ is the number of EPPs performed. We will provide a more accurate analysis on logical error rates in the following sections to enable comparison with \textit{Direct Encoding}.

\subsection{Interface}\label{theInterface}
We now formally introduce the \textit{Interface}. To motivate the construction, we recall that in Section~\ref{exRecthres} we mentioned fault-tolerant encoding of a known state, in that case, $|0\rangle$. Here as we hope to minimize the number of initial ebits, it would be feasible for Alice and Bob to share a single ebit and locally encode their qubit into the logical space. In this case, locally they are encoding an unknown state, and the encoding gadget is called an \textit{interface}. Below, we present one construction, which is a modified circuit based on Christandl et al.~\cite{christandl2022faulttolerant}.
\begin{figure}[H]
    \centering
    \begin{tikzpicture}
    \node[scale=0.6]
    {
    \begin{quantikz}
     \lstick{\ket{\psi}} & \qw & \qw & \qw & \qw &\qw & \qw &\qw &\qw &\qw &\qw &\qw  & \ctrl{2} & \qw & \meter{} & \cwbend{2}\\
    \lstick{\ket{0}} &\qw & \ctrl{1}\vqw{1} & \qw &\qw &\qw &\qw &\qw &\qw &\qw &\ctrl{1} & \targ{} & \qw & \meter{Z}\\
     \lstick{\ket{+}} & \qw &\targ{} & \ctrl{1} & \ctrl{2} &\ctrl{3} &\ctrl{4} &\ctrl{5} & \ctrl{6} & \ctrl{7} &\targ{} &\ctrl{-1} &\targ{} & \qw & \meter{} & \cwbend{2}\\
    \lstick[7]{\ket{\overline{0}}} & \qw & \qw & \targ{} & \qw &\qw&\qw &\qw &\qw &\qw &\qw &\qw &\qw &\qw &\qw &\gate[7][1cm]{P}& \qw & \gate[7][3cm]{EC} &\qw\\
      & \qw & \qw & \qw & \targ{} & \qw &\qw &\qw & \qw &\qw & \qw & \qw& \qw& \qw & \qw & \qw &\qw &\qw&\qw\\
      & \qw &\qw &\qw &\qw &\targ{} &\qw &\qw &\qw &\qw & \qw & \qw& \qw& \qw & \qw & \qw &\qw &\qw&\qw\\
      & \qw &\qw & \qw &\qw &\qw &\targ{} &\qw &\qw &\qw &\qw  &\qw &\qw &\qw &\qw &\qw &\qw &\qw&\qw\\
      & \qw &\qw & \qw &\qw &\qw &\qw &\targ{} &\qw &\qw &\qw &\qw &\qw &\qw &\qw &\qw &\qw &\qw&\qw\\
      & \qw &\qw & \qw &\qw &\qw &\qw &\qw &\targ{} &\qw &\qw&\qw &\qw &\qw &\qw &\qw &\qw&\qw&\qw\\
     & \qw &\qw & \qw &\qw &\qw &\qw &\qw &\qw &\targ{} &\qw &\qw &\qw &\qw &\qw &\qw &\qw &\qw &\qw
    \end{quantikz}
    };
    \end{tikzpicture}
    \caption{Interface $\text{Enc}_{0\rightarrow1}$}
    \label{Inter_cir}
\end{figure}
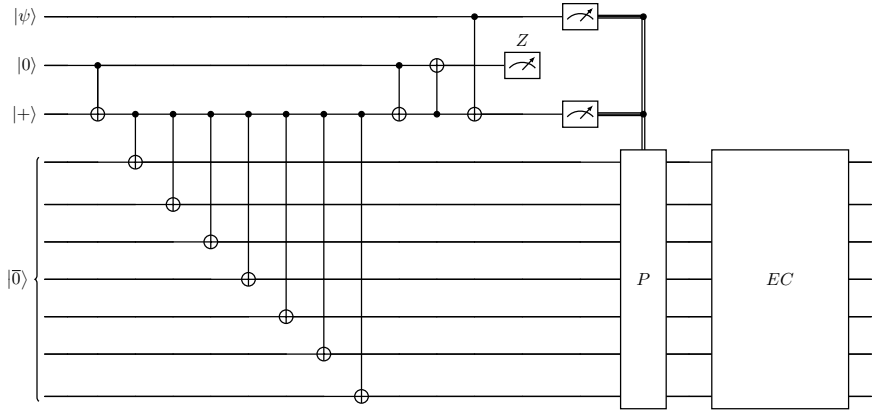
The circuit is essentially teleportation of a physical state into the logical space. We denote the interface from level-0 to level-1 (1 qubit $\rightarrow$ 7 qubits in $[7,1,3]$) by $\text{Enc}_{0\rightarrow1}$. If we hope to encode the state in 1 qubit into $[7^k,1,3^k]$, we use the interface $\text{Enc}_k=\text{Enc}_{(k-1)\rightarrow k}\circ\dots\circ\text{Enc}_{1\rightarrow2}\circ\text{Enc}_{0\rightarrow1}$ where each $\text{Enc}_{(i-1)\rightarrow i}$ has exactly the same structure as $\text{Enc}_{0\rightarrow1}$ but with each location replaced with the corresponding $(i-1)$-exRec. Note that we modified the original circuit by using an ancillary qubit to verify the validity of the prepared $|\Omega\rangle=\frac{1}{\sqrt{2}}(|0\rangle\otimes |\Bar{0}\rangle+|1\rangle\otimes|\Bar{1}\rangle)$. Through recursive simulation, we denote $|\Omega\rangle^{(k)}=\frac{1}{\sqrt{2}}(|\overline{0}^{(k-1)}\rangle\otimes|\overline{0}^{(k)}\rangle+|\overline{1}^{(k-1)}\rangle\otimes|\overline{1}^{(k)}\rangle)$. In the noisy circuit scenario, the recovery operation $P$ in Figure~\ref{Inter_cir} is followed by an identity gate on each qubit as the noise. We will investigate some of the properties of this interface. We will state a result similar to Lemma III.6 in Christandl et al.~\cite{christandl2022faulttolerant} but with a tighter upper bound.
\begin{lemma}[\text{Error probability of Enc$_l$}]\label{interface_prob}
\begin{align*}
    \mathbb{P}(\text{Enc}_l \text{ bad})\leq\exp\left(\beta_0\cdot C_1\left(\epsilon+\rho_1(\epsilon) \right)\right)\left(\beta_1\epsilon+\beta_2\epsilon^2+\beta_3\rho_1(\epsilon)+\beta_4\rho_1(\epsilon)^2 \right)
\end{align*}
where $C_1=18.1$, $\underline{\beta}=\begin{pmatrix}1.0043&2.8&628.5&2.43&521.3\end{pmatrix}$ and $\rho_1(\epsilon)=\sum_{k=1}^{\infty}\nu_5^{(k)}$.
\label{EpEnc}
\end{lemma}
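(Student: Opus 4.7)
The plan is to prove the bound inductively on the encoding level $l$, by first handling the base case $\text{Enc}_{0\rightarrow 1}$ directly and then exploiting the recursive structure $\text{Enc}_l = \text{Enc}_{(l-1)\rightarrow l}\circ\cdots\circ\text{Enc}_{0\rightarrow 1}$ to propagate the bound upward, with one level of the recursion absorbing the contribution of the encoded exRecs into an effective error rate $\nu_5^{(i-1)}$.

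First I would analyze $\text{Enc}_{0\rightarrow 1}$ directly. Let $C_1$ denote the total effective location count in this circuit (weighted by the relative error-rate factors $\sigma_i$). Unlike a fault-tolerant gadget, the interface has \emph{genuine single-fault malignant locations}: faults occurring after the ancilla verification block can propagate into multiple errors on the output data block that the subsequent EC gadget cannot correct. I would enumerate these single-fault malignant locations to extract the linear coefficient $\beta_1$, which is close to $1$ reflecting the fact that only a small number of locations are dangerous in this sense. Then I would build the malignant pair matrix of $\text{Enc}_{0\rightarrow 1}$ in the same spirit as the analysis in Theorem 1, yielding the second-order coefficient $\beta_2$. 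Contributions of order $\epsilon^3$ and beyond in a single level are absorbed via Lemma \ref{third_order_thm}.

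Next, for $\text{Enc}_l$ with $l\geq 2$, each elementary location inside $\text{Enc}_{(i-1)\rightarrow i}$ is replaced by an $(i-1)$-exRec. By Theorem 1, such an $(i-1)$-exRec has logical error probability at most $\nu_5^{(i-1)}$, with analogous and smaller bounds for the other location types. Therefore the enumeration done at level one applies verbatim at level $i$ with $\epsilon$ replaced by $\nu_5^{(i-1)}$, giving a contribution of order $\beta_1'\nu_5^{(i-1)} + \beta_2'(\nu_5^{(i-1)})^2$ at each higher level. Summing these contributions across $i=2,\dots,l$ via the union bound, and using $\sum_{k\geq 1}\nu_5^{(k)} \leq \rho_1(\epsilon)$ together with $\sum_{k\geq 1}(\nu_5^{(k)})^2 \leq \rho_1(\epsilon)^2$, collapses the upper-level contributions into $\beta_3\rho_1(\epsilon) + \beta_4\rho_1(\epsilon)^2$. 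Together with the base case this produces the polynomial factor $\beta_1\epsilon + \beta_2\epsilon^2 + \beta_3\rho_1(\epsilon) + \beta_4\rho_1(\epsilon)^2$.

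The exponential prefactor $\exp(\beta_0 C_1(\epsilon+\rho_1(\epsilon)))$ arises from factoring out, at each stage, the probability that no fault occurs at the remaining locations when conditioning on a specific malignant fault pattern. Using the standard inequality $(1-p)^{-n} \leq \exp(np/(1-p))$, applied with an effective location count $C_1$ per level and combined effective error rate $\epsilon + \rho_1(\epsilon)$ after telescoping across all $l$ levels, yields precisely this exponential factor, with $\beta_0$ capturing the $1/(1-p)$-type correction. The main obstacle will be the rigorous identification of the single-fault malignant locations in $\text{Enc}_{0\rightarrow 1}$: because the interface encodes an unknown physical state, the usual fault-tolerance criteria in Definition \ref{FTcriteria} do not apply, and one must trace circuit-level error propagation carefully to decide which single faults the downstream EC can still correct and which it cannot. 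A secondary technical point is ensuring the recursive bound does not blow up in $l$; this requires $\epsilon$ to lie below the threshold of Theorem 1 so that $\nu_5^{(k)}$ decays doubly exponentially in $k$ and $\rho_1(\epsilon)$ is both finite and small.
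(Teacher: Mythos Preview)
Your overall skeleton---enumerate single-fault malignant locations for the first-order term, use the malignant pair matrix for the second-order term, replace $\epsilon$ by $\nu_5^{(i-1)}$ at level $i$, and collapse the sum over levels into $\rho_1(\epsilon)$---matches the paper. But you have misidentified the origin of the exponential prefactor, and with it the meaning of $C_1$.

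In the paper the quantity being bounded is the \emph{conditional} probability $\mathbb{P}(\text{Enc}_l\text{ bad}\mid H)$, where $H$ is the event that every ancillary $|\Omega\rangle^{(k)}$ passes its verification step (the modified interface in Figure~\ref{Inter_cir} has an explicit verification ancilla). The joint probability $\mathbb{P}(\text{Enc}_l\text{ bad}\wedge H)$ is what the polynomial $\beta_1\epsilon+\beta_2\epsilon^2+\beta_3\rho_1(\epsilon)+\beta_4\rho_1(\epsilon)^2$ bounds; the exponential factor is precisely $\mathbb{P}(H)^{-1}$. Here $C_1=18.1$ is not a total effective location count but the number of locations whose fault causes the $|\Omega\rangle$ verification to reject, so that $\mathbb{P}(H)\geq\prod_{k=0}^{l-1}(1-C_1\nu_5^{(k)})$. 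The exponential then comes from bounding $\prod_k(1-C_1\nu_5^{(k)})^{-1}$ via $\log(1-x)\geq -x/\sqrt{1-x}$ and $\nu_5^{(k)}\leq\epsilon_0$, yielding the constant $\beta_0=1.0043$. Your proposed mechanism---factoring out the probability of no fault at the remaining locations---would produce a $(1-\epsilon)^{-N}$ correction with $N$ the full circuit size, which is a different (and much larger) constant and does not explain the structure $C_1(\epsilon+\rho_1(\epsilon))$.

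A minor numerical point: $\beta_1=2.8$ comes from the weighted count $[0,1,1,1,2,0,0]$ of single-fault malignant locations in the teleportation block, so it is not ``close to $1$''; the distinct coefficient $\beta_3=2.43$ at higher levels reflects the changed $\sigma_i^{(k)}$ ratios there, which your sketch does not separate out.
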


We will use $f_{in}(\epsilon)$ to denote this upper bound in later context. From the bound, we observe that it's independent of $k$. In particular, when $\epsilon=\epsilon_0$, $\mathbb{P}(\text{Enc}_l \text{ bad})\approx4.40\times10^{-3}$, the detailed behaviour will be illustrated in Appendix \ref{error_type}. We note that, for $\epsilon\leq\epsilon_0$, we can actually upper bound $\rho_1(\epsilon)$ with a geometric series, which evaluates to a constant independent of $\epsilon$, which in turn gives a linear upper bound. However, in the pursuit of tight bounds, we will refrain from further elaboration here. Additionally, it's worth noting that based on the aforementioned proof, the upper limit of the failure probability for $Enc_{l\rightarrow (l+1)}$ is bounded by $f_{in}^{(l+1)}(\epsilon)=521.4\nu_5^{(l)}(1-C\nu_5^{(l)})$, which will prove to be useful in subsequent derivations.

\subsection{EPP rejection probability}
We will first give bounds on the failure probability of EPP, part of the results in the proofs are helpful for later derivation. We will state the upper and lower bounds for the above building block below and the proofs will be left in Appendix \ref{EPP_failure}. Thus we have the following
\begin{theorem}[Upper bound for $\mathbb{P}(\text{EPR rejected})$]
   Let Enc$_k$ be the $k$-th level interface circuit from Figure~\ref{Inter_cir}, then $\forall k\geq1$,
   \begin{equation}
       \mathbb{P}(\overline{\text{ebit}}^{(k)}\text{ rejected})\leq 4\epsilon_6+12\epsilon+8f_{in}(\epsilon)
   \end{equation}
   \label{ub_on_InEPP_A}
\end{theorem}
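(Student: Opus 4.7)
The plan is to treat the EPP as a distributed error-detection circuit and upper-bound the rejection probability by a union bound over all fault sources in the pipeline. I would identify three types of sources: (i) the 4 initial physical ebits, each in a non-$|\Phi^+\rangle$ Bell state with probability at most $\epsilon_6$; (ii) the 8 interfaces (one per ebit per side), each failing with probability at most $f_{in}(\epsilon)$ by Lemma~\ref{interface_prob}; and (iii) the 12 logical $k$-exRecs in $\overline{\text{EPP}}^{(k)}$, consisting of 3 logical CNOTs plus 3 destructive logical measurements on each of Alice's and Bob's sides, each with logical failure probability at most $\nu_5^{(k)}$. Summing yields a candidate bound $4\epsilon_6 + 8 f_{in}(\epsilon) + 12 \nu_5^{(k)}$.

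The crucial step is to justify that this union bound actually dominates the rejection probability, which reduces to showing that in the absence of any fault the EPP accepts with certainty. This is a stabilizer-flow calculation: the 4 logical Bell pairs $|\overline{\Phi^+}\rangle^{(k) \otimes 4}$ have an explicit stabilizer group, and propagating these stabilizers through the local $\overline{\text{CNOT}}$ pattern produces, for each of the three measurement rounds in bases $\overline{Z}\overline{Z}$, $\overline{X}\overline{X}$, $\overline{Z}\overline{Z}$, a joint stabilizer whose eigenvalue is read out simultaneously by Alice and Bob, forcing their outcomes to agree. Hence a rejection event is contained in the union of the fault events above, and the union bound is valid.

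Finally, I would upper bound $\nu_5^{(k)}$ by $\epsilon$ under the standing assumption $\epsilon \leq \epsilon_0$: induction on the recursion from Section~\ref{exRecthres}, using $\epsilon_0 \leq 1/D_5^{(k)}$ and the base case $\nu_5^{(1)} \leq D_5^{(1)}\epsilon^2 \leq \epsilon$, gives $\nu_5^{(k)} \leq \epsilon$ for all $k \geq 1$. Substituting yields the desired $4\epsilon_6 + 12\epsilon + 8 f_{in}(\epsilon)$. The main obstacle is the deterministic-acceptance step: while stabilizer propagation is conceptually routine, one must pin down the exact $\overline{\text{CNOT}}$ layout used in $\overline{\text{EPP}}^{(k)}$ and verify that each of the three parity comparisons corresponds to a genuine joint stabilizer post-propagation rather than to an operator with random eigenvalue on the input state; the remaining counting and the bound on $\nu_5^{(k)}$ are comparatively routine.
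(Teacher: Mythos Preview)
Your approach is essentially the same as the paper's: decompose the rejection event into badness of the four prepared $\overline{\text{ebit}}$'s (your (i)+(ii), the paper's $\mathcal{G}_1$) and badness of the $\overline{\text{EPP}}$ exRecs (your (iii), the paper's $\mathcal{G}_2$), then union-bound to get $4\epsilon_6+8f_{in}(\epsilon)+12\nu_5^{(k)}$ and use the below-threshold recursion to replace $\nu_5^{(k)}$ by $\epsilon$. You are more explicit than the paper about two points it passes over quickly---the deterministic-acceptance stabilizer check and the inductive bound $\nu_5^{(k)}\leq\epsilon$---but the structure and the arithmetic are the same.
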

and the lower bound,
\begin{corollary}[Lower bound for $\mathbb{P}(\text{EPR rejected})$]
   Let Enc$_k$ be the $k$-th level interface circuit from Figure~\ref{Inter_cir}, then for $k\geq2$,
   \begin{equation*}
       \mathbb{P}(\overline{\text{ebit}}^{(k)}\text{ rejected})\geq 22.4\epsilon+4\epsilon_6-\sum_{i=0}^2u_i\epsilon^i\epsilon_6^{2-i}
   \end{equation*}
   where $\underline{\mathbf{u}}=\begin{pmatrix}16 & 4404 & 6067\end{pmatrix}$
   up to $\mathcal{O}(\epsilon^3)$.
   \label{lb_on_InEPP_A}
\end{corollary}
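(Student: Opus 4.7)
The plan is to derive this lower bound by applying the Bonferroni inequality
$$\mathbb{P}\Bigl(\bigcup_i E_i\Bigr) \;\geq\; \sum_i \mathbb{P}(E_i) \;-\; \sum_{i<j} \mathbb{P}(E_i \cap E_j)$$
to a carefully chosen collection of elementary events $\{E_i\}$ whose single occurrence is already sufficient to trigger rejection in $\overline{\text{EPP}}^{(k)}$. I would split the $E_i$ into two families: (i) four ``initial ebit bad'' events, one per input ebit, each realized when the ebit lies in a Bell state other than $|\Phi^+\rangle$; by the Werner-state assumption each has probability exactly $\epsilon_6$, and the resulting Bell-state deviation is detected as a measurement disagreement after the local logical CNOTs and $\overline{Z}$/$\overline{X}$ measurements. (ii) A collection of ``interface fault'' events, one per distinguished physical fault location inside each of the eight copies of $\text{Enc}_k$ (four per party) whose occurrence deterministically produces a bad $|\Omega\rangle^{(k)}$ that propagates to a measurement disagreement.

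Crucially, because $k\geq 2$, every other component of the $\overline{\text{EPP}}^{(k)}$ circuit, namely the logical CNOT-exRecs and logical measurement-exRecs, is fault-tolerant at level $k-1\geq 1$; any single physical fault in them yields a logical error only at order $\nu_5^{(k-1)}=\mathcal{O}(\epsilon^2)$ and therefore contributes only to the subtractive corrections rather than the linear term. The interfaces are the only non-fault-tolerant pieces, so the entire linear-in-$\epsilon$ contribution must come from family (ii). Reusing the combinatorial accounting that produces Lemma~\ref{EpEnc}, the leading-order coefficient $\beta_1=2.8$ in $f_{in}(\epsilon)$ is precisely the weighted count of those single fault locations in one $\text{Enc}_k$ that deterministically spoil the encoded state; multiplying by the eight interfaces and adding the four ebit events yields the leading $22.4\epsilon + 4\epsilon_6$.

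Next I would upper bound the pairwise-intersection sum and split it by event-type pair. Ebit$\times$ebit pairs contribute at most $u_0\epsilon_6^2$ with $u_0=16$ arising from the sixteen ordered-pair combinations of the four ebits (a safe over-count of $\binom{4}{2}$ with bookkeeping for the three residual Bell deviations per ebit); ebit$\times$interface-fault pairs contribute $u_1\epsilon\epsilon_6$ with $u_1=4404$ obtained as $4$ times the $\sigma$-weighted fault-location count across all eight interfaces; interface$\times$interface fault pairs contribute $u_2\epsilon^2$ with $u_2=6067$ obtained as the corresponding $\sigma$-weighted count of ordered pairs of fault locations (both within a single $\text{Enc}_k$ and across the eight of them). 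All triple intersections and higher contribute only to the $\mathcal{O}(\epsilon^3)$ remainder.

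The main obstacle is the case analysis behind family (ii): one must identify which single physical faults in the recursive structure $\text{Enc}_{(k-1)\to k}\circ\cdots\circ\text{Enc}_{0\to 1}$ deterministically produce a logical encoding that will be \emph{caught} by the subsequent $\overline{\text{EPP}}$, as opposed to a fault whose bad encoding still accidentally passes the measurement-agreement check and becomes an undetected logical error in the output $\overline{\text{ebit}}^{(k)}$. Tracking how faults propagate through the ancilla verification, the recovery $P$, and the final error-correction block in $\text{Enc}_k$ is exactly what the proof behind Lemma~\ref{EpEnc} already does for the upper bound, so the work here is to extract from that analysis a matching \emph{lower} count of guaranteed-rejection single faults, then verify it indeed equals $\beta_1$. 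The restriction to $k\geq 2$ reflects the fact that at $k=1$ the logical CNOTs in $\overline{\text{EPP}}^{(1)}$ are bare transversal gates whose single physical faults do contribute at linear order and must be counted separately.
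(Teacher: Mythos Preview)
Your Bonferroni scheme has a genuine logical gap. You define $E_i$ to be ``location $i$ is faulty'' for a collection of locations each of which, \emph{if it were the only fault}, would force rejection. But Bonferroni on $\bigcup_i E_i$ only yields a lower bound on $\mathbb{P}(\text{rejected})$ if $\bigcup_i E_i\subseteq\{\text{rejected}\}$, and that containment fails: when two of your distinguished faults occur simultaneously their logical effects can cancel through the CNOT ladder, so the measurement outcomes agree and EPP accepts. The paper singles out exactly this subtlety in its case~4 (``two faults in two interfaces \dots end up canceling each other''). Your pairwise-intersection subtraction $\sum_{i<j}\mathbb{P}(E_i\cap E_j)$ does not repair this, because it is a correction to $\mathbb{P}(\bigcup_i E_i)$, not to the gap between $\mathbb{P}(\bigcup_i E_i)$ and $\mathbb{P}(\text{rejected})$.

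The paper's route is structurally different. It uses Bonferroni only at the coarse two-event level $\{\mathcal{G}_1,\mathcal{G}_2\}$ (preparation bad versus EPP bad), and then lower-bounds $\mathbb{P}(\mathcal{G}_1)$ not by Bonferroni but by summing over \emph{disjoint exact fault patterns}: each summand carries a product $\prod_t(1-\nu_t^{(i)})^{8n_{in,t}}(1-\epsilon_6)^{\cdots}$ enforcing that all other locations are fault-free, so the pattern is guaranteed to cause rejection. The negative $u_i$ coefficients arise from expanding these products via the Weierstrass inequality, not from pairwise intersections. Moreover, the paper adds several \emph{positive} second-order fault-pattern contributions (malignant pairs inside an interface, a single bad exRec in $\text{Enc}_{1\to 2}$ for $k\geq 2$, one ebit fault plus one interface fault, and the non-cancelling two-interface pairs); these partially offset the product corrections and are what pin down the specific values $16,\,4404,\,6067$. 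Your scheme has no analogue of these additions. Finally, the obstacle you flag in your last paragraph---distinguishing single interface faults that are caught from those that slip through---is not actually the issue here: the feasibility analysis in Section~\ref{logicalerrEPP} already shows a single interface fault is always caught, so every bad single fault contributes to rejection.
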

We will denote the lower and upper bound by $f_1(\epsilon, \sigma_6)$ and $f_2(\epsilon, \sigma_6)$ respectively for convenience.

\subsection{Logical error rate of Interface+EPP}\label{logicalerrEPP}
Having obtained bounds for the failure probability of encoded EPP, we analyze the logical error rate given that the EPR pair is accepted via EPP. The error probability can be analyzed similarly as in the previous section, except now the individual components of the circuit are not all fault-tolerant, and the interface preparation will cause problems. In addition, the fact that EPP rejects some EPR pairs makes the analysis more complex. In particular, we can no longer upper bound $\mathcal{O}(\epsilon^3)$ by simply calculating all combinations, which will result in significant overcounting. Thus, for a more systematic analysis, we will convert the circuit into a network flow problem, which will be discussed in detail for two schemes below.
Note that we will obtain $\mathbb{P}(\text{EPR bad}\wedge\text{EPR accepted})$ first and $\mathbb{P}(\text{EPR bad}\wedge\text{EPR accepted})$ will follows by
\begin{align*}
    &\mathbb{P}(\text{EPR bad}\wedge\text{EPP accepted})\\
    =&\frac{\mathbb{P}(\text{EPR bad}\wedge\text{EPR accepted})}{1-\mathbb{P}(\text{EPR rejected})}
\end{align*}
Before we delve into the analysis, we will first translate the circuit into a network flow graph to help us compute $\mathbb{P}(\text{EPR bad}\wedge\text{EPR accepted})$. The flow represents the propagation of errors. For convenience, we will analyze the $\overline{X}$ and $\overline{Z}$ errors separately. The graphs will help us decide the combinations of failures of exRecs/interfaces that would lead to logical $\overline{X}$-error despite passing the EPP test. 
The graph can be set up as follows: Let $N_{A,X}=(V,E)$ be the network above with $S,T\in V$ being the source and sink of $N_{A,X}$ respectively and $f$ is a function on the edges of $N_{A,X}$, its value on $(x,y)\in E$ is denoted by $f(x,y)$. $c(x,y)$ denotes the capacity of the edge. The propagation of $\overline{X}$-errors is then shown by the directed-graph
\begin{figure}[H]
    \centering
    \includegraphics[width=1.1\linewidth]{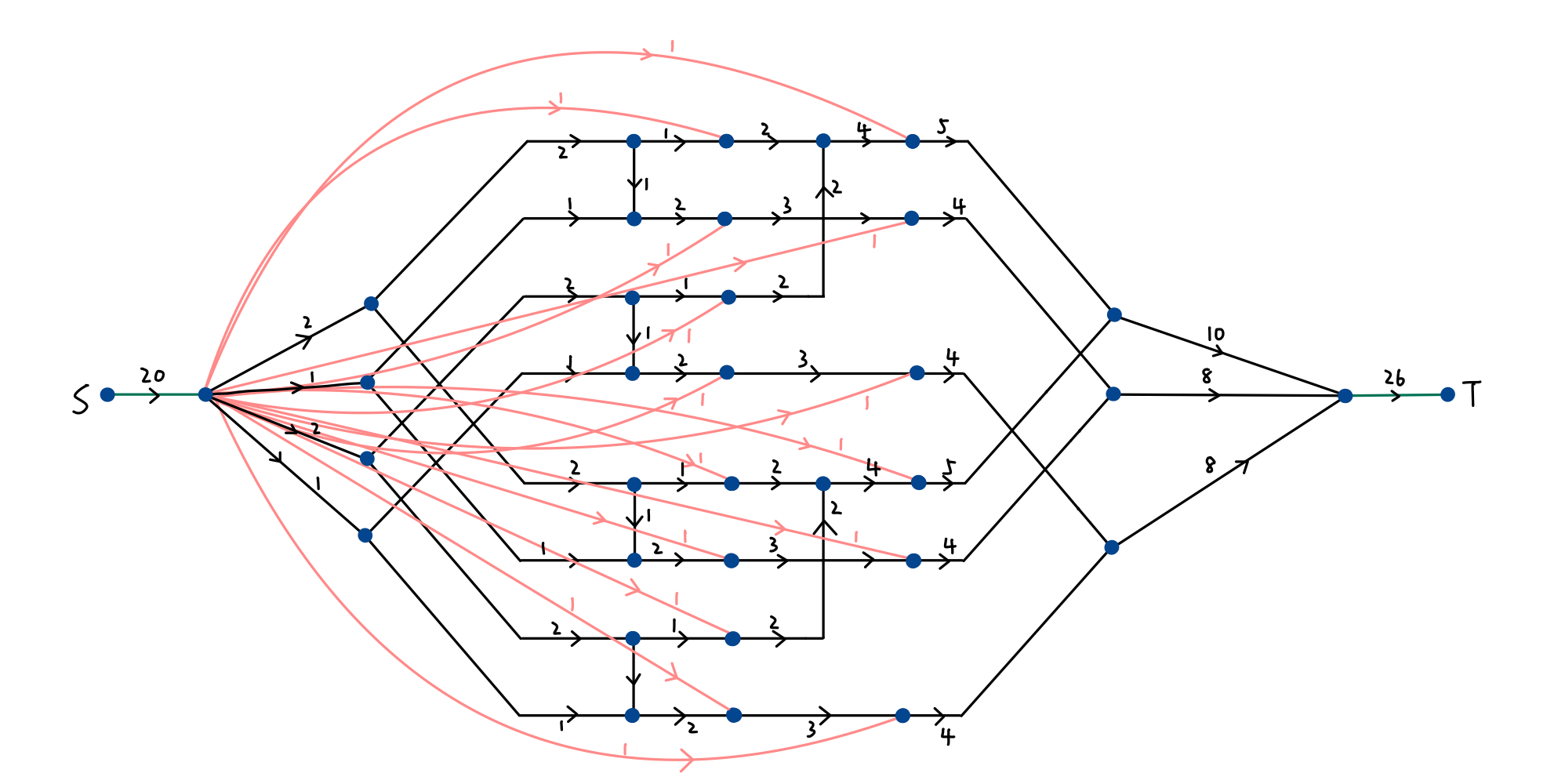}
    \caption{Network for $\overline{X}$-error. The source $S$ and sink $T$ are introduced for completeness. The numbers above the edges indicate the capacities. The pink edges all have capacity 1. }
    \label{Network_A}
\end{figure}
There are a couple of details worth clarifying
\begin{itemize}
    \item The central block is the standard \textit{Interface+EPP} scheme. The third logical EPR pair has no directly-connected $\{D_i\}$ vertex because in the scheme we perform $X$ measurement here and $\overline{X}$-error will not cause an issue.
    \item The parity of flow in the circuit indicates the presence of $\overline{X}$-error. If the flow is even, then the combined $\overline{X}$-errors cancel out, equivalent to no error; otherwise there is $\overline{X}$-error.
    \item Some edges with capacity 1 in the middle of the graph are directly connected with the source e.g. $A_2$ as the CNOT-exRec in the circuit will potentially introduce $\overline{X}$-error. For instance, if $f(S,A_2)=1$ and $f((S,A_6))=0$, this indicates the first CNOT-exRec introduces $\overline{XI}$-error. Besides, $SA_7, SA_{12}, SB_7, SB_{12}$ are introduced to indicate potential errors on mmt-exRecs.
\end{itemize}
Then the combinations of failures that contribute to $\mathbb{P}(\overline{\text{ebit}}\text{ bad}|\overline{\text{EPP}}\text{ accepts})$ can be determined via the following feasibility problem
\begin{align*}
    \min&\space\text{ } 0\\
    \text{s.t. }\space f(x,y)&\leq c(x,y)\text{ }\space\forall(x,y)\in E\\
    \space \sum_{x:(x,y)\in E,f(x,y)>0}f(x,y)&=\sum_{x:(y,x)\in E,f(y,x)>0}f(y,x)\\
    &\forall y\in V\setminus\{S,T\}\\
    \space f(x,y)&\cdot\left(f(x,y)-c(x,y) \right)=0\\
    &\forall (S,y)\in E\\
    f(D_1,T)&=2y_1+1\\
    f(D_2,T)&=2y_2\\
    f(D_3,T)&=2y_3\\
    y_1,y_2,y_3&\in\mathbb{N}\\
    f(x,y)\in\mathbb{N}\space\text{ }&\forall(x,y)\in E
\end{align*}
The first two are standard flow constraints. The third one makes sure either the initial edge has $\overline{X}$-error or error-free. $f(D_2,T)$ and $f(D_3,T)$ need to be even as for acceptance, the measurement results of two parties need to agree. $f(D_1,T)$ odd as for the desired scenario, we need the first $\overline{\text{ebit}}$ to have error. If $f(D_1,T)$ is even, this means the first logical EPR pair either has no error, or has $\overline{XX}$-error. However it is stabilized by $\overline{XX}$, this will not cause any problem. We give one example of the solution in Figure~\ref{Example_network}.
\begin{figure}[H]
    \centering
    \includegraphics[width=0.6\linewidth]
    {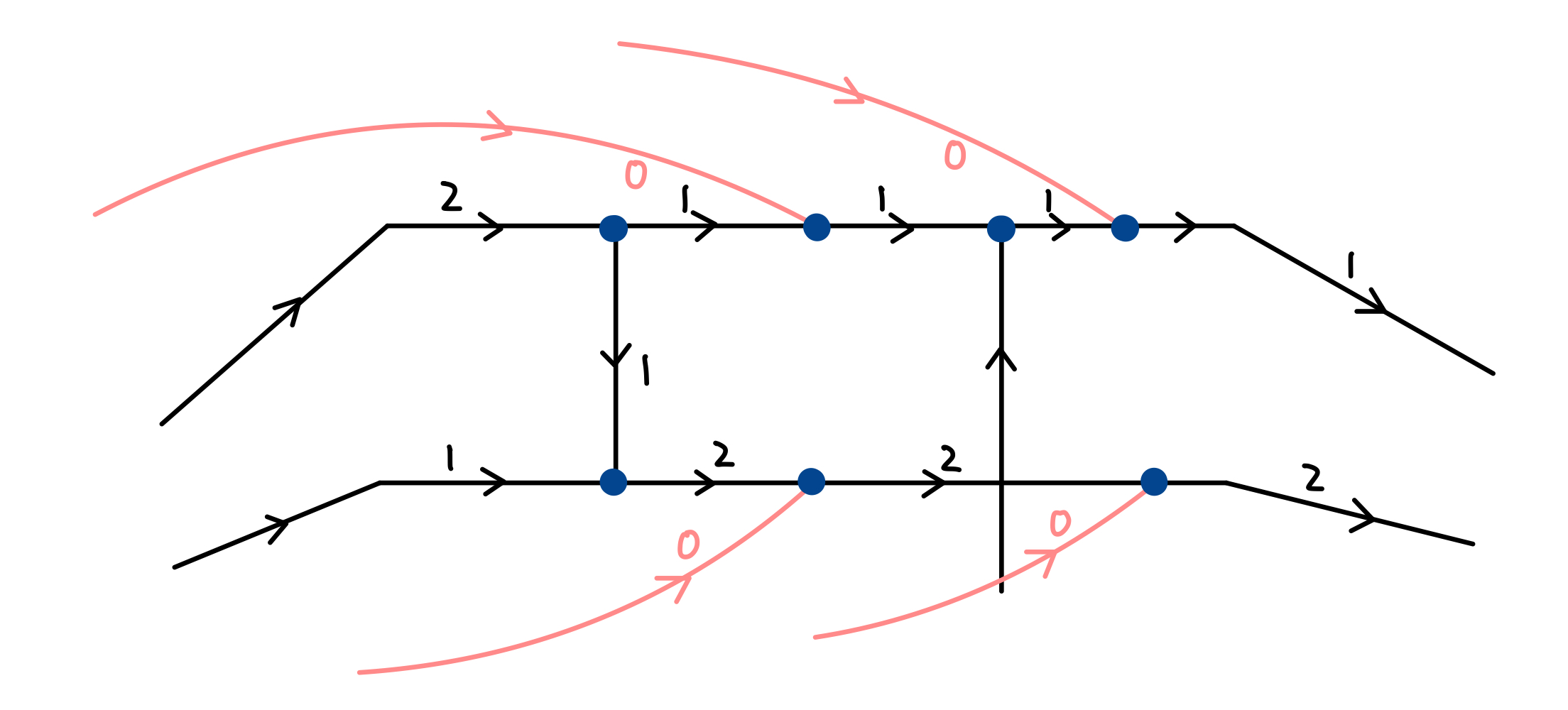}
    \caption{An example solution of the feasibility problem. This indicates an $\overline{X}$-error on the first $\overline{\text{ebit}}$ given $\overline{\text{EPP}}$ accepts. }
    \label{Example_network}
\end{figure}
This shows that on Alice' side, her logical qubits of the first two EPR pairs both contain $\overline{X}$-errors and they end up canceling out on the second data block, thus leading to acceptance while having errors on the first data block. Due to the third constraint, it would be easier to search for solutions exhaustively. \\

Analogously based on the propagation rule of $\overline{Z}$-error, we can generate a network graph for the propagation of $\overline{Z}$-error, the central block on Alice' side is shown in Figure~\ref{Z_prop}. The capacities can be chosen suitably. The details will not be repeated. 
\begin{figure}[H]
    \centering
    \includegraphics[width=0.5\linewidth]{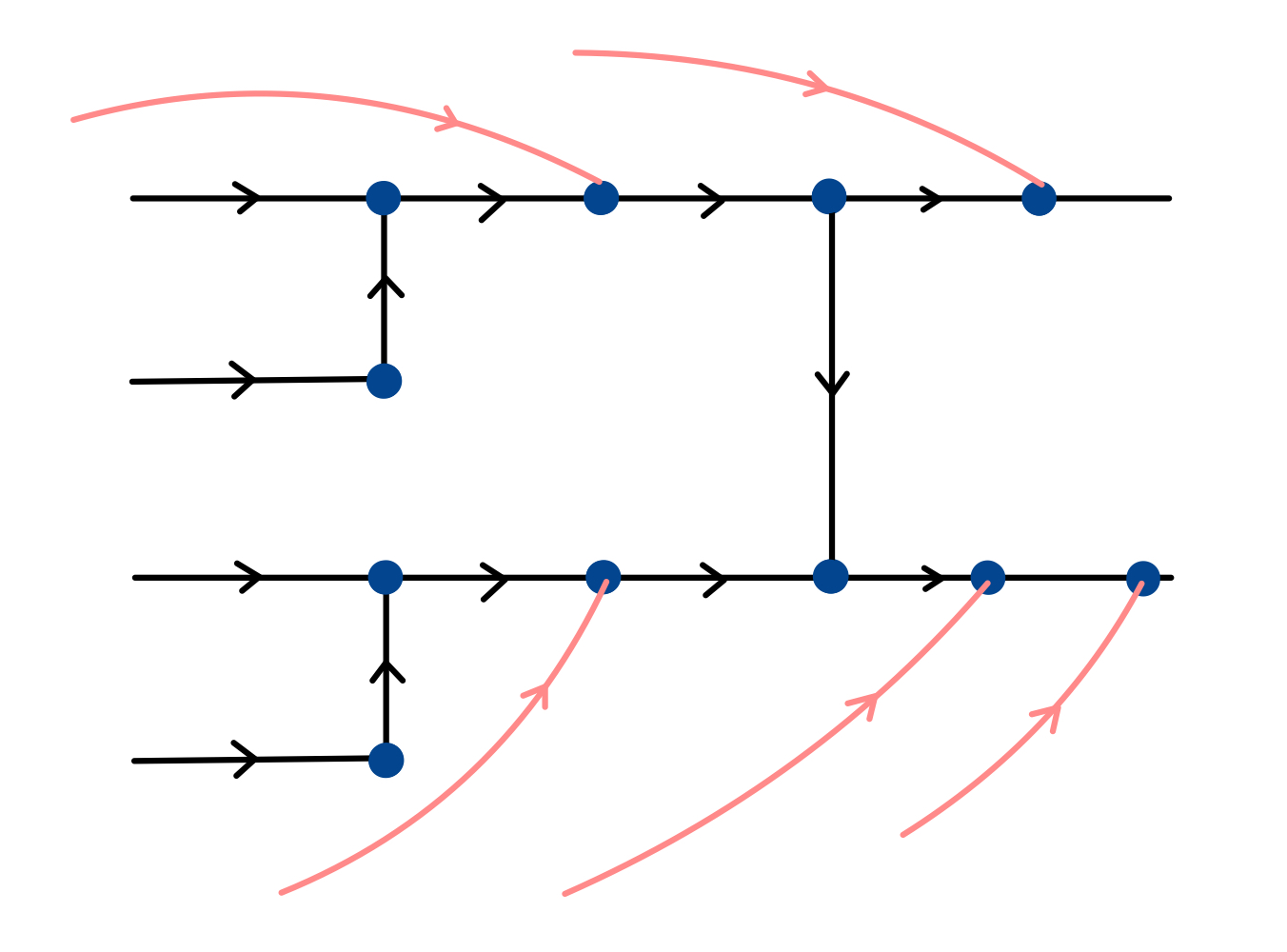}
    \caption{}
    \label{Z_prop}
\end{figure}
Any solution that is common to both types of errors indicates a potential $\overline{Y}$-error. Now we have a way to identify cases that lead to failure, we will match them with error terms of different orders. In the following we will denote the 4 $\overline{\text{ebit}}$s part as $\mathcal{P}_1$ and the $\overline{\text{EPP}}$ part as $\mathcal{P}_2$. Again we will obtain bounds for $k=1$ and then generalize to higher-level encoding. When there is one fault in the circuit, it will not cause a logical error in the EPP circuit according to fault-tolerance conditions. It can only cause logical error when it's present in the EPR preparation, thus one of $SC_1,SC_2,SC_3,SC_4$ is saturated while all other edges from the source are empty. There is no solution in both $\overline{X}$ and $\overline{Z}$ problems. Hence there are no first-order terms (which in turn also verifies our claim that the scheme is fault-tolerant). \\

For second-order terms, we will consider the distribution of the faults. For convenience of discussion, we will first label the $\overline{\text{ebit}}^{(k)}$'s and exRecs as follows,
\begin{figure}[H] 
    \centering
    \includegraphics[width=0.8\linewidth]{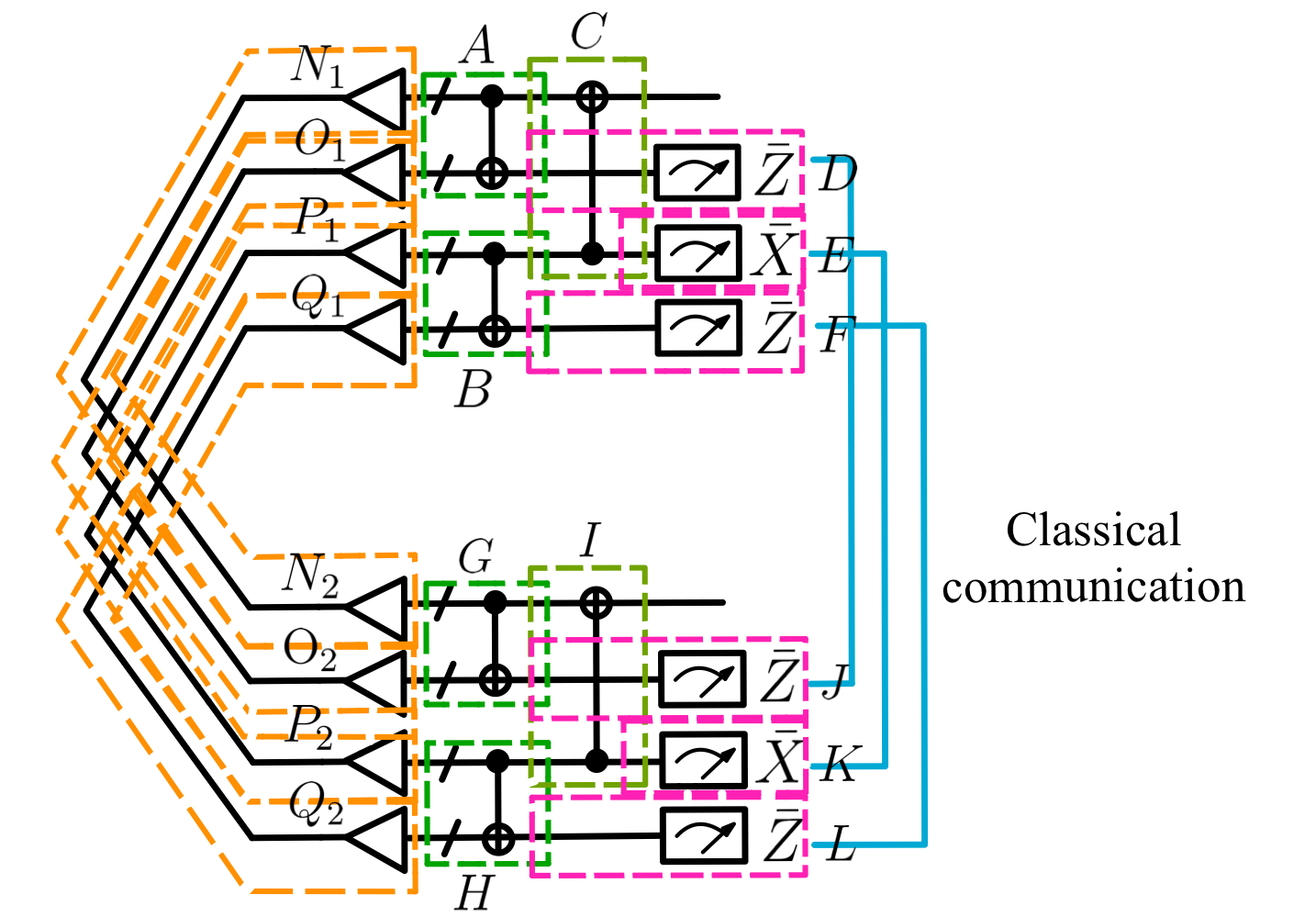}
    \caption{An Interface+EPP procedure broken down into its individual components. The orange parts denote $\overline{\text{ebit}}$s prepared via physical ebits and interfaces. Other parts correspond to exRecs within $\overline{\text{EPP}}$. For example, $A$ denotes the first CNOT-exRec on Alice's side.  }
    \label{IP_exRecs}
\end{figure}
From the different cases and the fact that 2 faults can cause at most 2 logical errors, we can post-process the feasible solutions and we'll present them in a more readable form in terms of the components that are bad. By combining cases that will result in both $\overline{X}$ and $\overline{Z}$ errors, we obtain the following cases
\begin{itemize}
    \item (2I) If they are both in $\mathcal{P}_2$, then they would contribute to the joint probability only when they are in the same exRec and form a malignant pair. From the filtered feasible solutions, we have the cases,
    \begin{itemize}
        \item $\overline{XI},\overline{XZ}$ on $A/B/G/H$
        \item $\overline{IX},\overline{IZ},\overline{IY},\overline{XY},\overline{XZ},\overline{XX}$ on $C/I$
    \end{itemize}
    The errors on the second one are reversed because the CNOT-exRec in the circuit is reversed.
    \item (2II) If they are both in $\mathcal{P}_1$, they must each be in one $\overline{\text{ebit}}$ and cause canceling errors. They must both have the same error type (meaning both are $\overline{X},\overline{Y}$ or $\overline{Z}$). Therefore,
    \begin{itemize}
        \item Both $\overline{X}$: $N_1O_1,N_1O_2,O_1N_2,P_1Q_1$, $P_1Q_2, P_1Q_1,N_2Q_2,P_2Q_2$
        \item Both $\overline{Z}$: $N_1P_1,N_1Q_1,N_1P_2,N_1Q_2$, $O_1P_1,O_1Q_1,O_1P_2,O_1Q_2$
    \end{itemize}
    \item (2III)If one is in $\mathcal{P}_1$ and one in $\mathcal{P}_2$, this is equivalent to the order-one case because according to fault-tolerance properties, the fault in $\mathcal{P}_2$ will result in correct simulation. 
\end{itemize}
Identifying cases that include errors in $\mathcal{P}_1$ is crucial. The reason will be made clear when we generalize to level-$k$. So for $k=1$ we have
\begin{itemize}
    \item (2I) To enumerate the number of such cases, we recall that the number of malignant pairs in a CNOT-exRec is 1656.3 and we then use the probability distribution of errors from Appendix \ref{dist_CNOT} and we count, in total 2587.3 malignant pairs.
    \item (2II) Since both errors are in $\mathcal{P}_1$, we refer to Table \ref{error_type} and count 29.1 pairs that do not contain $Loc_6$, 11.6 pairs that contain one $Loc_6$ and 1.32 pairs that are both $Loc_6$'s. This differentiation is essential, which we will see when we generalize to level-$k$.
\end{itemize}
Now we consider third-order terms. There are again a number of cases and from combining feasible solutions and suitable upper bounds, we summarize the following cases:
\begin{itemize}
    \item (3I) When the three faults are all in $\mathcal{P}_2$, this will only contribute to the desired probability if, they are all in $C/I$; they are in two consecutive exRecs and there is one error in the overlapping EC. So we upper bound the combinations and count 589499.6 such cases for $k=1$.
    \item (3II) One fault in $\mathcal{P}_1$ and two faults in $\mathcal{P}_2$ forming a malignant pair. We count an upper bound of 281160.3 cases not containing $\epsilon_6$ and 40165.8 containing $\epsilon_6$.
    \item (3III) The faults are in $\mathcal{P}_1$, two faults in one $\overline{\text{ebit}}$ and one in another. We count 8591.3 triples excluding $\epsilon_6$, 2645.4 triples including one $\epsilon_6$.
    \item (3IV) The faults are all in $\mathcal{P}_1$, with each in one 'bad' interface respectively. From the feasible solutions, we see there are no combinations that are of the same error type, so cases resulting in $\overline{X}$ and $\overline{Z}$ errors with one overlapping part. This case will be included in the second-order terms.
\end{itemize}
Having analyzed the different cases, we can obtain the following approximate bounds for $\mathbb{P}(\overline{\text{ebit}}^{(1)}\text{ bad})$,
\begin{align*}
    &(2587.3\epsilon^2(1-\epsilon)^{\gamma_1-2}(1-\epsilon_6)^4+11.6\epsilon_6\epsilon(1-\epsilon)^{\gamma_1-1}\dots \\
    &\dots(1-\epsilon_6)^3+1.32\epsilon_6^2(1-\epsilon)^{\gamma_1}(1-\epsilon_6)^2)\left(1-f_1(\epsilon) \right)^{-1}\\
    \leq&\mathbb{P}(\overline{\text{ebit}}^{(1)}\text{ bad})\\
    \leq&(1.32\epsilon_6^2+11.6\epsilon\epsilon_6+2587.9\epsilon^2+879251.2\epsilon^3\\
    &+42811.2\epsilon_6\epsilon^2)(1-f_2(\epsilon))^{-1},
\end{align*}
where $\gamma_1=1788$ is the total number of locations excluding $Loc_6$ at $k=1$. To ensure a fair comparison with \textit{Direct Encoding} later, we will confine $\epsilon\leq\epsilon_0'$ and consequently employ statistics of $\sigma$'s based on this constraint. We will denote the upper bound by $\kappa_2(1,\sigma_6)\epsilon^2$, where $\kappa_2(1,\epsilon_6)$ is obtained from the fact that $\epsilon\leq\epsilon_0'$ here. For the lower bound we will denote as $\sum_{i=0}^2\kappa_{1,i}(1,\sigma_6)\epsilon^2$ where each term is the coefficient for $(\sigma_6)^i\epsilon^2$, the reason for this special treatment will be clear when we generalize to level-$k$. Theorem \ref{ub_on_InEPP_A}(by taking $\epsilon=\epsilon_0'$ we have a supremum for $f_2$).
We will then generalize this to level-$k$. In fact, \textit{Interface+EPP} is guaranteed to function effectively as long as $\epsilon\leq\epsilon_0$, demonstrating one main advantage of this scheme. 
\subsection{\texorpdfstring{Scheme $A$}{Scheme A}}
For scheme $A$, we encode physical qubits to level-$k$ and then perform EPP repeatedly. So we would like to compute the logical error rate of $\overline{\text{ebit}}^{(k)}$ first. We will again consider errors of different order. For the second-order terms, the contribution from (2I) terms are upper bounded by $\sum_{i,j}\alpha(i,j)\sigma_{U,i}^{\sup}\sigma_{U,j}^{\sup}\left(\nu_5^{(k-1)} \right)^2\leq 2084.2\left(\nu_5^{(k-1)} \right)^2\leq2084.2/2.1^{2^k-2}\epsilon^2$. Thus we see that the contribution from $\mathcal{P}_2$ diminishes exponentially as $k$ increases. In fact, when $k\geq5$, this term is negligible. Next, (2II) terms persist. For third-order terms, again (3I) and (3II) are negligible as they give $\mathcal{O}(\epsilon^5)$ contribution; (3III) terms remain. When $k\geq2$, there is an additional case, that is, when there are two faults in one $Enc_{1\rightarrow2}$ and one fault in some other interface that combine and have a canceling effect. From the proof of Lemma \ref{EpEnc}, we can upper bound such cases by $\frac{1}{2}(\sigma_2^{(1)}+2\sigma_3^{(1)}+2)\varepsilon_5^{(1)}\cdot(2.8\cdot64\epsilon+34\epsilon_6)$, from the simulated values of $\sigma_i^{(1)}$ we have $445775.5\epsilon^3+84577.9\epsilon^2\epsilon_6$. 
If we denote the $k$-th level encoded $\overline{\text{ebit}}$ after $l$ rounds of $\overline{\text{EPP}}$ by $\overline{\text{ebit}}_l^{(k)}$, therefore we would have the upper bound $\forall k\geq2$
    \begin{align*}
        \mathbb{P}(\overline{\text{ebit}}_1^{(k)}&\text{ bad})\leq(2084.2/2.1^{2^k-2}\epsilon^2+29.1\epsilon^2+11.6\epsilon\epsilon_6\dots\\
        \dots+1.32\epsilon_6^2&+454366.8\epsilon^3+87223.3\epsilon^2\epsilon_6)(1-f_2(\epsilon))^{-1}\\
            &\leq \kappa_2^A(k,\sigma_6)\epsilon^2.
    \end{align*}
    where $\kappa_2^A(k,\sigma_6)$ is obtained through the same method as above. An example value is $\kappa_2^A(2,1)=644.1$. For the lower bound, we will obtain an approximation. From steps in the analysis of Cor. \ref{lb_on_InEPP_A} we obtain, $\forall k\geq2$,
    \begin{align*}
        &\mathbb{P}(\overline{\text{ebit}}_1^{(k)}\text{ bad})\\
        \geq&29.1\epsilon^2(1-\epsilon_6)^4(1-\epsilon)^{-2}+11.6\epsilon\epsilon_6(1-\epsilon_6)^3(1-\epsilon)^{-1}+\\
        &1.32\epsilon_6^2(1-\epsilon_6)^2)\left(1-8\sum_{t=1}^7n_i\epsilon_i\right)(1-f_1(\epsilon))^{-1}\\
        \geq&\sum_{i=0}^2\kappa_{1,i}^A(k,\sigma_6)\epsilon^2.
    \end{align*}
    Thus we have 
    \begin{equation*}
    \sum_{i=0}^2\kappa_{1,i}^A(k,\sigma_6)\epsilon^2\leq\mathbb{P}(\overline{\text{ebit}}_1^{(k)}\text{ bad})\leq \kappa_2^A(k,\sigma_6)\epsilon^2.
    \end{equation*}
    Now to achieve the same exponential suppression of logical error rate, we would need to perform further rounds of EPP. We may denote the upper bound of the logical error rate after the $l$-th EPP as $g^{(l)}_A(k,\sigma_6)$, where $g^{(1)}_A(k,\sigma_6)=\kappa_2^A(k,\sigma_6)\epsilon^2$. We may first obtain an upper bound on EPP rejection probability. Following a similar argument as in \ref{ub_on_InEPP_A}, 
    \begin{equation*}
        \mathbb{P}(\overline{\text{ebit}}_l^{(k)}\text{ rejected})\leq 4g_A^{(l-1)}(k,\sigma_6)+12\nu_5^{(k)}.
    \end{equation*}
    Thus for the upper bound, we establish the following recursive relationship for $2\leq l\leq k$,
    \begin{align*}
        g^{(l)}_A(k,\sigma_6)&=\Bigg( 6\left( g^{(l-1)}_A(k,\sigma_6)\right)^2\dots\\
        \dots+&2084.2\left(\nu_5^{(k-1)} \right)^2+24g^{(l-1)}_A(k,\sigma_6)\nu_5^{(k-1)}\Bigg)\dots\\
        \dots &\left(1-4g_A^{(l-1)}(k,\sigma_6)-12\nu_5^{(k)} \right)^{-1},
        \tag{$\dagger$}
        \label{dag}
    \end{align*}
    which follows from the union bound and the constants are obtained by counting the feasible solutions. In this case, some feasible solutions are combined. For example, both errors into 0 and 13 in Figure \ref{Network_A} are included by $g^{(l)}_A$. When $k\geq2$, the contribution from the rejection probability term is negligible. If we examine the magnitude of the above terms more closely and denote the second term by $b$, we can establish the following simple lemma,
    \begin{lemma}\label{l_prime}
        $\forall k\geq 2,k\in\mathbb{N}, \forall\epsilon>0$ and $\forall\delta>24b^2,\space \exists l'\in\mathbb{N}$ s.t. when $g_A^{(1)}\leq\frac{1-\sqrt{b}}{6}$, $\forall l\geq l'$, $|g_A^{(l)}-b|<\delta$. 
    \end{lemma}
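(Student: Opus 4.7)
The strategy is to read $(\ref{dag})$ as a perturbation of the scalar fixed-point iteration $\phi(g) = 6g^2 + b$, with $b = 2084.2(\nu_5^{(k-1)})^2$ denoting the second (constant-in-$g$) term of the numerator. The denominator factor and the cross term $24 g \nu_5^{(k-1)}$ are higher order in $\nu_5^{(k-1)}$ and distort the picture only mildly. The plan is to carry out the analysis for $\phi$ in detail, then argue that the perturbations still leave the iterate within the prescribed slack $\delta > 24b^2$ of $b$.

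First I would solve $\phi(g)=g$ exactly, obtaining the two fixed points $g^{\pm} = (1 \pm \sqrt{1-24b})/12$, which are real whenever $b \le 1/24$. Taylor-expanding yields $g^- = b + 6b^2 + O(b^3)$ and $g^+ = 1/6 - b + O(b^2)$, so in particular $|g^- - b| \le 6b^2 + O(b^3)$, comfortably inside the $24b^2$ budget. Next I would verify that the hypothesis $g_A^{(1)} \le (1-\sqrt{b})/6$ places the initial iterate inside the basin of attraction $(0, g^+)$ of $g^-$: the inequality $(1-\sqrt{b})/6 < g^+$ rearranges to $1 - 2\sqrt{b} < \sqrt{1-24b}$, which on squaring gives $28b < 4\sqrt{b}$ and hence holds for all $b < 1/49$, a range automatically covered in the $k \ge 2$ regime by the threshold.

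The third step is a standard monotone-convergence argument. Since $\phi$ is strictly increasing on $[0, g^+]$ and satisfies $\phi(g) > g$ on $(0, g^-)$ and $\phi(g) < g$ on $(g^-, g^+)$, any orbit launched in $(0, g^+)$ is monotone, stays trapped in $(0, g^+)$, and converges to $g^-$. A concrete rate is provided by the local contraction factor $\phi'(g^-) = 12 g^- = 1 - \sqrt{1-24b} \approx 12b$, which is small for small $b$, so convergence is very rapid. Given any $\delta > 24b^2$, choose $l'$ large enough that $|g_A^{(l)} - g^-| < \delta - 6b^2$ for all $l \ge l'$; the triangle inequality then yields $|g_A^{(l)} - b| \le |g_A^{(l)} - g^-| + |g^- - b| < \delta$.

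The main obstacle is controlling the perturbation from $\phi$ to the true map in $(\ref{dag})$. The cross term $24g\nu_5^{(k-1)}$ and the denominator $(1-4g-12\nu_5^{(k)})^{-1}$ together shift the attracting fixed point by $O(\nu_5^{(k-1)} g^-)$; after substituting $\nu_5^{(k-1)} = \sqrt{b/2084.2}$ this is of order $b^{3/2}$, which must be shown not to exceed the remaining $18b^2$ of slack in the regime the lemma addresses. The cleanest resolution is to redefine $g^-$ as the unique attracting fixed point of the full recursion, whose existence and closeness to $b$ follow from the implicit function theorem applied in the parameter $\nu_5^{(k-1)}$ (possible because the $k\ge 2$ threshold forces $\nu_5^{(k-1)}$ small), and then to rerun the monotonicity-and-trapping argument verbatim on the perturbed map; the basin inequality, trapping interval, and convergence rate all carry over with only cosmetic changes.
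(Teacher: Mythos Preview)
Your approach is essentially the paper's: both argue monotone convergence of the quadratic recursion to its attracting fixed point and then estimate that fixed point's distance from $b$. The paper's organization is slightly more direct: rather than analyzing $\phi(g)=6g^2+b$ and treating the cross term as a perturbation afterward, it absorbs the cross term from the outset, rewriting the recursion as $g_A^{(l)} = 6(g_A^{(l-1)})^2 + 0.53\sqrt{b}\,g_A^{(l-1)} + b$ (using $24\nu_5^{(k-1)} = 24\sqrt{b/2084.2}\approx 0.53\sqrt{b}$ and dropping the denominator as higher order). Solving $\phi(g)<g$ for this single map then yields both the basin bound $(1-\sqrt{b})/6$ on $g_A^{(1)}$ and the fixed-point estimate in one stroke, so your last-paragraph implicit-function-theorem patch becomes unnecessary. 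One caveat you correctly flag and the paper glosses over: the cross term shifts the attracting fixed point by roughly $0.53\,b^{3/2}$, which for sufficiently small $b$ exceeds $24b^2$; neither your resolution nor the paper's one-line ``upper-bounding the solution'' actually closes this, so the constant $24$ in the statement should be read as adequate in the parameter regime used downstream rather than for arbitrarily small $b$.
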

    \begin{proof}
        The above recursive relation can be written as $g_A^{(l)}=6\left(g_A^{(l-1)} \right)^2+0.53\sqrt b g_A^{(l-1)}+b$. By solving $g_A^{(l)}<g_A^{(l-1)}$ and suitably lower-bounding the larger solution we obtain the upper bound on $g_A^{(1)}$. This implies that under this condition the series is decreasing, it's also bounded below by $b$, thus it converges. Since our recursion function is continuous, by Banach fixed-point theorem the fixed point is the converging limit. Hence, by solving the quadratic equation and upper-bounding the solution we obtain a bound on allowed $\delta$. $l'$ is chosen to be the minimal that satisfies the conditions.
    \end{proof}
    In the following context, we shall use $l'$ to denote the constant that satisfies Lemma~\ref{l_prime}. This Lemma indicates that when we perform sufficiently many EPP, $g^{(l)}_A$ will be dominated by the $\left(\nu_5^{(k-1)} \right)^2$ term and no further rounds of EPP will noticeably reduce the logical error rate. This aligns with the observation in the non-encoded case, that the effectiveness of EPP is limited by the error rate of local operations. In the later section, $l'_A$ will be determined numerically given $k,\epsilon$ and $\sigma_6$. This lemma greatly simplifies the upper bound as some terms dominate others. So we have
    \begin{equation*}
        g^{(l)}_A(k,\sigma_6)=\begin{cases}
        \frac{1}{6}\left(\sqrt{6\kappa_2^A}\epsilon\right)^{2^l},\space 2\leq l<l'_A\\
        2084.2\left(\nu_5^{(k-1)} \right)^2,\space l> l'_A
        \end{cases}.
    \end{equation*}
    Given the same threshold value $\epsilon_0'$, solving $\sqrt{6\kappa_2^A}=1/\epsilon_0'$ numerically would give a sufficient bound for $\epsilon_6$(That is, when $\sigma_6$ is below this bound, the logical error rate converges as $k$ increases). For $k\geq3$, $\sigma_6\leq813$, so $\epsilon_6\leq0.183$. And this surely satisfies the bound in the lemma. Thus theoretically if we start with noisy EPR pairs with infidelity of less than $18.3\%$ this protocol will work well. But in practice, if we hope to minimize the use of nonlocal resources, it might be better off to do some initial purifications. This consideration will be accounted for in more detail in the later section where we analyze resources. \\

Now for the lower bound, we will first consider the derivation of a lower bound for the combined circuit when we have two sequential block components under our construction, with known bounds of the logical error rate of each component. An illustration is shown in Figure~\ref{overlapEC}
\begin{figure}[H]
    \centering
    \includegraphics[width=0.5\linewidth]{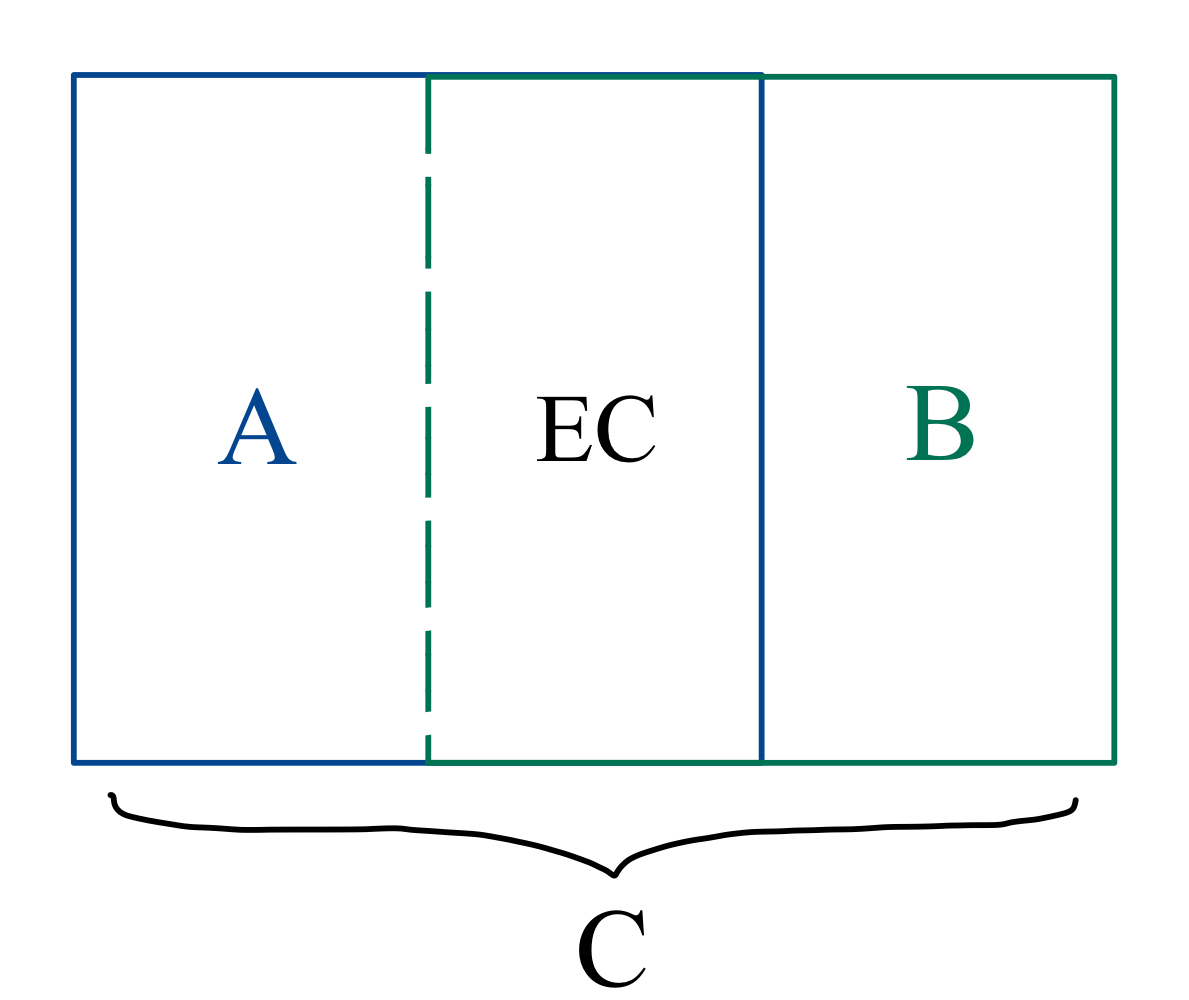}
    \caption{Two components $A$ and $B$ with an overlapping set of ECs}
    \label{overlapEC}
\end{figure}
According to the definition of badness, components $A$ and $B$ can be treated as independent and thus
\begin{align*}
    \mathbb{P}(C\text{ bad})&\geq1-(1-\mathbb{P}(A\text{ bad}))(1-\mathbb{P}(B\text{ bad}))\\
    &=\mathbb{P}(A\text{ bad})+\mathbb{P}(B\text{ bad}))-\mathbb{P}(A\text{ bad})\cdot\mathbb{P}(B\text{ bad}).
\end{align*}
Note that as we encode to higher levels the logical error rate is rather small, so the last term is negligible. Since we can compute lower bounds for $\mathbb{P}(A\text{ bad})$ and $\mathbb{P}(B\text{ bad})$, the lower bound for $\mathbb{P}(C\text{ bad})$ can be established. Further rounds of $\overline{\text{EPP}}$ follow the same rationale. Let us denote the probability of terms of order $(\sigma_6)^i$ of $\kappa_1^A(k,\sigma_6)$ being $Z$-logical error by $p_{Z,i}^{(k)}(\sigma_6)$, which is determined when we considered $Z$ feasible solutions. It's necessary to separate out $\overline{Z}$-error from $\overline{X}/\overline{Y}$-error. As previously indicated, 6 cases of logical errors arise from the failures of two $\overline{\text{ebit}}$ pairs, in which 4 are $\overline{Z}$-error and 2 are $\overline{X}/\overline{Y}$-error. We will denote the lower bound on $\overline{Z}$-error rate and $\overline{X}/\overline{Y}$-error rate after the $l$-th $\overline{\text{EPP}}$ as $h_{A,Z}^{(l)}(k,\sigma_6)$ and $h_{A,X/Y}^{(l)}(k,\sigma_6)$ respectively, thus we arrive at the following recursive relations
    \begin{align*}
        h_{A,Z}^{(l)}(k,\sigma_6)=& 4\left(h_{A,Z}^{(l-1)} \right)^2+517.4\left(\mu_5^{(k-1)} \right)^2 \\
        h_{A,X/Y}^{(l)}(k,\sigma_6)=&2\left(h_{A,X/Y}^{(l-1)} \right)^2+1252.4\left(\mu_5^{(k-1)} \right)^2,
        \tag{$\ddagger$}
        \label{ddagger}
    \end{align*}
    where $h_{A,Z}^{(1)}=4\sum p_{Z,i}^{(2)}\kappa_{1,i}^A(2,\sigma_6)\epsilon^2$ and $h_{A,X/Y}^{(1)}=2\left(\kappa_1^A\epsilon^2- h_{A,Z}^{(1)}/4\right)$. Again in this case we will have similar lemma, thus $\exists l''_A$ s.t.
    \begin{equation*}
        h_A^{(l)}(k,\sigma_6)=\begin{cases}
            \frac{1}{4}\left(\sqrt{h_{A,Z}^{(1)}} \right)^{2^l}+\frac{1}{2}\left(\sqrt{h_{A,X/Y}^{(1)}} \right)^{2^l},\space\text{ if } l<l''_A\\
            1770\left(\mu_5^{(k-1)} \right)^2,\space\text{ if } l\geq l''_A
        \end{cases}
    \end{equation*}
    
\subsection{\texorpdfstring{Scheme $B$}{Scheme B}}
For scheme $B$, we perform interface encoding and EPP iteratively, i.e. encode to $k=1$ with $Enc_{0\rightarrow1}$ followed by $\overline{\text{EPP}}^{(1)}$ and then encode with $Enc_{1\rightarrow2}$ followed by $\overline{\text{EPP}}^{(2)}$ etc. For the logical error rate in this scheme, we follow the same methodology as above. The only difference here is that in every round of $\overline{\text{EPP}}$, contributions from $\mathcal{P}_2$ have the same magnitude and thus need to be included. Hence for $k\geq 2$ and $2\leq l\leq k$ we have
    \begin{align*}
        g_B^{(l)}(k,\sigma_6)&=\bigg(6\left( g_B^{(l-1)}\right)^2+24g_B^{(l-1)}\nu_5^{(l-1)}+2084.2\left(\nu_5^{(l-1)} \right)^2\bigg)\dots\\
        &\dots\left(1-4g_B^{(l-1)}(k,\sigma_6)-12\nu_5^{(l)} \right)^{-1}\\
        h_{B,Z}^{(l)}(k,\sigma_6)&= 4\left(h_{B,Z}^{(l-1)} \right)^2+517.4\left(\mu_5^{(l-1)} \right)^2 \\
        h_{B,X/Y}^{(l)}(k,\sigma_6)&=2\left(h_{B,X/Y}^{(l-1)} \right)^2+1252.4\left(\mu_5^{(l-1)} \right)^2
    \end{align*}
    where $g_B^{(1)}=\kappa_2(1,\sigma_6)\epsilon^2$, $h_{A,Z}^{(1)}=4\sum p_{Z,i}^{(1)}\kappa_{1,i}(1,\sigma_6)\epsilon^2$ and $h_{A,X/Y}^{(1)}=2\left(\kappa_1(1,\sigma_6)\epsilon^2-h_{A,Z}^{(1)}/4 \right)$. In this case, the lemma no longer works because for each level $l\leq k$, the magnitudes of each term are comparable, and since $\kappa_2(1,\sigma_6)\geq 2815.6>1/\epsilon_0$. To simplify the upper bound, we may convert them to $C^{(l)}\left(\nu_5^{(l-1)} \right)^2$ at each level, where $C^{(l)}$ satisfies the difference equation
    \begin{equation*}
        C^{(l+1)}=\left( 6\left( C^{(l)}\right)^2+\frac{24}{\epsilon_0}C^{(l)}\right)(\epsilon_0)^2+2084.2
    \end{equation*}
    and $C^{(1)}=\kappa_2(1,\sigma6)$. It's easy to check that 2114.0 is a stable equilibrium and as long as $\sigma_6\leq528$, the sequence will converge to this fixed point. This puts a sufficient bound of $11.9\%$ on $\epsilon_6$. When $l>k$, we are back to equation (\ref{dag}), except in this case, we start with an initial point of $C^{(k)}\left(\nu_5^{(k-1)}\right)^2$ and previous results hold. Depending on the initial value, if $C^{(k-1)}$ is already $\delta$-close to 2114.0 for small $\delta$, then one additional round of $\overline{\text{EPP}}$ will diminish the logical error rate to $2084.2\left(\nu_5^{(k-1)} \right)^2$, but further EPP iterations will not enhance the outcome. We will denote the minimal number of $\overline{\text{EPP}}$ for this to occur by $l_B'$. Similarly, for the lower bound we can establish the difference equations and the lower bound for the logical error rate converges to $1774\left(\mu_5^{(l-1)} \right)^2$ provided the converges occurs for $l\leq k$; otherwise after $l=k$, the suppression of logical error rate follows equation (\ref{ddagger}). Similarly, we denote this boundary $l$ by $l''_B$. Hitherto we have completed analysis for these two schemes.

 \section{Fault-tolerant Magic Square Game}\label{FTMSG}
Having prepared the logical EPR pairs fault-tolerantly, we are now ready to play the magic square game. In this section, we will perform an analogous analysis of the failure probability of the magic square game for all three approaches considered in this paper. In addition, we address the main concern on the number of nonlocal and local resources used. We will first state the main result formally. More explicit numerical comparisons based on these bounds will be presented at the end of this section.
\begin{theorem}
    Given physical error rate $\epsilon\leq\epsilon_0'\approx2.25\times10^{-4}$, if we hope to play the Magic Square Game with success probability $1-\Delta$, $0<\Delta\ll1$. Assuming the physical EPR pairs have infidelity 0.1, let $\chi$ denote the number of EPR pairs used to achieve this.  Then there exists a method that uses an average
    \begin{equation*}
        \chi\leq 2\left\lceil\left(\frac{\log(\Delta/9.64c)}{\log\epsilon/c} \right)^2/0.54\right\rceil
    \end{equation*}
    number of EPR pairs where $c=1/2129.4$. In particular, when $\epsilon=\epsilon_0'$, we further have a lower bound on $\chi$ for this method
    \begin{equation*}
        \chi\geq 2\left\lfloor\left(\frac{\log(\Delta/23.0\mu_0)}{\log\epsilon/\mu_0} \right)^2/0.85\right\rfloor,
    \end{equation*}
    where $\mu_0=1/1061.0$.
    \label{main_thm}
\end{theorem}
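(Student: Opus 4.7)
My plan is to employ the \textit{Interface+EPP} Scheme~$A$ developed in Section~\ref{logicalerrEPP} to prepare two logical ebits at an appropriately chosen concatenation level $k$ with $l$ rounds of $\overline{\text{EPP}}^{(k)}$, and then play the magic square game using the ideal strategy together with fault-tolerant logical measurements. The game failure probability $\Delta$ decomposes into contributions from logical errors on the two prepared $\overline{\text{ebit}}^{(k)}$s and from the $\overline{X}/\overline{Z}$-measurement exRecs at level $k$. Since each such measurement exRec fails with probability at most $\nu_3^{(k)}$ or $\nu_4^{(k)}$---doubly-exponentially small in $k$ and thus negligible relative to the per-ebit logical error rate $p$---the dominant contribution takes the form $\Delta \approx c_{\text{MSG}}\,p$, where $c_{\text{MSG}}$ enumerates the logical error patterns that drive the two measurement outputs outside the winning correlator set.

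For the upper bound I choose the smallest $l$ such that $g_A^{(l)}(k,\sigma_6)\leq \Delta/c_{\text{MSG}}$. Using the pre-saturation closed form $g_A^{(l)}(k,\sigma_6)\approx \frac{1}{6}(\sqrt{6\kappa_2^A}\,\epsilon)^{2^l}$ extracted from recursion~(\ref{dag}) and inverting, one finds $2^l\geq \log(\Delta/9.64c)/\log(\epsilon/c)$ with $c = 1/\sqrt{6\kappa_2^A} = 1/2129.4$, where $\kappa_2^A$ is taken at its supremum over $k$ in the allowed regime. Each execution of Scheme~$A$ consumes $4^l$ initial $\overline{\text{ebit}}^{(k)}$s, each prepared from a single physical ebit via $\text{Enc}_k$; averaging over the EPP acceptance probability $\geq 1-f_2(\epsilon)$ across $l$ rounds and incorporating the constant overhead from the two preliminary rounds of physical EPP (needed to bring the initial infidelity of $10\%$ below $\epsilon_0$) yields the $1/0.54$ denominator. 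Multiplying by $2$ for the two ebits required by the MSG produces the stated upper bound on $\chi$.

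For the lower bound at $\epsilon=\epsilon_0'$, I invoke the companion lower bound $h_A^{(l)}(k,\sigma_6)\geq \frac{1}{4}\bigl(\sqrt{h_{A,Z}^{(1)}}\bigr)^{2^l}+\frac{1}{2}\bigl(\sqrt{h_{A,X/Y}^{(1)}}\bigr)^{2^l}$ from recursion~(\ref{ddagger}), coupled with a matching lower bound $\Delta\geq c_{\text{MSG}}'\,p$ on the game failure probability obtained by selecting a single error pattern that surely loses the game. The $X/Y$-error branch dominates, so inversion gives $2^l\leq \log(\Delta/23.0\mu_0)/\log(\epsilon/\mu_0)$ with $\mu_0 = 1/1061.0$ arising from $\sqrt{h_{A,X/Y}^{(1)}}\,\epsilon$ evaluated at $\epsilon=\epsilon_0'$. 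Applying the upper bound on acceptance $1-f_1(\epsilon)$ from Corollary~\ref{lb_on_InEPP_A} generates the $1/0.85$ factor, while the floor function reflects that $l\in\mathbb{N}$.

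The main technical obstacle is the dependence of the constants $\kappa_2^A(k,\sigma_6)$, $\kappa_{1,i}^A(k,\sigma_6)$, $h_{A,Z}^{(1)}$, and $h_{A,X/Y}^{(1)}$ on $k$ through the level-dependent fault statistics $\sigma_{U/L,i}^{(k)}$; I handle this by substituting uniform sup/inf values across $k$, which suffices for the stated bounds but introduces mild slack. A secondary concern is verifying that the chosen $l$ lies strictly below the saturation thresholds $l'_A$ and $l''_A$ beyond which further EPP no longer decreases (respectively increases) the logical error rate, equivalent to the hidden regime condition $\Delta\gtrsim (\mu_5^{(k-1)})^2$. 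This is compatible with the assumption $0<\Delta\ll 1$ provided $k$ is chosen sufficiently large, a choice absorbable into the overhead since $k$ enters only through a $\log_2\log_2(1/\Delta)$ ancilla cost that does not affect the leading ebit count.
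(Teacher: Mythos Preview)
Your approach has the right architecture but misidentifies where almost every constant in the theorem comes from, and those mis-identifications matter.

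First, the constant $c=1/2129.4$ is not $1/\sqrt{6\kappa_2^A}$. Numerically $\kappa_2^A(2,1)\approx 644$, so $\sqrt{6\kappa_2^A}\approx 62$, two orders of magnitude off. In the paper $c=\epsilon_0$, the \emph{local threshold} from Theorem~4.1, and the doubly-exponential factor $(\epsilon/\epsilon_0)^{2^k}$ arises from the exRec recursion $\nu_5^{(k)}\le\epsilon_0(\epsilon/\epsilon_0)^{2^k}$, not from iterating the EPP map~(\ref{dag}). The parameter being exponentiated is the concatenation level $k$, not the EPP round count $l$. Likewise $\mu_0=1/1061.0$ is the lower-bound analogue drawn from the $A_5^{(k)}$ sequence, not from $h_{A,X/Y}^{(1)}$.

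Second, the method that achieves the stated bound is Scheme~$B$ with $m=0$ (no preliminary physical EPP at all), for which the paper shows $\chi\le\lceil 4^k/0.54\rceil$ when $k\ge 6$. The $0.54$ is the lower bound on the \emph{overall acceptance probability} of the chain of logical EPPs, computed from $(1-f_2)\prod_l(1-4g^{(l-1)}-12\nu_5)$; it has nothing to do with two rounds of initial $5\!\to\!1$ purification, which is the Direct-Encoding requirement you have imported by mistake. Similarly $0.85$ is the corresponding upper bound on acceptance from $1-f_1$.

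Third, the constants $9.64$ and $23.0$ are not a combinatorial $c_{\text{MSG}}$ counting error patterns. They come from the union bound $\mathbb{P}(\text{fail})\le 2\,\mathbb{P}(\overline{\text{ebit}}^{(k)}\text{ bad})+6\,\nu_{\text{Shor}}^{(k)}$ (and its lower-bound counterpart), where the six non-destructive Shor-style measurements of the $\overline{X\!Z},\overline{Z\!X},\overline{Y\!Y}$ type dominate the coefficient. Your use of destructive $\overline{X}/\overline{Z}$ measurement exRecs does not match the MSG protocol, which requires sequential non-destructive measurements on the same logical blocks.

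In short: the skeleton ``choose $k$, bound MSG failure, invert, count ebits as $4^k$ over acceptance'' is right, but you need to trace $c,\mu_0$ to the threshold analysis of Section~\ref{exRecthres}, trace $9.64,23.0$ to the Shor-measurement bounds of Appendix~\ref{Shor-mmt}, use Scheme~$B$ with $m=0$, and recognise $0.54,0.85$ as acceptance probabilities.
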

Next we establish essential elements for the proof of the theorem. 

\subsection{Failure probability}
Here we will derive bounds on the failure probability of the magic square game when the encoded EPR pairs are prepared using \textit{Direct Encoding Scheme $A$}, \textit{Interface+EPP Scheme $A$} and $B$ discussed previously. There is another component in the circuit, that is, the measurement of the corresponding operators which depends on the row/column the party is given. We want to carry out this non-destructive measurement fault-tolerantly. To achieve this, we will adopt Shor's approach of non-destructive measurement, utilizing fault-tolerantly prepared cat-states, as discussed in Appendix~\ref{Shors'}. However, as they will measure operators such as $\overline{X}\otimes \overline{Z}$, to allow subsequent measurements, a `big' cat state, consisting of 14 qubits(when $k=1$) is needed (Of course, other methods of Shor measurement that save ancillary qubits or use fewer rounds of syndrome measurements can be applied. Since this measurement procedure is the same for all three methods discussed, we will choose the original method and it will serve our purpose). Note that we want Alice and Bob to win the game with probability $\Delta$. We just need to ensure that the row/column assignments\footnote{Recall Table~\ref{msgmmt}} that potentially induce the greatest failure probability is less than $\Delta$, that is when they are both assigned indices 2. A circuit for the FT magic square game is shown below,
\begin{figure}[H]
    \centering
    \includegraphics[width=0.75\linewidth]{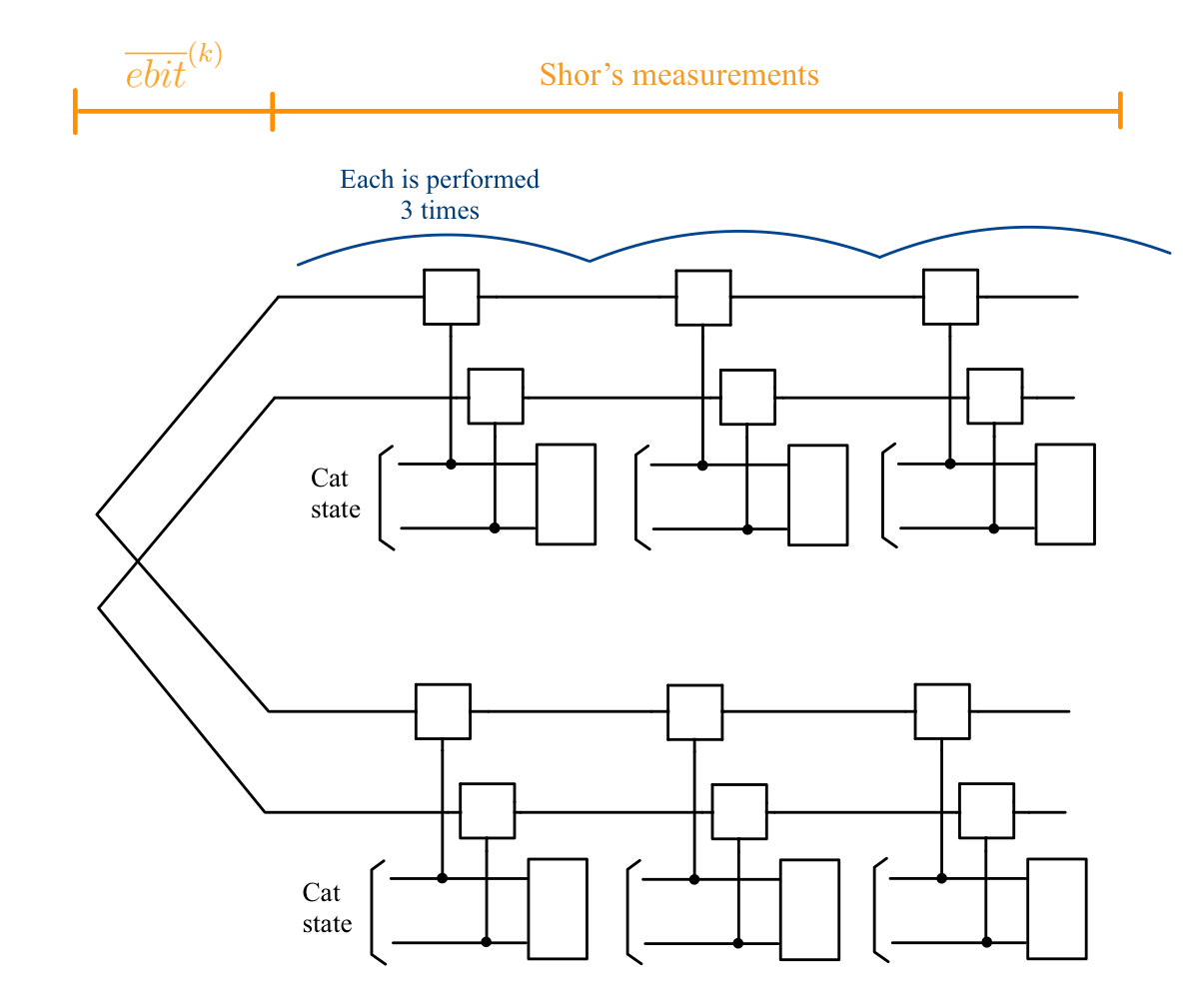}
    \caption{Circuit for FT magic square game}
    \label{FTMSGcircuit}
\end{figure}
In Appendix \ref{Shor-mmt} we compute bounds on one full round of Shor's measurement. Thus the upper bound for the failure probability of the magic square game follows from the union bound
\begin{equation*}
    \mathbb{P}(\text{Magic square game fails})\leq2\cdot\mathbb{P}(\overline{\text{ebit}}^{(k)}\text{ bad})+6\nu_{\text{Shor}}^{(k)}.
\end{equation*}
Similarly, an approximate lower bound can be obtained, with negligible higher-order terms 
\begin{equation*}
    \mathbb{P}(\text{Magic square game fails})\geq2\cdot\mathbb{P}(\overline{\text{ebit}}^{(k)}\text{ bad})+6\mu_{\text{Shor}}^{(k)}.
\end{equation*}
Thus for the three methods, after simplification, we have the bounds for $k\geq2$ and $\epsilon\leq\epsilon_0'$.
    \begingroup
    \setlength{\tabcolsep}{10pt}
    \renewcommand{\arraystretch}{1.35}
\begin{table}[H]
\centering
\renewcommand{\arraystretch}{1.8} 
\begin{tabular}{c|c|c}
 & Upper bound & Lower bound \\
\hline
\textit{Direct Encoding} & $9.42\epsilon_0\left(\epsilon/\epsilon_0 \right)^{2^{k}}$ & $21.6\mu_0\left(\epsilon/\mu_0\right)^{2^k}$ \\
\hline
\textit{Interface+EPP-A} & $9.61\epsilon_0\left(\epsilon/\epsilon_0 \right)^{2^{k}}$ & $23.0\mu_0\left(\epsilon/\mu_0\right)^{2^k}$ \\
\hline
\textit{Interface+EPP-B} & $9.64\epsilon_0\left(\epsilon/\epsilon_0 \right)^{2^{k}}$ & $23.0\mu_0\left(\epsilon/\mu_0\right)^{2^k}$ \\
\end{tabular}
\caption{Bounds for the failure probability of the magic square game under different $\overline{\text{ebit}}$ preparation methods at concatenation level $k$.}
\end{table}
\endgroup
Note that in the last two cases, we also have another parameter $l$, the number of logical EPP performed. Thus the above represents the lowest upper bounds and highest lower bounds achievable with $k$-th level encoding with our analytical methods. Therefore if we hope to play the magic square game with a success probability of at least $1-\Delta$, let us denote $k_0$ to be the minimal number of concatenation levels that could achieve this, we can bound $k_0$ by
\begin{equation*}
    \left\lfloor\log_2\left(\frac{\log\Delta/c_i\mu_0}{\log\epsilon/\mu_0} \right)\right\rfloor\leq k_0\leq\left\lceil\log_2\left(\frac{\log\Delta/d_i\epsilon_0}{\log\epsilon/\epsilon_0} \right)\right\rceil,
\end{equation*}
where $c_i$ and $d_i$ correspond to the leading constants of the lower and upper bounds for each method. 

\subsection{Nonlocal resources}
Here we first compare the non-local resources(noisy ebits) required to prepare one $\overline{\text{ebit}}^{(k)}$ for each method. Since the magic square game requires Alice and Bob to share 2 ebits, we would multiply by 2.
Note that before encoding to logical qubits, we would first perform rounds of physical EPP as in Section~\ref{initial_EPP}. As argued earlier, for \textit{Direct Encoding} we would need to perform at least two rounds to ensure that the initial error is suppressed enough. Let $m$ denote the number of rounds of initial EPPs. Thus for \textit{Direct Encoding}, $m\geq2$. Given the local gates also have an error rate $\epsilon\leq\epsilon_0'$, it's rather hard to work with an analytical form, thus we may resort to numerical tools and set the local physical error rate to be $\epsilon_0'=2.25\times10^{-4}$ in our subsequent analysis. Assuming the infidelity of the initial EPR pair is $10\%$, the success probability of the first round of EPP is $59.6\%$ and the second is $95.6\%$. If we denote the number of EPR pairs required to prepare a $\overline{\text{ebit}}^{(k)}$ by $\chi^{(k)}$, for \textit{Direct Encoding} we have $\forall k\geq2$,
\begin{equation*}
    \chi_{1}^{(k)}=\frac{5}{0.596}\cdot\frac{5}{0.956}\cdot7^k\approx43.9\cdot 7^k.
\end{equation*}
This is actually the exact number of noisy ebits we need in this case since transversal CNOT in the Steane code is the logical CNOT. For the \textit{Interface+EPP} methods, a more meticulous treatment is needed due to the extra parameter $l',l''$ which are determined by both $\epsilon$ and the initial infidelity. We will first obtain lower bounds on the total acceptance probability for the \textit{Interface+EPP} methods. For Scheme $A$, we have the lower bound,
\begin{align*}
    &\mathbb{P}(\text{All }\overline{\text{EPP}}\text{ accepted})\\
    \geq&\left(1-f_2(\epsilon,\sigma_6)\right)\prod_{l=2}^{l'}\left(1-4g_A^{(l-1)}(k,\sigma_6)-12\nu_5^{(k)}\right)\\
    \geq&\left(1-f_2(\epsilon,\sigma_6)\right)\left( 1-\frac{2}{3}\frac{1}{1/(6\kappa_2(2,\sigma_6)\epsilon^2)-1}-12l'\nu_5^{(k)}\right).\\
\end{align*}
For Scheme $B$ it's similar, except when $l\leq k$ we may replace the last term above by $\nu_5^{(l)}$. For the upper bound we will use $1-f_1(\epsilon,\sigma_6)$ in both cases. According to analysis in \ref{logicalerrEPP}, we obtain the following upper and lower bounds on $\chi$. 
\begin{center}
    \textit{Upper bound on $\chi$ for Interface+EPP}
\end{center}
To compute the best bounds on the resource overheads, we need to first determine the relationships between the parameters $k,m$ and $l',l''$. Note that as we have $m=2$ for \textit{Direct Encoding}, in the following we will discuss results for $m\leq2$. In the following table, we show the values of $l'$ for Interface+EPP scheme $A$ and $B$ as in Lemma~\ref{l_prime} for given $k$ and $m$.
\begingroup
\setlength{\tabcolsep}{10pt}
\renewcommand{\arraystretch}{1.35}
\begin{table}[H]
\centering
\renewcommand{\arraystretch}{1.5} 
\begin{tabular}{c|c|c|c}
\multirow{4}{*}{$k=2$}     &       & $A$        & $B$        \\ \cline{2-4} 
                           & $m=0$ & $l'_A=4$   & $l'_B=4$   \\ \cline{2-4} 
                           & 1     & 2          & 3          \\ \cline{2-4} 
                           & 2     & 2          & 2          \\ \hline
\multirow{4}{*}{$k=3$}     &       & $A$        & $B$        \\ \cline{2-4} 
                           & $m=0$ & $l'_A=5$   & $l'_B=4$   \\ \cline{2-4} 
                           & 1     & 3          & 3          \\ \cline{2-4} 
                           & 2     & 2          & 3          \\ \hline
\multirow{4}{*}{$k=4,5$}   &       & $A$        & $B$        \\ \cline{2-4} 
                           & $m=0$ & $l'_A=k+1$ & $l'_B=k+1$ \\ \cline{2-4} 
                           & 1     & $k-1$      & $k$        \\ \cline{2-4} 
                           & 2     & $k-1$      & $k$        \\ \hline
\multirow{4}{*}{$k\geq 6$} &       & $A$        & $B$        \\ \cline{2-4} 
                           & $m=0$ & $l'_A=k+1$ & $l'_B=k$   \\ \cline{2-4} 
                           & 1     & $k-1$      & $k$        \\ \cline{2-4} 
                           & 2     & $k-2$      & $k$        
\end{tabular}
\caption{Results for $l'_A,l'_B$ as in Lemma~\ref{l_prime} for given $k$ and $m$}
\end{table}
\endgroup
Combining the above results we can see that when $k=2$, the minimal upper bound on $\chi_2^{(2)}$ can be achieved with Scheme $A$ and $m=0$. Using this, an upper bound on $\chi$ is 484.2, representing over $77\%$ improvement over \textit{Direct Encoding}.  For $k\geq6$, the optimal scheme would be Scheme $B$ with $m=0$ and the corresponding upper bound on $\chi$ is 
\begin{equation*}
    \chi\leq \lceil 4^k/0.54 \rceil
\end{equation*}
\begin{center}
    \textit{Lower bound on $\chi$ for Interface+EPP}
\end{center}
For the lower bound on $\chi$ we can obtain a similar table when $\epsilon=\epsilon_0'$
\begingroup
\setlength{\tabcolsep}{10pt}
\renewcommand{\arraystretch}{1.35}
\begin{table}[H]
\centering
\renewcommand{\arraystretch}{1.5} 
\begin{tabular}{c|c|c|c}
\multirow{4}{*}{$k=2$}     &       & $A$        & $B$        \\ \cline{2-4} 
                           & $m=0$ & $l''_A=3$   & $l_B''=3$   \\ \cline{2-4} 
                           & 1     & 2          & 2          \\ \cline{2-4} 
                           & 2     & 1          & 2          \\ \hline
\multirow{4}{*}{$k=3$}     &       & $A$        & $B$        \\ \cline{2-4} 
                           & $m=0$ & $l''_A=3$   & $l''_B=4$   \\ \cline{2-4} 
                           & 1     & 2          & 3          \\ \cline{2-4} 
                           & 2     & 2          & 3          \\ \hline
\multirow{4}{*}{$k\geq4$}   &       & $A$        & $B$        \\ \cline{2-4} 
                           & $m=0$ & $l''_A=k$ & $l''_B=k$ \\ \cline{2-4} 
                           & 1     & $k-1$      & $k$        \\ \cline{2-4} 
                           & 2     & $k-2$      & $k$        \\       
\end{tabular}
\caption{Results for $l''_A,l''_B$ as in Lemma~\ref{l_prime} for given $k$ and $m$}
\end{table}
\endgroup
For the lower bounds corresponding to methods discussed in \textit{Upper bounds}, when $k=2,m=0$ with Scheme $A$, on average a lower bound on $\chi_2^{(2)}$ is 74.9. For $k\geq6$ (actually works for $k\geq4$), $m=0$ and Scheme $B$ gives
\begin{equation*}
    \chi_2^{(k)}\geq\lfloor 4^k/0.85\rfloor
\end{equation*}
These bounds are sufficient to establish our theorem. By contrast, if we hope to win the magic square game with probability $1-\Delta$ with \textit{Direct Encoding}, we would have the bounds
\begin{align*}
       87.7\cdot7^{k_1}\leq\chi\leq 87.7\cdot7^{k_2}
\end{align*}
where $k_1 = \left\lceil\log_2\left(\frac{\log\Delta/d_i\epsilon_0}{\log\epsilon/\epsilon_0} \right)\right\rceil$ and $k_2=\left\lfloor\log_2\left(\frac{\log\Delta/c_i\mu_0}{\log\epsilon/\mu_0} \right)\right\rfloor$.
A more explicit numerical comparison will be described in detail later. 
\\

\subsection{Space-time overheads}
One potential drawback of \textit{Interface+EPP} seems to be the use of more local resources. However, we claim that while our method achieves exponential savings on nonlocal resources, the space-time overhead grows only linearly in $k$. Below we will provide estimates and compare the total overhead used by the two parties. Here we first state the assumptions:
\begin{enumerate}
    \item Qubits can be recycled and reset relatively easily and quickly.
    \item Qubits can undergo parallel operations, with each qubit engaged in a single operation simultaneously.
    \item Qubits have coherence time long enough to allow consecutive EPPs.
\end{enumerate} 
For \textit{Direct Encoding}, to prepare one logical EPR pair, we have 2 code blocks and an additional 2 for error correction. Observing the structure for \textit{Interface+EPP}, the largest overhead comes from the logical EPP part, which consists of 8 data blocks and 8 ECs. Accounting for the overall EPP acceptance probability, \textit{Interface+EPP} uses at most 7.5 times more qubits for all $k$. \\

Now we estimate the time overhead for both methods. Let $N$ denote the time-overhead, we will first consider $N$ for preparing $|\overline{0}\rangle^{(k)}$. Since transversal gates can be implemented in parallel and accounting for ancilla rejection probability, we have $N_0^{(k)}= a^{k-1}+7\cdot\frac{a^k-1}{a-1}$ where $a=1.0024$. For $k\leq50$, $N_0^{(k)}\approx7k+1$. By examining the Steane EC, we have $N_{EC}^{(k)}=2N_0^{(k)}+4$. Thus for \textit{Direct Encoding}, $N_{DE}^{(k)}= 2N_{EC}^{(k)}+N_0^{(k)}+1\approx35k+14$. 
For \textit{Interface+EPP-A}, note that preparation of $|\Omega\rangle^{(k)}$ can be done in parallel. Thus the interface uses approximately $\left( N_0^{(k)}+11\right)+\sum_{l=1}^kN_{EC}^{(l)}+2k$ timesteps. The EPP procedure (regardless of rejection) uses $3N_{EC}^{(k)}+2$ timesteps. Therefore given $m=0$ and $k\leq50$ accounting for rejection, the average time overhead after $l'$ rounds of EPP is $1.89l'_A(7k^2+50k+26)=1.93(k+1)(7k^2+50k+26)$. For Scheme $B$, since we have interface-EPP alternately to encode to level-$k$, the time overhead is $1.89\sum_{l=1}^k(N_0^{(l)}+13+3N_{EC}^{(l)}+2)=0.945(49k^2+115k)$. Thus Scheme $B$ outperforms Scheme $A$ in terms of time-overhead and it is only a linear increase in $k$ compared to \textit{Direct Encoding}. Together with the analysis on space-overhead, our claim is confirmed.

\subsection{Numerical results}\label{Numerical_result}
Firstly, we will provide a full stabilizer circuit Monte Carlo simulation comparing the two methods for $k=1$. Numerically we determined the pseudo-threshold for CNOT-exRec, the largest gadget in the circuit, to be $4.90\times10^{-4}$. As argued before, to guarantee fault-tolerance, we would need the gate-teleported CNOT to have an error rate below this value. Therefore for a fair comparison, we restrict to $\epsilon\leq2.42\times10^{-4}$ (although Interface+EPP tolerates higher error rates). Since the final basis measurements in FT magic square game are the same regardless of $\overline{\text{ebit}}$ preparation, we will only present the logical error rate against $\epsilon$ for preparing $\overline{\text{ebit}}^{(1)}$. We describe the two specific methods we are comparing
\begin{enumerate}
    \item For Direct Encoding, we perform two rounds of 5-to-1 physical EPP followed by the gate-teleported encoding.
    \item For Interface+EPP, at $k=1$, the two schemes are the same. For this method, while it can operate with $m=0$ or $m=1$ and save more initial ebits, we will present an optimal method~\footnote{Optimal in the sense that, the logical error rate of the resulting $\overline{\text{ebit}}^{(1)}$ is comparable to Direct Encoding while the ebit overhead and local space-time overhead are suitably minimized.}, that is, performing two rounds of 4-to-1 EPP~\footnote{This 4-to-1 EPP is the same as in Figure~\ref{Interface+EPP} at the physical level.} followed by one round of Interface+$\overline{\text{EPP}}^{(1)}$. 
\end{enumerate}
We note that in Direct Encoding we use the 5-to-1 EPP instead of 4-to-1 because 5-to-1 is more powerful and can reduce the infidelity to below the threshold value in 2 rounds of initial EPP, whereas the 4-to-1 EPP takes 3 rounds. Thus using 5-to-1 EPP will save ebits. 

In Figure \ref{Comparison}, we show the results. Figure~\ref{LE_comparison} shows the logical error rates of $\overline{\text{ebit}}^{(1)}$ resulting from the two preparation methods. Figure~\ref{Failure_prob} demonstrates the overall success probability for Interface+EPP. We run $10^6$ rounds for each value of $\epsilon$ and 5 iterations. Multiprocessing was used to maximize performance. The mean and standard deviation are also exhibited by the error bars. 
\begin{figure}[H]
    \centering
    \begin{subfigure}[b]{0.55\textwidth}
        \centering
        \includegraphics[width=\textwidth]{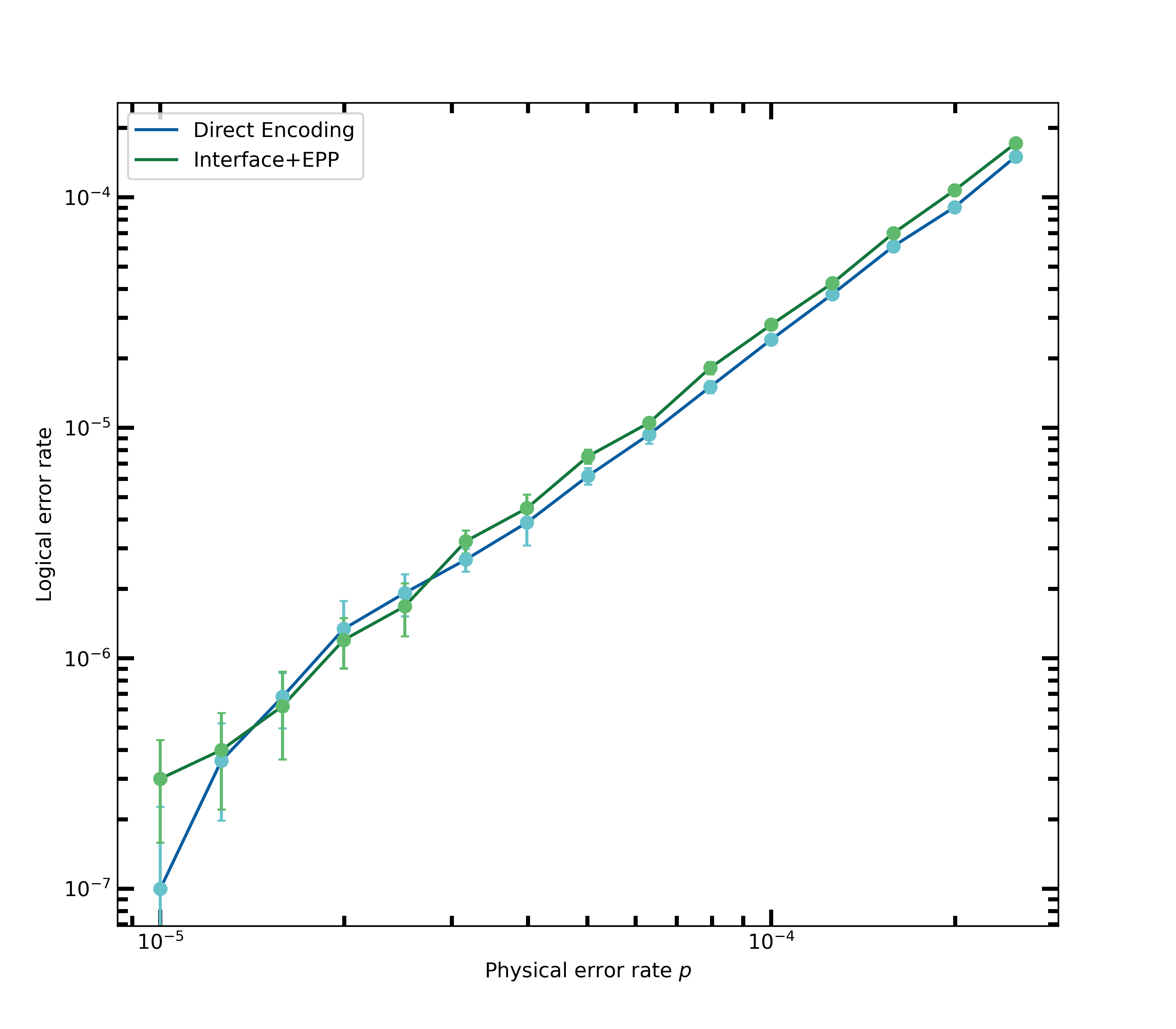}
        \caption{}
        \label{LE_comparison}
    \end{subfigure}
    \vfill
    \begin{subfigure}[b]{0.55\textwidth}
        \centering
        \includegraphics[width=\textwidth]{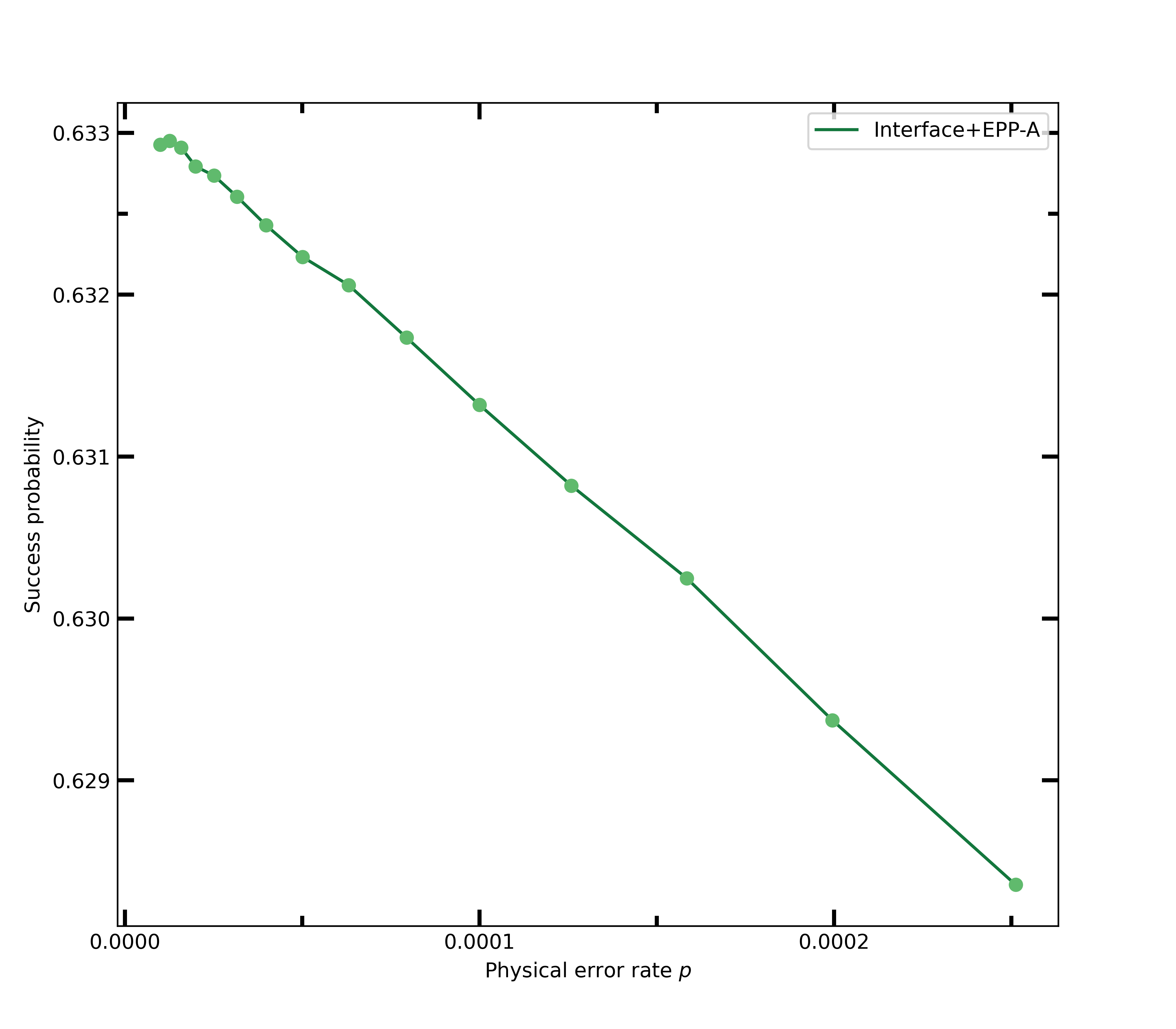}
        \caption{}
        \label{Failure_prob}
    \end{subfigure}
    \caption{(a) Logical error rates of $\overline{\text{ebit}}^{(1)}$ prepared with \textit{Direct Encoding} and \textit{Interface+EPP} against physical error rate $p$. Both methods yield similar logical error rates, thereby justifying a subsequent comparison of resource overheads. (b) Overall success probability against $\epsilon$ for \textit{Interface+EPP}.}
    \label{Comparison}
\end{figure}
From the first figure, we observe that both methods exhibit similar effectiveness in reducing the logical error rate. As shown in Figure~\ref{LE_comparison}, even accounting for statistical error, \textit{Direct Encoding} performs only slightly better than \textit{Interface+EPP}. Since the logical error rates are comparable, it is meaningful to compare the ebit overheads. In Figure~\ref{Failure_prob}, we show the overall success probability of the \textit{Interface+EPP} method against $\epsilon$. Now we consider the average number of ebits needed at $\epsilon=2.42\times10^{-4}$. For \textit{Direct Encoding}, we use $\frac{5^2\times 7}{0.57}\approx 301.7$ ebits. For \textit{Interface+EPP}, we use $\frac{4^3}{0.63}\approx101.6$ ebits, representing a $66.3\%$ saving relative to \textit{Direct Encoding}. This provides numerical evidence for our claims and demonstrates the advantage of our schemes even for near-term implementation. 

Finally, we present results for the Magic Square Game. In Figure~\ref{FTMSG_result}, by using the bounds obtained previously, we compare the two methods in terms of the average number of raw ebits required if we hope to win the magic square game with probability $1-\Delta$. In the following, for Interface+EPP, we plot for Scheme $B$, which we proved to have better performance. 
\begin{figure}[H]
    \centering
    \includegraphics[width=0.7\linewidth]{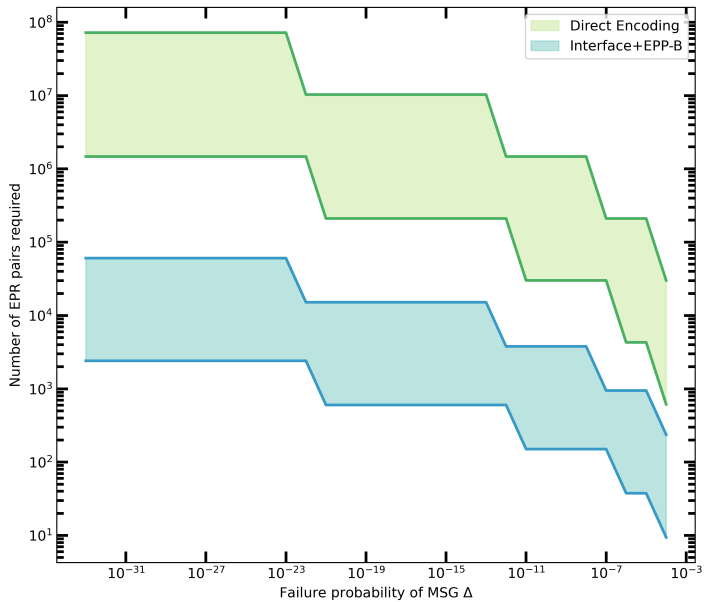}
    \caption{The number of raw ebits required against failure probability of the magic square game $\Delta$. Log-scale is used for both axes.}
    \label{FTMSG_result}
\end{figure}
From the above figure, we can conclude that Interface+EPP-$B$ is `definitely' more superior than Direct Encoding for $\Delta\leq10^{-3}$. As log-scale is used, the advantage of \textit{Interface+EPP-}$B$ increases significantly as $\Delta$ decreases further. Moreover, by comparing the lower bound of \textit{Direct-Encoding} and the upper bound of \textit{Interface+EPP-B}, we can see that for $10^{-33}\leq\Delta\leq10^{-3}$, the least improvement is at least $61.4\%$. As $\Delta$ decreases, the improvement reaches at least $95.9\%$ for $\Delta \leq 10^{-23}$. The disparity potentially becomes even more evident as $\Delta$ decreases further. 
\section{Conclusion}\label{conclusion}
In this paper, we investigate the preparation of long-range $\overline{\text{ebit}}$s in the context of the Magic Square Game. Specifically, we aim to minimize the number of consumed ebits while ensuring that the failure probability does not exceed $\Delta$.

To this end, we introduce a novel approach that leverages an *interface*, which translates an unknown state into the logical space, along with the purifying power of entanglement purification protocols (EPP). For the $[[7^k,1,3^k]]$ concatenated Steane code, our analytical bounds indicate that the \textit{Interface+EPP} scheme offers a significant advantage over the conventional method—encoding qubits on both sides and performing transversal gates—in terms of reducing the number of required initial ebits. This advantage becomes even more pronounced as $\Delta$ decreases. Moreover, for the case of $k=1$, we performed full-circuit simulations to support our claims. On the nonlocal game side, our analysis provides an upper bound on the number of ebits required to play the Magic Square Game fault-tolerantly using the Steane code. Future research directions include tightening these upper bounds and establishing a lower bound, though the method for deriving the latter remains unclear.

More broadly, our proposal presents a flexible and comprehensive framework that can be adapted for similar tasks. Many aspects of our construction are customizable to suit different needs. For instance, in Figure~\ref{Interface+EPP}, instead of using destructive measurements for $\overline{\text{EPP}}$, Alice and Bob could further reduce local qubit consumption and initial ebit requirements by performing non-destructive $\overline{X}/\overline{Z}$ measurements. The final three $\overline{\text{ebit}}$s could then be repurposed for subsequent $\overline{\text{EPP}}$. However, practical implementation is constrained by qubit decoherence times ($T_1/T_2$), making some hardware platforms more suitable than others~\cite{Majidy_Wilson_Laflamme_2024}. Additionally, the choice of EPP can be varied. In this work, we primarily employ a 4-to-1 $\overline{\text{EPP}}$ scheme capable of correcting both $\overline{X}$- and $\overline{Z}$-errors. More powerful EPP protocols could be used at the cost of additional ebits, while in certain special cases, fewer ebits may suffice. For example, in systems with biased noise~\cite{Tuckett_2019,xu2022tailored}, only two $\overline{\text{ebit}}$s might be required to purify either $X$- or $Z$-errors.

Our framework is also extendable to other quantum error-correcting codes (QECCs), provided a suitable interface is constructed. For instance, local overhead can be reduced by employing concatenated quantum Hamming codes alongside a carefully designed interface~\cite{Yamasaki_2024}. However, one challenge with this approach is that, when applying transversal CNOT gates locally, it is not possible to target individual logical qubits—rather, the operation must be performed on entire blocks of logical qubits simultaneously. Another potential extension involves quantum low-density parity-check (qLDPC) codes. Since lattice surgery has been adapted for qLDPC codes to enable fault-tolerant gate implementation~\cite{Cohen_2022}, a similar interface could be designed for specific qLDPC codes. Adapting our framework to other QECCs would allow us to leverage techniques developed for those codes. For instance, single-shot error correction could eliminate the need for repeated measurements~\cite{Bomb_n_2015,gu2023singleshot}, while adaptive syndrome measurement techniques could reduce the number of measurement rounds when using higher-distance codes~\cite{Tansuwannont_2023,delfosse2020short}.

From a practical standpoint, our protocol for the Steane code requires all-to-all connectivity, making experimental verification feasible on certain platforms such as neutral atoms~\cite{Bluvstein_2023} and trapped ions~\cite{cieran}. Furthermore, our framework enables interfacing between different codes without significantly compromising the logical error rate. We hope our work will inspire further research in this direction, which is crucial not only for fault-tolerant nonlocal games but also for the advancement of modular quantum architectures and the quantum internet.

\section{Acknowledgement}
We acknowledge computing resources from IQC. We thank Xiao Yang for the general maths discussion and Amit Anand for the discussion on the dynamical systems. This work was done when Zeyi Liu was a Masters student at University of Waterloo.

\section{Remarks}
\textit{During the course of writing-up, a few works using the idea of logical EPP to prepare logical Bell pairs have surfaced~\cite{pattison2024fastquantuminterconnectsconstantrate,gu2025constantoverheadentanglementdistillation,ataides2025constantoverheadfaulttolerantbellpairdistillation}. However, we approach the problem from a different perspective and we are hoping our rigorous theoretical analysis of error bounds will provide values to a broader audience. From an error-correction standpoint, it will be insightful to compare our Interface+EPP protocol with the Direct Encoding method proposed in the aforementioned works in the context of other error-correcting codes. Notably, our interface leverages a pre-prepared ancilla and requires significantly fewer measurement rounds in the actual distillation protocol. This characteristic may be advantageous for physical platforms such as neutral atoms, where ancilla patches can be prepared in separate zones.}

\bibliographystyle{unsrt}
\bibliography{ref}

\onecolumn\newpage
\appendix
\section{Fault-tolerant Constructions}\label{FT_construction}
\subsection{Steane's fault-tolerant error correction}
The QECC we use in this work is the basic Steane [7,1,3] code. It has stabilizers
\begin{align*}
    IIIXXXX & \;\; IIIZZZZ \\
    IXXIIXX & \;\; IZZIIZZ \\
    XIXIXIX & \;\; ZIZIZIZ
\end{align*}
and logical operators $\overline{X}=X^{\otimes7}$, $\overline{Z}=Z^{\otimes7}$
The property that it's a perfect code makes it intriguing and this implies a simple decoding procedure and the fact that it's a CSS code implies that transversal CNOT can be implemented with CNOT$^{\otimes n}$. Besides, by code concatenation of level $l$ we can construct codes of distance $3^l$ that can handle errors with weight up to $\frac{3^l-1}{2}$. In this subsection we set forth the specific constructions for the Steane code used in the work and some pseudo-threshold results.
\begin{figure}[H]
    \centering
    \begin{tikzpicture}
    \node[scale=0.5]
    {
    \begin{quantikz}
    \lstick{$\ket{+}$}  & \qw & \ctrl{4} & \qw & \ctrl{2} &\qw &\qw &\qw \\
    \lstick{$\ket{+}$}  & \ctrl{1} & \qw & \ctrl{4} & \qw &\qw &\qw &\qw \\
    \lstick{$\ket{0}$}  & \targ{} &\qw & \qw & \targ{} &\qw &\ctrl{4} &\ctrl{5} &\qw\\
    \lstick{$\ket{+}$}  & \ctrl{2} & \qw & \qw & \ctrl{3} &\ctrl{1}&\qw &\qw &\qw\\
    \lstick{$\ket{0}$}  & \qw & \targ{} & \qw & \qw & \targ{} &\qw &\qw & \ctrl{3}&\qw\\
    \lstick{$\ket{0}$}  & \targ{} &\qw &\targ{} &\qw &\qw &\qw &\qw &\qw &\ctrl{2} & \qw \\
    \lstick{$\ket{0}$}  & \qw &\qw & \qw &\targ{} &\qw &\targ{} &\qw &\qw &\qw &\qw \\
    \lstick{$\ket{0}$}  & \qw &\qw &\qw &\qw &\qw &\qw  &\targ{} &\targ{} &\targ{} &\meter{$X$} \\
    \end{quantikz}
    };
    \end{tikzpicture}
    \caption{Fault-tolerant level-1 $|\Bar{0}\rangle$ encoding circuit for the [[7,1,3]] code. }
    \label{prep_anc1}
\end{figure}
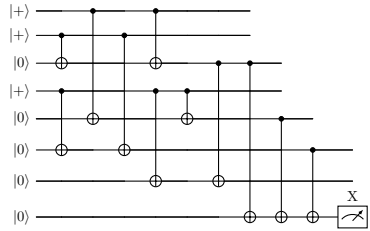
Figure \ref{prep_anc1} presents the circuit for preparation of $|\Bar{0}\rangle$. Note that there is an additional ancilla qubit which is used for catching bit flip errors occurred during preparation. $Z$ error verification is not needed because all $Z$ errors can be reduced to weight 0 or 1, thus obeying the 'goodness' criteria in the fault-tolerant section. In this way the encoding circuit is fault-tolerant. $|\overline{+}\rangle$ can be analogously prepared, by reversing the CNOT gates and interchanging $|0\rangle$ and $|+\rangle$. 

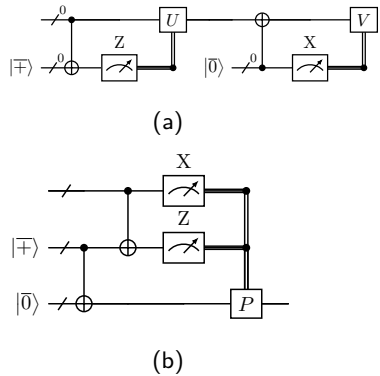
\begin{figure}[H]
    \centering
    \begin{subfigure}[b]{0.3\textwidth}
    \begin{tikzpicture}
    \node[scale=0.6]
    {
    \begin{quantikz}
        \qw & \ctrl{1}\qwbundle{0} &\qw  & \gate{U} &\qw&\qw &\targ{0}\vqw{1} & \qw  & \gate{V}\\
        \lstick{$\ket{\overline{+}}$} & \targ{0}\qwbundle{0} & \meter{$Z$} & \cwbend{-1} &  \wireoverride{n} &\wireoverride{n}\lstick{$\ket{\overline{0}}$} &\ctrl{0}\qwbundle{0} &\meter{$X$} & \cwbend{-1}
    \end{quantikz}
    }; 
    \end{tikzpicture}
    \subcaption{}
    \end{subfigure}
    
    \begin{subfigure}[b]{0.3\textwidth}
    \begin{tikzpicture}
    \node[scale=0.7]
    {
    \begin{quantikz}
        \qw &\qw\qwbundle{} & \ctrl{1} & \meter{$X$} & \cwbend{2} \\
        \lstick{$\ket{\overline{+}}$} & \ctrl{1}\qwbundle{} & \targ{} & \meter{$Z$} &\cwbend{1} \\
        \lstick{$\ket{\overline{0}}$} & \targ{}\qwbundle{} & \qw &\qw &\gate{P} & \qw
    \end{quantikz}
    };  
    \end{tikzpicture}
    \subcaption{}
    \end{subfigure}
    \caption{(a) Steane's FTEC gadget. $U$ and $V$ are corrections corresponding to measured syndromes. (b) Knill's FTEC gadget using transportation. The applied gate $P$ depends again on the syndrome measurements.}
    \label{EC_gadget}
\end{figure}
We shall also present numerical estimates for the pseudo-threshold of different gadgets used in our work ($k=1$):
\begingroup
\begin{table}[H]
\centering
\begin{tabular}{c|c|c|c|c}
 Encoding & Steane EC & Knill EC & Logical CNOT & CNOT-exRec \\
\hline
0.058 & $4.5\times10^{-3}$ & $6.1\times10^{-3}$ & $8.5\times10^{-2}$ & $4.9\times10^{-4}$
\end{tabular}
\end{table}
\endgroup
Note that the thresholds obtained here are slightly higher than those reported in previous literature due to different state preparation procedures. We also tested out other FTEC schemes for the Steane code, such as Shor's\cite{Shor} and the Flag\cite{RuiChao} methods, but they turn out to have lower pseudo-thresholds so we won't elaborate here. Also we use Steane EC for easier analytical threshold analysis. For flag EC, due to the fact that multiple rounds of measurements are required, conditioned on the flag values, it's hard to count the number of malignant pairs. But the flag EC method does significantly reduce resource overhead and is potentially useful if the number of available qubits is limited. Another observation is that if we wish to implement a circuit with \textit{exRecs} as introduced in \cite{AGP05}, the threshold will be dominated by the EC gadgets whereas if we are concerned with a short circuit, e.g. preparing an EPR pair, without the ECs in between, the threshold is dominated by the encoding.

\subsection{Shor's style syndrome measurement}\label{Shors'}
When we perform non-destructive measurements in this work, we resort to Shor's style syndrome measurement. The following is an illustration adopted from \cite{AGP05},\cite{gottesman2009introduction}. To make the preparation of the cat state fault-tolerant, we have an ancilla qubit. From the measurement results, we can infer the parity (i.e. eigenvalue of $\overline{X}$). So when the outcome $x$ has even weight, we say the eigenvalue is $+1$, otherwise it's $-1$. This is valid by noting the fact that
\begin{align*}
H^{\otimes n}(|0\rangle^{\otimes n}+|1\rangle^{\otimes n})&\propto\sum_{\text{wt$(x)$=even}}|x\rangle\\
H^{\otimes n}(|0\rangle^{\otimes n}-|1\rangle^{\otimes n})&\propto\sum_{\text{wt$(x)$=odd}}|x\rangle
\end{align*}
\begin{figure}[H]
    \centering
    \begin{tikzpicture}
    \node[scale=0.5]
    {
    \begin{quantikz}
      & \qw & \qw & \qw & \qw &\qw &\targ{0}\vqw{7} &\qw &\qw &\qw &\qw &\qw &\qw &\qw \\
      & \qw & \qw & \qw & \qw &\qw &\qw &\targ{0}\vqw{7} &\qw &\qw &\qw &\qw &\qw &\qw \\
      & \qw &\qw & \qw & \qw &\qw &\qw &\qw &\targ{0}\vqw{7}&\qw &\qw &\qw &\qw &\qw \\
      & \qw & \qw & \qw & \qw &\qw&\qw &\qw &\qw &\targ{0}\vqw{7} &\qw &\qw &\qw &\qw \\
      & \qw & \qw & \qw & \qw & \qw &\qw &\qw & \qw &\qw &\targ{0}\vqw{7} &\qw &\qw &\qw \\
      & \qw &\qw &\qw &\qw &\qw &\qw &\qw &\qw &\qw &\qw &\targ{0}\vqw{7} &\qw &\qw \\
      & \qw &\qw & \qw &\qw &\qw &\qw &\qw &\qw &\qw &\qw &\qw &\targ{0}\vqw{7}&\qw \\
      \lstick{$|0\rangle$}&\qw &\qw & \qw &\targ{0}\vqw{1} &\ctrl{7}&\ctrl{0} &\qw &\qw &\qw &\qw &\qw &\qw &\meter{$X$}\\
      \lstick{$|0\rangle$}&\qw & \qw &\targ{0}\vqw{1} & \ctrl{0} &\qw&\qw&\ctrl{0} &\qw &\qw &\qw &\qw &\qw &\meter{$X$} \\
      \lstick{$|0\rangle$}&\qw & \targ{0}\vqw{1} &\ctrl{0} &\qw &\qw&\qw&\qw &\ctrl{0} &\qw &\qw &\qw &\qw &\meter{$X$}\\
      \lstick{$|+\rangle$}&\ctrl{1} & \ctrl{0} &\qw &\qw &\qw &\qw &\qw &\qw &\ctrl{0} &\qw &\qw &\qw &\meter{$X$}\\
      \lstick{$|0\rangle$}&\targ{0}&\ctrl{1} &\qw &\qw &\qw &\qw &\qw &\qw &\qw &\ctrl{0} &\qw &\qw &\meter{$X$} \\
      \lstick{$|0\rangle$}&\qw &\targ{0} &\ctrl{1} &\qw &\qw &\qw &\qw &\qw &\qw &\qw &\ctrl{0} &\qw &\meter{$X$}\\
      \lstick{$|0\rangle$}&\qw &\qw  & \targ{0} &\ctrl{1} &\qw &\qw &\qw &\qw &\qw &\qw &\qw &\ctrl{0} &\meter{$X$} \\
      \lstick{$|0\rangle$}&\qw &\qw &\qw & \targ{0} &\targ{0} & \meter{$Z$}\\
    \end{quantikz}
    };
    \end{tikzpicture}
    \label{Stateprep}
    \caption{Fault-tolerant Shor's style measurement of $\overline{X}$}
\end{figure}
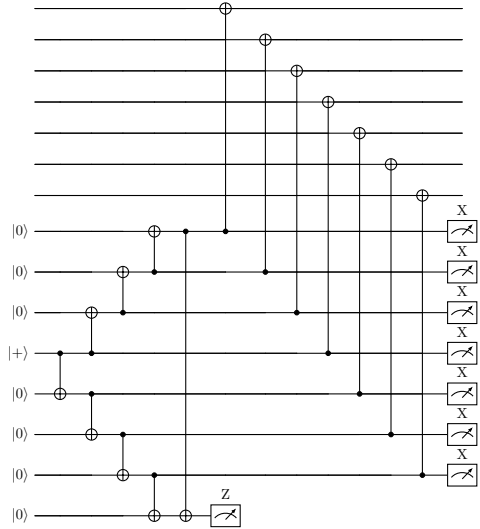
On the other hand, if we hope to measure parity of $\overline{Z}$ we replace the transversal CNOT by transversal CZ. However, sometimes a single qubit error from the ancilla can still cause parity change. So syndrome measurement is repeated multiple times. In the case of Steane code, it suffices to repeat measurements at most 3 times. If the same parity is measured twice consecutively, then we accept the parity and there is no need to perform the third round of measurement, otherwise 3 rounds are measured and we
take the majority as the result (with EC gadget on the data block before and after each measurement). Through this repeated measurement, no one fault can cause logical error in the data block or lead to rejection, thus satisfying fault-tolerant criteria.
\section{Proof of Theorem}\label{thres_thm_proof}
We shall proceed by analyzing the bounds for the logical error rate of the CNOT-exRec and other exRecs at $k=1$ and then generalize to higher-level encoding according to their interdependence. It will be clear later why this is necessary instead of generalizing each exRec independently. For the CNOT-exRec, there are $\gamma_5=279$ locations. For the upper bound, if we start with second-order terms, we may enumerate the number of malignant pairs $\alpha$. For a tighter upper bound and also a rigorous lower bound, not only must we consider the probability of malignant pairs, but we also need to take into account the type of faults($X/Z$) causing the logical error. For example, if a $Loc_1$ and a $Loc_5$ form a malignant pair, then $IZ/ZI/ZZ$ error following $Loc_5$ will not cause a logical error. For the lower bound, we observe that if two locations of a malignant pair are faulty and every other location is fault-free, then this must result in a logical error. \\

 The joint probability $\varepsilon_{5.\text{joint}}^{(k)}$ of failure for the CNOT $k$-exRec given acceptance of all ancillas (from the circuit in Appendix~\ref{FT_construction}, every $|\overline{0}\rangle$ or $|\overline{+}\rangle$ state we use in EC gadget has a verifying ancilla to ensure fault-tolerance) can be generally expressed as 
\begin{align*}
    &\sum_{j\leq i=1}^5\alpha_{5}(i,j)\epsilon_i^{(k-1)}\epsilon_j^{(k-1)}(1-\max\{\epsilon_i^{(k-1)},\epsilon_j^{(k-1)}\})^{\gamma_5-2}\\
    &\leq \varepsilon_{5,\text{joint}}^{(k)}\leq\sum_{j\leq i=1}^5\alpha_{5}(i,j)\epsilon_i^{(k-1)}\epsilon_j^{(k-1)}+\beta(\epsilon_{\max}^{(k-1)})^3
    \label{DE_bound}
\end{align*}
where $\alpha_5$ is the malignant pair matrix(MPM) with $\alpha_{5}(i,j)$ denote the number of pairs of locations of type $i$ and $j$ where faults can cause logical errors, given that all ancillas in the exRec are accepted. $\epsilon_6$, the error rate of CNOT gates across two data blocks is addressed separately. The rationale for this approach will become evident in the subsequent section. Note that each pair of locations is run $10^3$ number of times and the average is taken.
\begin{equation*}
 \alpha_5=
    \begin{pmatrix}
43 &  &  & & &  \\ 
 0&  43 &   &   &   &  \\
 0. &  176.0 &  168.0 &   &  &      \\
 176.0  &  0 &   0 &  168.0 &  &    \\
 299.2 & 298.7 & 570.0 & 572.8 & 793.4 &  \\
  49.2 &  50.0 & 100.0 & 100.5 & 251.0 & 12.8\\
   0 & 0 & 0 & 0 & 0 & 0 & 0

    \end{pmatrix}
\end{equation*}
There are some details worth commenting, 
\begin{enumerate}
    \item Notice that in $\alpha_5$, $\alpha_5(1,1)=\alpha_5(2,2)$ and $\alpha_5(3,3)=\alpha_5(4,4)$. This is no coincidence. These two pairs of locations in this exRec are in some sense "symmetrical" to each other. 
    \item We observe that $\alpha_5(2,1)=0$, this is because if the pair consists of a $Loc_1$ and a $Loc_2$. $Loc_1$ can only induce $X$-error and $Loc_2$ can only induce $Z$-error. The EC we use is capable of correcting one $X$ and one $Z$ error. Therefore they won't add up to a logical error. The same applies to other 0 entries.
    \item For the CNOT-exRec we treat $Loc_5$ and $Loc_6$ separately. If later we use CNOT-exRecs in recursive simulation or local operations, we can simply set $\epsilon=\epsilon_6$.
\end{enumerate}
If we further consider $\mathcal{O}(\epsilon^3)$ terms, if there are three faults in the circuit, this could potentially result in a logical error. To obtain a tighter upper bound, we break down the various types of locations. There are $n_5=[32, 32, 32, 32, 144, 7,0]$ locations of each type and we hope to exclude the cases where a malignant pair is among the three locations since they are accounted for in the prior term. We have Lemma~\ref{third_order_thm} for an upper bound on the constant of $\mathcal{O}(\epsilon^3)$.
The proof of the lemma is left in Appendix \ref{third-order}. This result will be used throughout the paper where similar scenarios occur. We will further mention the specification of one component, the EC gadget. For the EC gadget, for $k=1$, there are 186.4 malignant pairs and for third-order terms, applying the theorem gives an upper bound of $F=8847.5$. 
\\
Now by Bayes' rule, $\varepsilon_5^{(k)}=\frac{\varepsilon_{5,\text{joint}}^{(k)}}{\mathbb{P}(\text{ancillas accepted})}$. Since we have 8 ancilla qubits, if we let $\mathbb{P}_{|\overline{0}\rangle,\text{accept}}^{(k)}$ denote the probability of a level-$k$ $|\overline{0}\rangle$ or $|\overline{+}\rangle$ being accepted, we have
\begin{equation*}
    \varepsilon_{5}^{(k)}=\left(\mathbb{P}_{|\overline{0}\rangle,\text{accept}}^{(k)}\right)^{-8}\varepsilon_{5,\text{joint}}^{(k)}
\end{equation*}
To establish an upper bound on $\varepsilon_5^{(k)}$, we need a lower bound on $\mathbb{P}_{|\overline{0}\rangle,\text{accept}}^{(k)}$. Through simulation, we enumerate on average 10.8 fault locations that will cause rejection of the data block. So 
\begin{equation*}
1- C\varepsilon_5^{(k-1)}\leq\mathbb{P}_{|\overline{0}\rangle,\text{accept}}^{(k)}\leq1
\end{equation*}
where $C=10.8$. Putting all the above together, at $k=1$, we will have the bounds for $\varepsilon_{5}^{(1)}$,
\begin{align*}
 &1312.6\epsilon^2(1-\epsilon)^{\gamma_5-9}(1-\epsilon_6)^7\\
 &+330.9\sigma_6\epsilon^2(1-\epsilon)^{\gamma_5-8}(1-\epsilon_6)^6+12.8\sigma_6^2\epsilon^2(1-\epsilon)^{\gamma_5-7}(1-\epsilon_6)^5\\
 \leq&\varepsilon_{5}^{(1)}\\
 \leq& (1-10.8\epsilon)^{-8}\cdot \dots\\
 &\dots\bigg(1312.6\epsilon^2+(12.8\sigma_6+330.9)\sigma_6\epsilon^2 +734691.4\epsilon^3\\
 &+(57999.0\sigma_6+3079.0\sigma_6^2+13.6\sigma_6^3)\epsilon^3 \bigg)
\end{align*}
Invoking Prop \ref{prop_bound} we will obtain $A_5^{(1)}$ and $B_5^{(1)}$ such that
\begin{equation*}
    A_5^{(1)}\epsilon^2\leq \varepsilon_5^{(1)}\leq (1-C\epsilon)^{-8}B_5^{(1)}\epsilon^2
\end{equation*}
which holds for $\epsilon\leq1/B_5^{(1)}$. In particular, for $\sigma_6=1$, e.g. in local CNOT-exRec, we have $A_5^{(1)}=1431.4,B_5^{(1)}=2038.9$. In this case, we can further simplify the upper bound since $\varepsilon_5^{(1)}\leq(1-C/B_5^{(1)})^{-8}B_5^{(1)}\epsilon^2\leq D_5^{(1)}\epsilon^2$ where $D_5^{(1)}=2127.4$. We denote the lower and upper bounds by $\mu_5^{(1)}$ and $\nu_5^{(1)}$ respectively. In Appendix \ref{tight}, we compare the above level-1 bounds with the actual logical error rates obtained from simulation and the bound acquired through the procedure in \cite{AGP05}. Note that when computing the numerics above, we take into account the factors $\sigma_i$ as introduced in Section~\ref{FTQC1} except for $\sigma_6$ which we keep as a variable. We will obtain level-1 bounds for other exRecs and the corresponding MPM will be left in Appendix \ref{a_others}. Thus for $\varepsilon_{1,\text{joint}}^{(1)}$ we have,
\begin{align*}
    255.3\epsilon^2(1-\epsilon)^{\gamma_1-2}&\leq\varepsilon_{1,\text{joint}}^{(1)} \leq 255.3\epsilon^2+21381.0\epsilon^3\\
    244.6\epsilon^2&\leq \varepsilon_{1,\text{joint}}^{(1)}\leq 321.8\epsilon^2
\end{align*}
The bounds are obtained by applying Prop \ref{prop_bound}. However, for the lower bound, since the CNOT-exRec is the largest exRec in the simulating circuit, $\epsilon\leq1/D_5^{(1)}$, so the coefficient is obtained by $A-B/D_5^{(1)}$ instead. Since there are 3 ancilla verifications here, we obtain,
\begin{align*}
   244.6\epsilon^2&\leq \varepsilon_{1}^{(1)}\leq (1-C\epsilon)^{-3}\cdot321.8\epsilon^2\\
A_1^{(1)}\epsilon^2&\leq\varepsilon_1^{(1)}\leq D_1^{(1)}\epsilon^2
\end{align*}
where $A_1^{(1)}=244.6,D_1^{(1)}=327.0$. Again we denote the lower and upper bounds by $\mu_1^{(1)}$ and $\nu_1^{(1)}$. For $\varepsilon_2^{(1)}$, note that $|\overline{+}\rangle$-prep-exRec can be simply obtained from $|\overline{0}\rangle$-prep-exRec by swapping $|+\rangle$ and $|0\rangle$, meas-$X$ and meas-$Z$, therefore it will have the same logical error rate, except the distribution of $X$- and $Z$- logical error rates are now swapped. Similarly for the destructive measurements $Loc_4$ and $Loc_5$, we have
\begin{align*}
16.6\epsilon^2&\leq\varepsilon_{4}^{(1)}\leq(1-C\epsilon)^{-2}\cdot 78.8\epsilon^2\\
16.6\epsilon^2&\leq\varepsilon_{4}^{(1)}\leq79.6\epsilon^2
\end{align*}
where the lower and upper bounds are denoted by $\mu_4^{(1)}$ and $\nu_4^{(1)}$ respectively.\\

Now we hope to generalize to higher level $k$s. Since we pursue tight upper and lower bounds here, we incorporate the error probabilities of different exRecs in the computation. This also adds extra complexity when $k>1$ because, at a higher level, there is no guarantee that $\varepsilon_i^{(k)}=\sigma_i\varepsilon_5^{(k)}$ still holds. We provide a workaround for this issue. Let $A_t^{(k)}$ and $D_t^{(k)}$ denote the lower and upper bounds of second-order(in $\varepsilon_5^{(k)}$) coefficients for $Loc_t$ on $k^{\text{th}}$-level encoding, $\sigma_{L,t}^{(k)}$ and $\sigma_{U,t}^{(k)}$ denote the lower and upper bound for $\sigma_i^{(k)}$ where $\varepsilon_i^{(k)}=\sigma_i^{(k)}\varepsilon_5^{(k)}$. From the bounds and expressions above, we can generalize to the following system of equations.
\begin{align*}
    B_t^{(k+1)}=&G_2\left( \sum_{1\leq i,j\leq5}\alpha_t(i,j)\sigma_{U,i}^{(k)}\sigma_{U,j}^{(k)},\overline{F}(\mathbf{n}_t,\underline{
\mathbf{\sigma}}_L^{(k)},\underline{
\mathbf{\sigma}}_U^{(k)},\alpha_t)\right)\\
    D_t^{(k+1)}=& B_t^{(k+1)}\bigg(1-C\epsilon_0^{(k)}\bigg)^{-s}\\
    \epsilon_0^{(k+1)}&=\min_{i\leq (k+1) } 1/D_5^{(i)}\\
    A_t^{(k+1)}=&\left(\sum_{1\leq i,j\leq5}\alpha_t(i,j)\sigma_{L,i}^{(k)}\sigma_{L,j}^{(k)} \right)\left(1-n_te_0^{(k)} \right)\\
    \sigma_{L,t}^{(k+1)}&=\frac{D_t^{(k)}}{A_5^{(k)}}\\
    \sigma_{U,t}^{(k+1)}&=\frac{A_t^{(k)}}{D_5^{(k)}}\\
\end{align*}
$1\leq t\leq 7,t\neq6$ represents the fault location index, $t\neq6$ since we hope to find the local threshold here; $\overline{F}$ is almost identical to $F$ except that in the sum, every positive term is replaced by $\sigma_{U,t}^{(k)}$ and negative term is replaced by $\sigma_{L,t}^{(k)}$ to ensure $D_t^{(k+1)}$ is indeed an upper bound. $G_2$ is the upper bound function in Prop~\ref{prop_bound}; $s=3$ when $t=1,2$, $s=2$
when $t=3,4$, $s=8$ when $t=5$ and $s=4$ when $t=7$; $\alpha_t$ is the malignant pair matrix for $Loc_t$; $\epsilon_0^{(k)}$ represents the threshold value and is a non-increasing function. The equation is established by observing that the $Loc_5$-exRec is the largest exRecs in the simulation circuit. $n_t$ is the number of $(k-1)$-gadgets in a $k$-$t$-exRec. \\

The above set of equations effectively forms a discrete-variable dynamical system and fortunately, we have a good initial point $\sigma_{U,t}^{(0)}=\sigma_{L,t}^{(0)}=4/15,\space\forall 1\leq i\leq4$, $\sigma_{U,5}^{(0)}=\sigma_{L,5}^{(0)}=1, \epsilon_0^{(0)}=1/D_5^{(1)}$ and $\sigma_{U,7}^{(0)}=\sigma_{L,7}^{(0)}=4/5$. We then apply the fixed-point iteration method to find the non-trivial fixed-point ($\sigma_5=1$ will not change through iteration). 
\begin{align*}
    \sigma_{U,1}^*&=\sigma_{U,2}^*=0.302,\\
    \sigma_{U,3}^*&=\sigma_{U,4}^*=0.0643,\\
    &\sigma_{U,7}^*=0.540,\\
    \sigma_{L,1}^*&=\sigma_{L,2}^*=9.77\times10^{-2},\\
    \sigma_{L,3}^*&=\sigma_{L,4}^*=0,\\
    &\sigma_{L,7}^*=0.166.
\end{align*}
And the corresponding coefficients are
\begin{align*}
A_1^*=178.6, & \ A_4^*=0, \space A_5^*=979.7,\space A_7^*=302.8,\\
D_1^*=295.8, & \ D_4^*=63.0,\space D_5^*=1827.1,\space D_7^*=529.4,\\
\epsilon_0^*&=1/2129.4.
\end{align*}
Thus $\epsilon_0=1/2129.4\approx 4.70\times10^{-4}$ will be taken as the threshold value. The fixed point can be checked by plugging into the original equations. Through analyzing the Jacobian at the initial point, we found that the largest eigenvalue is less than 1, which indicates this is a stable fixed point. The numerical simulation of the behaviour of convergence is left in Appendix \ref{sys_eqn}. Having obtained $D_t^{(k)}$'s, we can establish the following recursive relation for $\varepsilon_5^{(k)}$
\begin{equation*}
    A_5^{(k)}\left( \varepsilon_5^{(k-1)}\right)^2\leq\varepsilon_5^{(k)}\leq D_5^{(k)}\left( \varepsilon_5^{(k-1)}\right)^2.
\end{equation*}
\section{\texorpdfstring{Bounds on $\mathbb{P}(\text{exRec is bad})$}{Bounds on P(exRec is bad)}}
\section{Bound on third-order terms}\label{third-order}
In this section, we provide proof for Lemma \ref{third_order_thm}. We would convert this to a graph theory problem. To prove the theorem, we will first consider the case where there are two types of locations and then generalize to $n$ locations. Let $G=(V,E)$ denote a graph, $G_{s}=(V_s,E_s)$ and $G_{t}=(V_t,E_t)$ denote two random subgraphs of $G$ for two types of locations $Loc_s$ and $Loc_t$ respectively, where $V_s,V_t$ are the locations and we have $V=V_s\cup V_t$. Let $n_s=|V(G_s)|$ and $n_t=|V(G_t)|$, $E_s,E_t$ are determined by adjacency matrices $A\in\{0,1\}^{n_s\times n_s}$ and $B\in\{0,1\}^{n_t\times n_t}$. We further define an induced subgraph with adjacency matrix $C$ where vertices of this subgraph are $V_s\cup V_t$ and the edges are $E\setminus(E_s\cup E_t)$. Now each entry in $A,B,C$ is a Bernoulli variable such that 
\begin{equation*}
    A_{ij}=\begin{cases}0,&\space\text{with probability $a_{ij}$ if the $i$-th and $j$-th}\\
    & \text{ $Loc_s$ form a malignant pair }\\1,&\space\text{ otherwise} \end{cases}
\end{equation*}
similar for $B$ and $C$ with probability matrices $b$ and $c$ except for $C$ we have $C_{ij}=0$ when the $i$-th $Loc_s$ and the $j$-th $Loc_t$ form a malignant pair. Now we wish to obtain the number of $K_3$(triangle) in $G$. This is equivalent to the number we hope to obtain because if $G$ is a complete graph, then the number of $K_3$ is $\binom{n_s+n_t}{3}$, all possible combinations. Now if $\{v_i,v_j\}$ is a malignant pair and we exclude all triples containing $\{v_i,v_j\}$, this is identical to finding number of triangles in $G'=(V,E\setminus\{v_i,v_j\})$. And from the definition of $\alpha$ we also have $\alpha_{ss}=\sum_{ij\in\binom{V_s}{2}}a_{ij}$, $\alpha_{tt}=\sum_{ij\in\binom{V_t}{2}}b_{ij}$ and $\alpha_{st}=\sum_{i\in V_s,j\in V_t}c_{ij}$. Since $G$ is a random graph, we compute the expectation to obtain a fair estimate. To this end, we recall Hölder's inequality for expectations, for non-negative random variables $X,Y$, for $p$ and $q$ satisfying $\frac{1}{p}+\frac{1}{q}=1$,
\begin{equation}
    \mathbb{E}(XY)\leq (\mathbb{E}(X^p))^{\frac{1}{p}}(\mathbb{E}(Y^q))^{\frac{1}{q}}
    \label{holder}
\end{equation}
Based on this we can immediately generalize to the following proposition,
\begin{proposition}
    Suppose $X,Y,Z$ are three nonnegative random variables, then
    \begin{equation}
        \mathbb{E}(XYZ)\leq \sqrt[3]{\mathbb{E}(X^3)\mathbb{E}(Y^3)\mathbb{E}(Z^3)}
        \label{3_var}
    \end{equation}
\end{proposition}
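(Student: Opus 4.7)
The plan is to derive the three-variable inequality by iterating the two-variable Hölder's inequality \eqref{holder} that the authors have just recalled. The key idea is to bundle $X$ and $Y$ together, treat $XY$ as a single nonnegative random variable, and split it off from $Z$ with a well-chosen conjugate pair, then unbundle $XY$ using Hölder a second time with the symmetric pair $p=q=2$.

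Concretely, I would first apply \eqref{holder} to the pair $(XY, Z)$ with exponents $p = 3/2$ and $q = 3$ (which satisfy $1/p + 1/q = 1$), obtaining
\begin{equation*}
    \mathbb{E}(XYZ) \;\leq\; \bigl(\mathbb{E}((XY)^{3/2})\bigr)^{2/3}\bigl(\mathbb{E}(Z^3)\bigr)^{1/3}.
\end{equation*}
Next I would apply \eqref{holder} again to the pair $(X^{3/2}, Y^{3/2})$ with $p = q = 2$ to get
\begin{equation*}
    \mathbb{E}(X^{3/2} Y^{3/2}) \;\leq\; \bigl(\mathbb{E}(X^3)\bigr)^{1/2}\bigl(\mathbb{E}(Y^3)\bigr)^{1/2}.
\end{equation*}
Raising both sides of this second inequality to the power $2/3$ and substituting into the first yields
\begin{equation*}
    \mathbb{E}(XYZ) \;\leq\; \bigl(\mathbb{E}(X^3)\bigr)^{1/3}\bigl(\mathbb{E}(Y^3)\bigr)^{1/3}\bigl(\mathbb{E}(Z^3)\bigr)^{1/3},
\end{equation*}
which is exactly \eqref{3_var}.

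There is no real obstacle here: the proof is a routine two-step bootstrapping of Hölder's inequality, and the nonnegativity of $X,Y,Z$ makes the fractional powers well-defined so no absolute values or sign considerations intrude. The only thing to double-check is the bookkeeping on the exponents, namely that $(3/2,3)$ and $(2,2)$ are both valid conjugate pairs and that the final product of exponents lands at $1/3$ for each of the three factors, which it does. An alternative route would be to invoke the general Hölder inequality for $n$ variables with $\sum 1/p_i = 1$ directly and take $p_1=p_2=p_3=3$, but since the excerpt only states the two-variable version \eqref{holder}, the iterated derivation above is the cleanest self-contained argument.
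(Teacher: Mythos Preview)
Your proposal is correct and matches the paper's proof essentially step for step: the paper also sets $S=XY$, applies \eqref{holder} with $p=3/2$, $q=3$ to the pair $(XY,Z)$, and then applies \eqref{holder} again with $p=q=2$ to $X^{3/2}Y^{3/2}$.
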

\begin{proof}
    We first treat $S=XY$ and invoke Eqn \ref{holder} for $p=\frac{3}{2}$ and $q=3$ to obtain $\mathbb{E}(XYZ)\leq(\mathbb{E}(XY)^{3/2})^{2/3}\mathbb{E}(Z^3)^{\frac{1}{3}}=(\mathbb{E}(X^{3/2}Y^{3/2}))^{2/3}\mathbb{E}(Z^3)^{\frac{1}{3}}$. Now we apply Eqn \ref{holder} again on the first term on RHS with $p=q=2$ to show the result.
\end{proof}
Having this we can show the lemma below,
\begin{lemma}
    Let $G=(V,E)$ be constructed from $G_1$ and $G_2$ as described above, let $T$ be the random variable of the number of triangles in $G$ and $\alpha$ be the malignant pair matrix, then
    \begin{equation}
        \mathbb{E}(T)\leq f_{ss}+f_{tt}+f_{st}
    \end{equation}
    where $f_{ss}=\binom{n_s}{3}\sigma_s^3-\frac{1}{3}(n_s-2)\sigma_s\alpha_{ss}$, $f_{tt}=\binom{n_t}{3}\sigma_t^3-\frac{1}{3}(n_t-2)\sigma_t\alpha_{tt}$ and $f_{st}=\binom{n_s}{2}\cdot n_t\cdot\sigma_s^2\sigma_t+\binom{n_t}{2}\cdot n_1\cdot\sigma_s\sigma_t^2-\frac{1}{3}(n_s-1)\sigma_t\alpha_{st}-\frac{1}{3}(n_t-1)\sigma_s\alpha_{st}$
\end{lemma}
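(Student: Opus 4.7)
My plan is to bound $\mathbb{E}(T)$ by decomposing the triangle count over the type patterns of its three vertices and then controlling each piece via the three-variable H\"older inequality in Proposition~\ref{3_var}. Concretely, I would write $T = T_{sss} + T_{sst} + T_{stt} + T_{ttt}$, where $T_{xyz}$ counts triangles with one vertex each of types $x,y,z$ (with repetitions when types coincide), so that linearity of expectation reduces the problem to bounding each of these four pieces separately.

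For the homogeneous piece I would express
\[
T_{sss} \;=\; \frac{1}{6}\sum_{\substack{i,j,k\in V_s\\ \text{distinct}}} A_{ij}A_{jk}A_{ik}
\]
and apply Proposition~\ref{3_var} to each summand, using $A_{ij}^3 = A_{ij}$ since $A_{ij}\in\{0,1\}$. Each factor $\mathbb{E}(A_{ij})$ equals $1$ for a non-malignant pair and is suppressed for a malignant one, with the vertex-level weights in the sum giving the appropriate factor of $\sigma_s$ per vertex. After grouping, the bulk of the $\binom{n_s}{3}$ triples contributes the positive $\binom{n_s}{3}\sigma_s^3$ term, while each of the $\alpha_{ss}$ malignant pairs participates in exactly $n_s-2$ triangles, producing the correction $-\tfrac{1}{3}(n_s-2)\sigma_s\alpha_{ss}$. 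The $\tfrac{1}{3}$ here emerges from the cube root in the H\"older bound, which effectively distributes the ``missing edge'' penalty evenly across the three edges of each triangle.

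The mixed pieces $T_{sst}$ and $T_{stt}$ are handled by the same template. For $T_{sst}$ there are $\binom{n_s}{2}n_t$ candidate triples with combined bulk weight $\binom{n_s}{2}n_t\sigma_s^2\sigma_t$, and each malignant cross-pair of type $(s,t)$ lies in $n_s-1$ such triples, producing the $-\tfrac{1}{3}(n_s-1)\sigma_t\alpha_{st}$ term; the symmetric argument for $T_{stt}$ contributes the remaining $\binom{n_t}{2}n_s\sigma_s\sigma_t^2 - \tfrac{1}{3}(n_t-1)\sigma_s\alpha_{st}$ pieces of $f_{st}$. Summing the four contributions yields $\mathbb{E}(T)\leq f_{ss}+f_{tt}+f_{st}$.

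The delicate step, which I expect to be the main obstacle, is ensuring that triangles containing more than one malignant pair are not penalised too harshly by these subtractions; the $\tfrac{1}{3}$ factor should absorb the overlap exactly. I would make this rigorous by recording each missing edge of a triangle with weight $\tfrac{1}{3}$, so that every triangle's deficit is accounted for exactly once regardless of how many of its edges are malignant. The generalisation to $n$ types, hinted at in the main-text formulation, then follows by iterating the same decomposition over all vertex-type compositions.
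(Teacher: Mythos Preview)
Your approach is essentially the same as the paper's: decompose by vertex-type pattern, apply the three-variable H\"older bound of Proposition~\ref{3_var} together with $A_{ij}^3=A_{ij}$, and then linearise. The one step you leave implicit is how to pass from the cube root $\sqrt[3]{(1-a_{ij})(1-a_{jk})(1-a_{ki})}$ to the linear penalty $1-\tfrac{1}{3}(a_{ij}+a_{jk}+a_{ki})$; in the paper this is simply the AM--GM inequality, and naming it would remove the vagueness around your phrase ``recording each missing edge with weight $\tfrac{1}{3}$''. It also clarifies that the overlap is handled only as an upper bound, not exactly: a triangle with two fully malignant edges contributes $0$ but is bounded by $\tfrac{1}{3}$, so your worry about over-penalisation is unfounded rather than ``exactly absorbed''. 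Finally, in the mixed cases $T_{sst}$ and $T_{stt}$ the paper discards the within-type contributions $a_{ik}$ and $b_{jk}$ after AM--GM before summing, which is why only $\alpha_{st}$ (and not $\alpha_{ss},\alpha_{tt}$) survives in $f_{st}$; your sketch needs the same drop to arrive at the stated expression.
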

\begin{proof}
    Let's denote the three vertices of the triangle by $v_i,v_j$ and $v_k$. We consider different cases according to where the vertices lie in:
    \begin{enumerate}
        \item We first consider the case where they are all in $G_s$, so they form a triangle if and only if $A_{ij}A_{jk}A_{ki}=1$. We shall also take into account the various error rates $\epsilon_i$.
        Therefore to compute the expectation,
        \begin{align*}
\mathbb{E}(T_{G_s})&=\sigma_s^3\mathbb{E}\left[\sum_{\{i,j,k\}\in\binom{V_s}{3}}A_{ij}A_{jk}A_{ki}\right]\\
    &=\sigma_s^3\sum_{\{i,j,k\}\in\binom{V_s}{3}}\mathbb{E}[A_{ij}A_{jk}A_{ki}]\\
        &\leq\sigma_s^3\sum_{\{i,j,k\}\in\binom{V_s}{3}} \sqrt[3]{\mathbb{E}(A_{ij}^3)\mathbb{E}(A_{jk}^3)\mathbb{E}(A_{ki}^3)}
        \end{align*}
    where the inequality follows from \ref{3_var}. Since $A_{ij}$'s are Bernoulli random variables, $\mathbb{E}(A_{ij}^m)=\mathbb{E}(A_{ij})$  $\forall m\in\mathbb{N}$, thus
    \begin{align*}
        \mathbb{E}(T_{G_s})&\leq \sigma_s^3 \sum_{\{i,j,k\}\in\binom{V_s}{3}}\sqrt[3]{\mathbb{E}(A_{ij})\mathbb{E}(A_{jk})\mathbb{E}(A_{ki})}\\
        &=\sigma_s^3\sum_{\{i,j,k\}\in\binom{V_s}{3}}\sqrt[3]{(1-a_{ij})(1-a_{jk})(1-a_{ki})}\\
        &\leq\sigma_s^3\sum_{\{i,j,k\}\in\binom{V_s}{3}}\frac{3-a_{ij}-a_{jk}-a_{ki}}{3}\\
        &=\binom{n_s}{3}\sigma_s^3-\frac{(n_s-2)\sigma_s^3}{3}\sum_{ij\in\binom{V_s}{2}}a_{ij}\\
        &=\binom{n_s}{3}\sigma_s^3-\frac{1}{3}(n_s-2)\sigma_s\alpha_{ss}
    \end{align*}
    where the second inequality follows by AM-GM inequality on each term and we only have $\sigma_s$ of order 1 in the final term because in the simulation of $\alpha_{ss}$ we would have already included $\sigma_s^2$.
    \item The same procedure holds when $v_i,v_j,v_k\in V_t$ and we have 
    \begin{equation*}
        \mathbb{E}(T_{G_t})=\binom{n_t}{3}\sigma_t^3-\frac{1}{3}(n_t-2)\sigma_t\alpha_{tt}
    \end{equation*}
    \item Now we consider the case where two vertices of the triangle are in one part while the third one is in the other. We discuss two separate cases. The first one is when $v_i,v_k\in V_1$ and $v_j\in V_2$, so following a similar procedure, the expected number of triangles will be upper bounded by
    \begin{align*}
        &\sigma_s^2\sigma_t\sum_{\{i,k\}\in V_s,\space j\in V_t}\sqrt[3]{\mathbb{E}(C_{ij})\mathbb{E}(C_{jk})\mathbb{E}(A_{ik})}\\
        =&\sigma_s^2\sigma_t\sum_{\{i,k\}\in V_s,\space j\in V_t}\sqrt[3]{(1-c_{ij})(1-c_{jk})(1-a_{ik})}\\
        \leq&\binom{n_s}{2}\cdot n_t\cdot\sigma_s^2\sigma_t-\sigma_s^2\sigma_t\sum_{\{i,k\}\in V_s,\space j\in V_t}\frac{c_{ij}+c_{jk}+a_{ik}}{3}
    \end{align*}
    Similarly if $v_i\in V_s$ and $v_j,v_k\in V_t$ we have an upper bound $\binom{n_t}{2}\cdot n_s\cdot \sigma_s\sigma_t^2-\sigma_s\sigma_t^2\sum_{i\in V_s,\{j,k\}\in V_t}\frac{c_{ij}+c_{ik}+b_{jk}}{3}$. Now let $T'$ denote the random variable of the number of triangles across $G_s$ and $G_t$, 
    we can add up these two upper bounds to obtain
    \begin{align*}
        \mathbb{E}(T')\leq&\binom{n_s}{2}\cdot n_t\cdot\sigma_s^2\sigma_t-\sigma_s^2\sigma_t\sum_{\{i,k\}\in V_s,\space j\in V_t}\frac{c_{ij}+c_{jk}+a_{ik}}{3}\\
        &+\binom{n_t}{2}\cdot n_s\cdot\sigma_s\sigma_t^2-\sigma_s\sigma_t^2\sum_{i\in V_s,\{j,k\}\in V_t}\frac{c_{ij}+c_{ik}+b_{jk}}{3}\\
        \leq&\binom{n_s}{2}\cdot n_t\cdot\sigma_s^2\sigma_t+\binom{n_t}{2}\cdot n_s\cdot\sigma_s\sigma_t^2\\
        &-\sigma_s^2\sigma_t\sum_{\{i,k\}\in V_s,\space j\in V_t}\frac{c_{ij}+c_{jk}}{3}-\sigma_s\sigma_t^2\sum_{i\in V_s,\{j,k\}\in V_t}\frac{c_{ij}+c_{ik}}{3}\\
        =&\binom{n_s}{2}\cdot n_t\cdot\sigma_s^2\sigma_t+\binom{n_t}{2}\cdot n_s\cdot\sigma_s\sigma_t^2\\
        &-\frac{1}{3}(n_s-1)\sigma_s\alpha_{st}-\frac{1}{3}(n_t-1)\sigma_t\alpha_{st}
    \end{align*}
    \end{enumerate}
    Having analyzed the different cases, the desired bound is obtained by adding them together.
\end{proof}
Applying the lemma inductively would give the desired result.\\
\section{Conversion to second-order bounds}
The following proposition is a modification of a derivation in \cite{AGP05}.
\begin{proposition}
    Let $\epsilon_{(k)}=\mathbb{P}(k\text{-exRec not well-behaved})$, if 
    \begin{equation*}
    A_1(\epsilon^{(k-1)})^2-B_1(\epsilon^{(k-1)})^3\leq\epsilon^{(k)}\leq A_2(\epsilon^{(k-1)})^2+B_2(\epsilon^{(k-1)})^3
    \end{equation*}
then we can bound $\epsilon^{(k)}$ with terms second order in $\epsilon^{(k-1)}$
\begin{equation*}
        A_1'(\epsilon^{(k-1)})^2\leq\epsilon^{(k)}\leq A_2'(\epsilon^{(k-1)})^2
\end{equation*}
with $A_2'= \frac{1}{2}A_2(1+\sqrt{1+4B_2/A_2^2})$ and $A_1'=A_1-B_1/A_2'$, which holds for $\epsilon^{(k-1)}\leq 1/A_2'$.
\label{prop_bound}
\end{proposition}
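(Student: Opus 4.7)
The plan is to absorb the cubic tail in each bound into a slightly enlarged quadratic coefficient, using a self-consistent choice of the domain on which the resulting bound is valid. Concretely, I seek an $A_2'$ so that
\[ A_2(\epsilon^{(k-1)})^2 + B_2(\epsilon^{(k-1)})^3 \leq A_2'(\epsilon^{(k-1)})^2 \]
holds whenever $\epsilon^{(k-1)} \leq 1/A_2'$. Dividing through by $(\epsilon^{(k-1)})^2$ reduces this to $A_2 + B_2\,\epsilon^{(k-1)} \leq A_2'$, and since the left-hand side is monotone increasing in $\epsilon^{(k-1)}$ the worst case on the proposed domain is $\epsilon^{(k-1)} = 1/A_2'$. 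This yields the fixed-point condition $A_2 + B_2/A_2' \leq A_2'$, equivalently $(A_2')^2 - A_2 A_2' - B_2 \geq 0$. Taking the positive root of the corresponding quadratic gives the minimal $A_2'$ that suffices, namely $\tfrac{1}{2}A_2\bigl(1 + \sqrt{1 + 4B_2/A_2^2}\bigr)$, matching the claim.

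With $A_2'$ pinned down by the upper bound, the lower bound follows almost immediately. I factor $A_1(\epsilon^{(k-1)})^2 - B_1(\epsilon^{(k-1)})^3 = (A_1 - B_1\,\epsilon^{(k-1)})(\epsilon^{(k-1)})^2$, and on the same domain $\epsilon^{(k-1)} \leq 1/A_2'$ I replace $\epsilon^{(k-1)}$ by its maximal value inside the factor $A_1 - B_1\,\epsilon^{(k-1)}$. This lower-bounds the factor by $A_1 - B_1/A_2'$, giving the stated $A_1' = A_1 - B_1/A_2'$. The only subtle point is to confirm that choosing the \emph{smaller} admissible $A_2'$ is legitimate: any larger value would also absorb the cubic correction, but it would shrink the domain $[0,1/A_2']$ and also shrink $A_1'$, so the positive root of the quadratic is the right choice for producing the tightest bound.

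I do not expect any real obstacle; the argument is essentially a self-consistency calculation around a single quadratic. The one thing worth flagging for downstream applications (though it is not part of the statement) is that one must separately verify $A_1' > 0$ whenever the proposition is invoked, since the proposition itself does not guarantee a non-trivial lower bound and $B_1/A_2'$ could in principle exceed $A_1$ if the original lower bound was very loose.
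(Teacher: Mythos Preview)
Your argument is correct and is exactly the standard self-consistency derivation: solve the fixed-point condition $A_2 + B_2/A_2' = A_2'$ for the upper bound, then use the resulting domain $\epsilon^{(k-1)}\le 1/A_2'$ to bound the cubic correction in the lower bound. The paper does not actually supply its own proof of this proposition---it only states the result and cites \cite{AGP05} for the derivation---so there is nothing further to compare.
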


\section{System of equations for the error bounds}\label{sys_eqn}
We will plot the convergence behaviour of $A_5^{(k)}$ and $D_5^{(k)}$ over 10 iterations
\begin{figure}[H]
    \centering
    \includegraphics[width=0.6\linewidth]{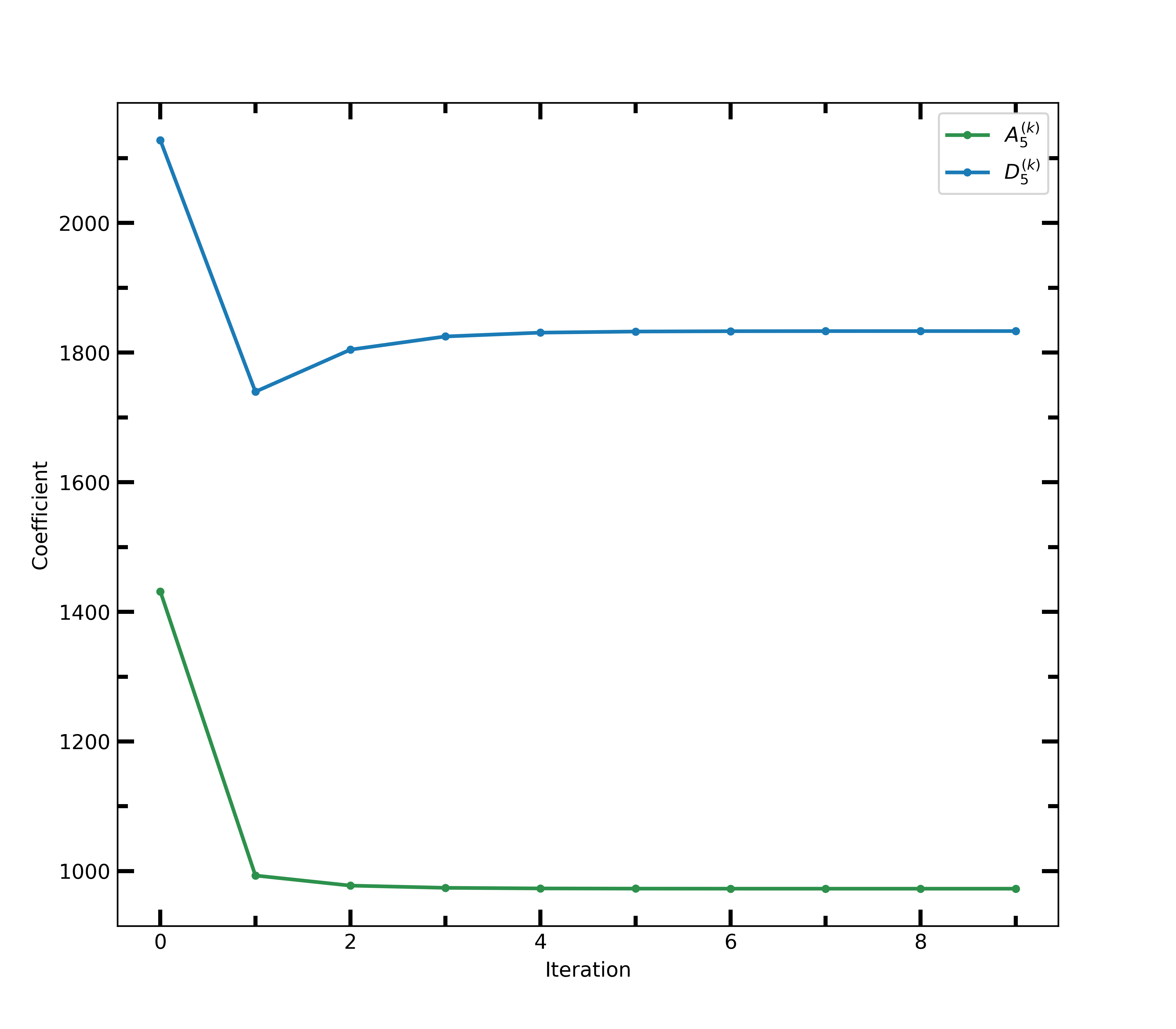}
    \label{Coefficient}
\end{figure}
We can see that the sequences converge just after 5 iterations. Also we plot the behaviour of $\sigma_{U,1}^{(k)}$ and $\sigma_{U,3}^{(k)}$ 
\begin{figure}[H]
    \centering
    \includegraphics[width=0.6\linewidth]{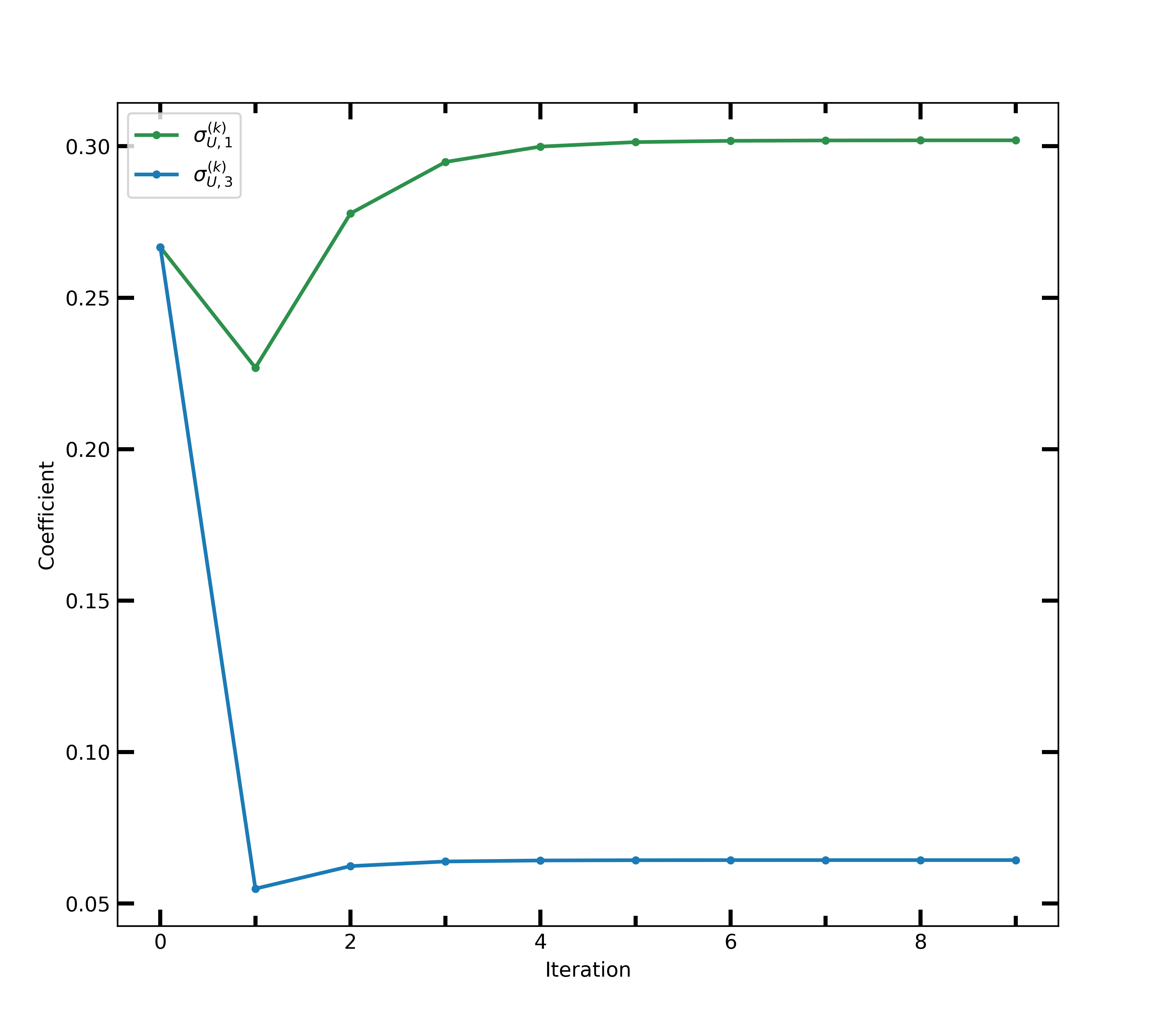}
    \label{sigma}
\end{figure}
$\forall k\geq1$, 
\begin{itemize}
    \item When $\epsilon_0^{(k+1)}=\min_{i\leq (k+1) } \left\{1/D_5^{(i)}\right\}$, which is used when only local exRecs are involved. $\sigma_{U,1}^{\max}=\sigma_{U,2}^{\max}=0.302,\space\sigma_{U,3}^{\max}=\sigma_{U,4}^{\max}=0.0643,\space\sigma_{U,7}^{\max}=0.540$ and $\sigma_{L,1}^{\min}=\sigma_{L,2}^{\min}=9.77\times10^{-2},\sigma_{U,3}^{\max}=\sigma_{L,4}^{\max}=0,\sigma_{L,7}^{\min}=0.166$. 
    \item When $\epsilon_0^{(k+1)}=\min_{i\leq (k+1) } \left\{1/D_5^{(i)}, 1/D_6^{(i)}\right\}$, that is in \textit{Direct Encoding}, when $\epsilon_6$ is involved. The initial point of $\sigma_6=2.09$ and $\epsilon\leq\epsilon_0'=2.25\times10^{-4}$. Then $\sigma_{U,1}^{\max}=\sigma_{U,2}^{\max}=0.269,\space\sigma_{U,3}^{\max}=\sigma_{U,4}^{\max}=0.0587,\space\sigma_{U,6}=1.92,\space\sigma_{U,7}^{\max}=0.469$ and $\sigma_{L,1}^{\min}=\sigma_{L,2}^{\min}=0.105,\sigma_{U,3}^{\max}=\sigma_{L,4}^{\max}=0,\sigma_{L,6}^{\min}=0.538,\space\sigma_{L,7}^{\min}=0.182$.
\end{itemize}

\section{Tightness of bounds}\label{tight}
In this section, we compare the theoretical bounds for the CNOT-exRec derived in the main text to the error rates obtained through numerical simulations for the case of $k=1$. Additionally, we plot the upper bound calculated using the procedure outlined in \cite{AGP05}, but applied to our error correction gadget.
For the simulation the results are obtained for $10^7$ runs, with $\epsilon\leq\epsilon_0$. 
\begin{figure}[H]
    \centering
    \includegraphics[width=0.6\linewidth]{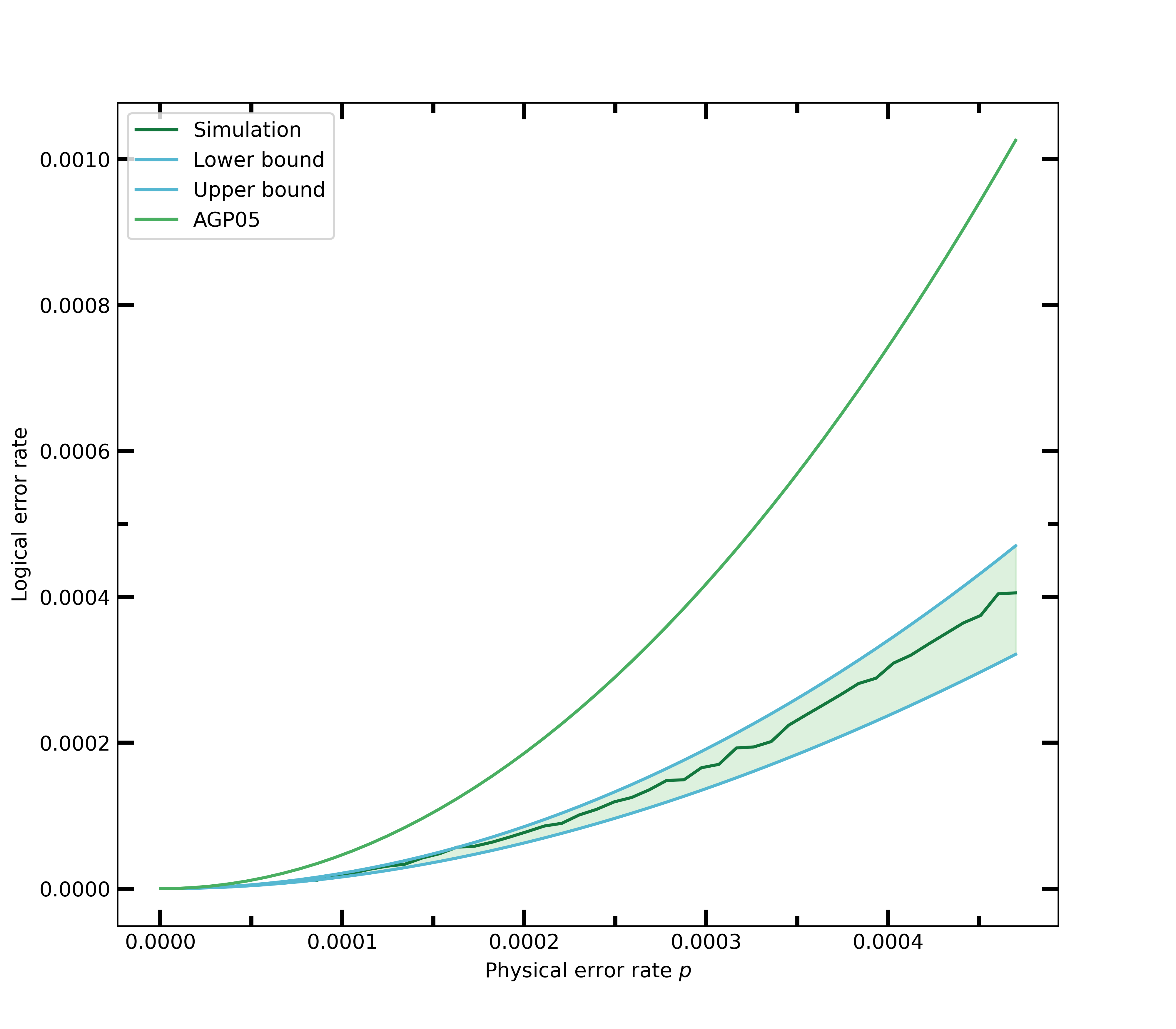}
    \caption{Comparison of bounds on logical error rate of CNOT-exRec}
    \label{fig:enter-label}
    \label{tightness}
\end{figure}
By plotting the curves we are able to examine the tightness of bounds. We observe that our bounds at $k=1$ are decent estimates of the true value. Notably, at $\epsilon=\epsilon_0$, the upper bound shows a $54\%$ improvement over the original bound, thereby affirming the credibility of our bounds even as we extend the generalization to higher concatenation levels. This substantiates the significance of our subsequent comparison of the bounds of the two methods. 
\section{Interface+EPP error analysis}
\subsection{Proof of Interface }
\begin{proof}[Proof of Lem~\ref{EpEnc}]
    To compute $\mathbb{P}(\text{Enc}_l\text{ bad})$, we note that in the interface there are also ancilla state verifications, let $H$ denote the instances that $|\Omega\rangle^{(k)}$ accepted $\forall k\leq l-1$ given that $|\overline{0}\rangle^{(k)}$'s are accepted, what we really want is actually $\mathbb{P}(\text{Enc}_l\text{ bad}|H)$. To compute this conditional probability we will compute $\mathbb{P}(\text{Enc}_l\text{ bad}\wedge H)$ first.\\
    An interface consists of the teleportation(43 locations) and the EC gadget(68 locations). We shall proceed with Enc$_{0\rightarrow1}$ first. By simulation, the teleportation circuit has on average $[0,1,1,1,2,0,0]$ locations(e.g. the second entry means 1 $Loc_2$-fault location) that will cause Enc$_{0\rightarrow1}$ to have a bad encoded state. If we encode to level-$k$, since we have $\text{Enc}_k=\text{Enc}_{(k-1)\rightarrow k}\circ\dots\circ\text{Enc}_{1\rightarrow2}\circ\text{Enc}_{0\rightarrow1}$ and by applying the union bound, the sum of the first-order(in $\varepsilon_5^{(k)}$) terms will be upper bounded by
    \begin{align*}
        \leq&\sum_{k=0}^{l-1}\left(\varepsilon_2^{(k)}+\varepsilon_3^{(k)}+\varepsilon_4^{(k)}+2\varepsilon_5^{(k)} \right)\\
        =&\sum_{k=0}^{l-1}\left(\sigma_2^{(k)}+2\sigma_3^{(k)}+2 \right)\varepsilon_5^{(k)}\\
    \end{align*}
    From Appendix \ref{sys_eqn} we know that $\sigma_i^{(k)}$ is bounded above and we numerically obtain the values. Therefore we arrive at the upper bound,
    \begin{equation*}
        2.8\epsilon+2.43\sum_{k=1}^{\infty}\nu_5^{(k)}
    \end{equation*}
    where we replaced the finite sum with an infinite sum to encompass all $k\in\mathbb{N}$. The convergence can be seen by noticing that $\nu_5^{(k)}\leq\epsilon_0\left(\epsilon/\epsilon_0 \right)^{2^k}$, $\forall k\geq2$, the infinite sum is upper bounded by a convergent geometric series and thus it's also convergent when $\epsilon<\epsilon_0$ \footnote{This infinite series is also known as lacunary series, which has no simply closed-form expression, an expression derived from Fourier transform can be found \cite{stackexchange}.}. For the sake of simplicity and to obtain a tight bound while maintaining the current threshold value, we resort to numerics. Let us denote this converging limit as $\rho_1(\epsilon)$, 
 For example, if $\epsilon=\epsilon_0$, $\rho_1(\epsilon)\approx1.27\times10^{-3}$. In summary, for first-order terms, we have the upper bound $2.8\epsilon+2.43\rho_1(\epsilon)$.\\
 
    If two locations have faults, we again have the MPM $\alpha_{in}$ as in Appendix \ref{a_others}. Similar to before, we have the following upper bound
    \begin{equation*}
        \sum_{k=0}^{l-1}\sum_{i,j}\alpha_{in}(i,j)\nu_i^{(k)}\nu_j^{(k)}+\overline{F}(\mathbf{n}_{in},\underline{\sigma}_L^{(k)},\underline{\sigma}_U^{(k)},\alpha_{in})\left( \nu_5^{(k)}\right)^3
    \end{equation*}
    where $\mathbf{n}_{in}=[14,12,9,11,58,0,7]$. In $\overline{F}$ we may replace $\underline{\sigma}_L^{(k)}$ and $\underline{\sigma}_U^{(k)}$ by $\underline{\sigma}_U^{\sup}$ and $\underline{\sigma}_L^{\inf}$. In this case, $\overline{F}$ evaluates to a constant 36372.3 $\forall k\geq1$ and 44437.8 when $k=0$. For the first term, we can also simplify, 
    \begin{align*}
        &\sum_{k=0}^{l-1}\sum_{i,j}\alpha_{in}(i,j)\nu_i^{(k)}\nu_j^{(k)}\\
        \leq&557.8\epsilon^2+\sum_{i,j}\alpha_{in}(i,j)\sigma_{U,i}^{\max}\sigma_{U,j}^{\max}\sum_{k=1}^{\infty}\left( \nu_5^{(k)}\right)^2\\
        =& 557.8\epsilon^2+451.6\sum_{k=1}^{\infty}\left( \nu_5^{(k)}\right)^2 
    \end{align*}
    Thus by summing up the results above and applying Prop~\ref{prop_bound} we obtain an upper bound for the second-order terms
    \begin{align*}
        \leq628.5\epsilon^2+451.6\sum_{k=1}^{\infty}\left( \nu_5^{(k)}\right)^2+36372.3\sum_{k=1}^{\infty}\left( \nu_5^{(k)}\right)^3
    \end{align*} 
    By the simple observation that $\sum_{k=1}^{\infty}a_i^k\leq\left(\sum_{k=1}^{\infty}a_i \right)^k$ for $a_i\geq0$ $\forall i$, we obtain,
    \begin{align*}
        &\mathbb{P}(\text{Enc}_l\text{ bad}\wedge H)\\        \leq&2.8\epsilon+2.43\rho_1(\epsilon)+628.5\epsilon^2+521.3\rho_1(\epsilon)^2
    \end{align*}
    Now we compute a lower bound on $\mathbb{P}(H)$, as the ancillary states $|\Omega\rangle^{(k)}$ are independent.
    \begin{align*}
        \mathbb{P}(H)&=\sum_{k=0}^{l-1}\mathbb{P}\left(|\Omega^{(k)}\rangle\text{ accepted}||\overline{0}^{(k)}\rangle\text{ accepted}\right)\\
        &=\prod_{k=0}^{l-1}\frac{\mathbb{P}\left(|\Omega^{(k)}\rangle\text{ accepted}\wedge|\overline{0}^{(k)}\rangle\text{ accepted}\right)}{\mathbb{P}\left(|\overline{0}^{(k)}\rangle\text{ accepted}\right)}\\
        &\geq \prod_{k=0}^{l-1}\mathbb{P}\left(|\Omega^{(k)}\rangle\text{ accepted}\wedge|\overline{0}^{(k)}\rangle\text{ accepted}\right)\\
        &\geq \prod_{k=0}^{l-1}\left(1-C_1\nu_5^{(k)} \right)
    \end{align*}
    where $C_1=18.1$. We will now need to upper bound $\mathbb{P}(H)^{-1}$,
    \begin{align*}
        &\prod_{k=0}^{l-1}\left(1-C_1\nu_5^{(k)} \right)^{-1}\\
        \leq&\prod_{k=0}^{\infty}\left(1-C_1\nu_5^{(k)} \right)^{-1}\\
        =&\exp\left(-\sum_{k=0}^{\infty}\log\left(1-C_1\nu_5^{(k)} \right)\right)\\
        \leq& \exp\left(\sum_{k=0}^{\infty}\frac{C_1\nu_5^{(k)}}{\sqrt{1-C_1\nu_5^{(k)}}} \right)\\
        \leq& \exp\left(\sum_{k=0}^{\infty}1.0043\cdot C_1\nu_5^{(k)}\right)\\
        =&\exp\left(1.0043\cdot C_1\left(\epsilon+\rho_1(\epsilon) \right)\right)
    \end{align*}
    where the second inequality follows from the fact that $\forall x\in(0,1)$, $\log(1-x)\geq\frac{-x}{\sqrt{1-x}}$ and the third follows from that $\nu_5^{(k)}\leq\epsilon_0$ $\forall k\geq1$. Combining the above results gives us the Lemma.
\end{proof}
\section{Interface error analysis}\label{error_type}
We will first plot the upper bound of $\mathbb{P}(\text{Enc}_l \text{ bad})$ with respect to $\epsilon$ below
\begin{figure}[H]
    \centering
    \includegraphics[width=0.6\linewidth]{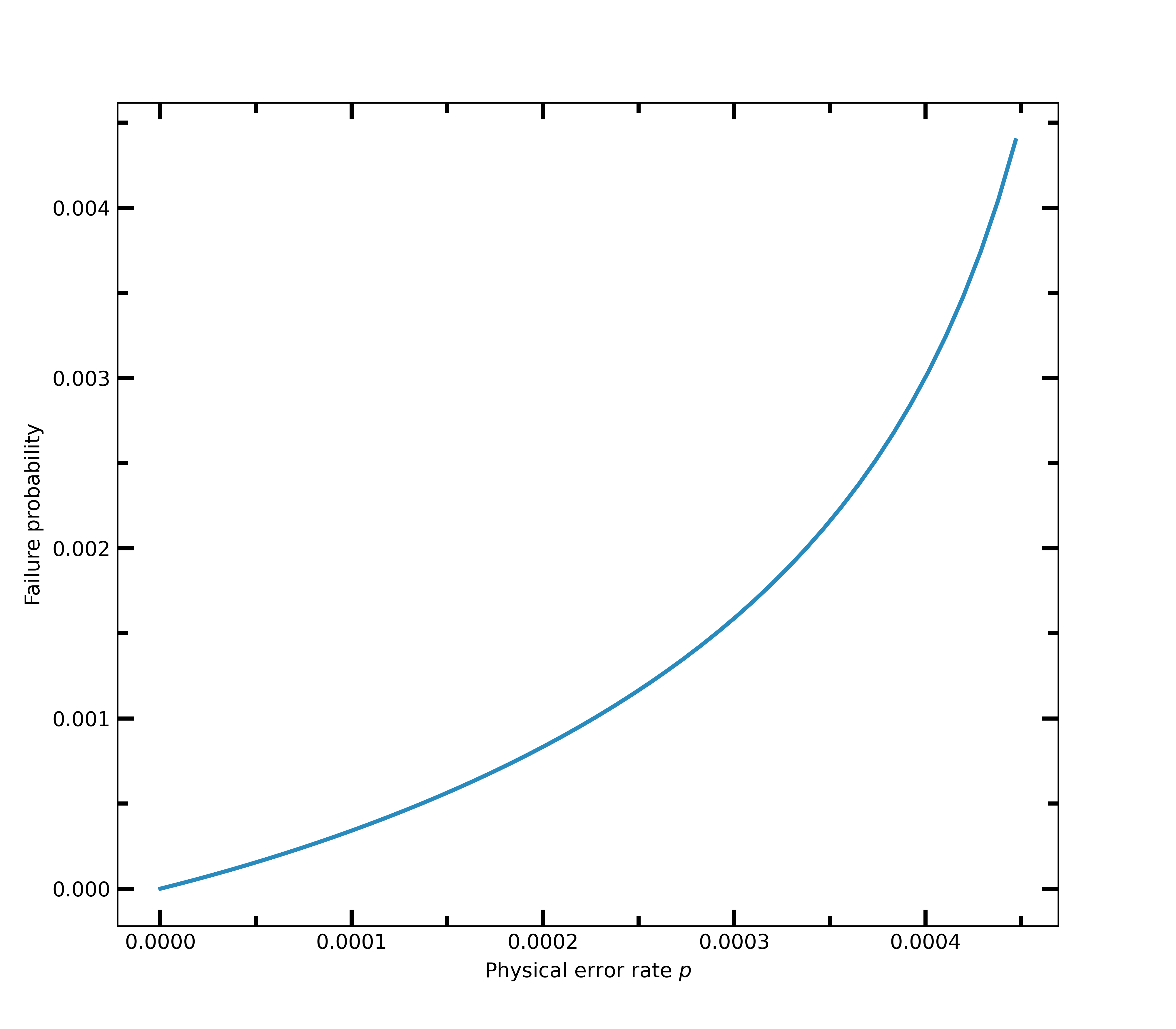}
\end{figure}
Next, we will analyze the faults and resultant logical errors of the interface in more detail. In particular, we will include errors in the incoming state $|\phi\rangle$. We will need to take into account the distinct $\epsilon_6$ error and also the different logical error types. By observing the circuit of the interface, we infer that if there is an incoming $X$ error, this results in a $\overline{Z}$ error; $Y\longrightarrow \overline{Y}$ and $Z\longrightarrow \overline{X}$. Moreover, we take into account the fact that an encoded EPR pair is stabilized by $\{\overline{XX},\overline{ZZ}\}$. We arrive at the following enumeration:
\begin{itemize}
    \item \textbf{Single fault location}\\
    Excluding $Loc_6$:
    \begingroup
        \setlength{\tabcolsep}{10pt}
        \renewcommand{\arraystretch}{1.35}
    \begin{table}[H]
    \centering
    \begin{tabular}{c|c}
    $\overline{X}$ & [0,1,1,0,0.684,0,0]    \\ \hline
    $\overline{Y}$ & [0,0,0,0,0.674,0,0]        \\ \hline
    $\overline{Z}$ & [0,0,0,1,0.656,0,0]    
    \end{tabular}
    \end{table}
    \endgroup
    To include the initial EPR pair, for the upper interface we have an extra $\overline{X}$-error from $Loc_2$ and for the lower interface we have an extra $\overline{Z}$-error from the $Loc_1$.
    For $Loc_6$, as explained before, for an EPR pair the possible errors are $XI/YI/ZI$, with $XI$ occurring with probability $I_X(q,\epsilon_0)$ and similarly for $YI/ZI$.
    \item \textbf{Malignant pairs}\\
    For the malignant pairs, we will only enumerate for $k=1$ and thus we include the respective $\sigma_i$.
    Excluding $Loc_6$:
        \begingroup
        \setlength{\tabcolsep}{10pt}
        \renewcommand{\arraystretch}{1.35}
    
    \begin{table}[H]
    \centering
    \begin{tabular}{c|c|c}
    $\overline{X}$ & $\overline{Y}$ & $\overline{Z}$ \\ \hline
    262.8 & 67.7 & 227.3
    \end{tabular}
    \end{table}
    \endgroup
    Malignant pairs involving $Loc_6$ that induce logical error on $\overline{\text{EPR}}_1$
    
    \begingroup
    \setlength{\tabcolsep}{10pt}
    \renewcommand{\arraystretch}{1.35}
    \begin{table}[H]
    \centering
\begin{tabular}{c|c|c|c|c}
$\overline{YY}$                 & $\overline{XX}$                & $\overline{ZZ}$                & $\overline{XI}$                & $\overline{IX}$                \\ \hline
0 &        0 &      0 &    31.0 &  0.21          \\ \hline
$\overline{ZI}$                 & $\overline{IZ}$                & $\overline{YI}$                & $\overline{XZ}$                & $\overline{ZX}$                \\ \hline
31.4 & 0.242 &  31.2 & 0.321 &  0.394 \\ \hline
$\overline{IY}$                 & $\overline{YX}$                & $\overline{XY}$                & $\overline{ZY}$                & $\overline{YZ}$                \\ \hline
0 &  0 & 0 &   0 &    0
\end{tabular}
\label{dist_of_error2}
\end{table}
    \endgroup

\end{itemize}
\section{Proofs for bounds on EPP failure probabilities}\label{EPP_failure}
\begin{proof}[Proof of \ref{ub_on_InEPP_A}]

We can view the \textit{Interface+EPP-A} circuit as consisting of two major components, one is the interface preparation of logical EPR pairs and the other is the EPP procedure. Below we use $\mathcal{G}_1$ to denote the instance that the preparation of 4 logical EPR pairs is bad and $\mathcal{G}_2$ to denote the cases where the EPP is bad. Thus we have
\begin{equation*}
    \mathbb{P}(\text{EPR rejected})\leq \mathbb{P}(\mathcal{G}_2)+\mathbb{P}(\mathcal{G}_1)
\end{equation*}
We would first compute the upper bound on $\mathbb{P}(\mathcal{G}_2)$. For the upper bound, since we have 6 CNOT-exRecs and 6 mmt-exRecs,
\begin{align*}
    \mathbb{P}(\mathcal{G}_2)\leq\sum_{i}|\text{Loc}_i|\varepsilon_i^{(k)}\leq 2\cdot6\cdot\left(\nu_5^{(k)}+\nu_4^{(k)} \right)
\end{align*}
which follows from the union bound and the factor $2$ follows from P.13 of \cite{AGP05}, resulting in a noisy simulation of the whole quantum circuit. 

Now we will compute an upper bound on $\mathbb{P}(\mathcal{G}_1)$, to do so, observe that the preparation part consists of 4 EPR pairs, and on each side the qubits are expanded via the interface, so 8 interfaces in total. We may now invoke Lemma \ref{EpEnc} and obtain the following
\begin{equation*}
    \mathbb{P}(\mathcal{G}_1)\leq 4\epsilon_6+8f_{in}(\epsilon)
\end{equation*}
we observe that $\mathbb{P}(\mathcal{G}_2)\leq12\epsilon$ $\forall k$, adding up two terms gives the desired upper bound.
\end{proof}
Observe that this analysis is applicable not only to concatenated Steane code, any proposal of code concatenation that has the corresponding interface with a logical error rate upper bound independent of $k$ will have EPP failure probability that is also independent of $k$ (given the EPP part can be performed fault-tolerantly in the logical space).

\begin{proof}[Proof of Corollary] \ref{lb_on_InEPP_A}
We would now derive a crude lower bound on $\mathbb{P}(\text{EPR rejected})$ and we would consider terms up to $\mathcal{O}(\varepsilon^3)$, so this bound will hold well for $\epsilon<\epsilon_0$ considered in this work. We may again use the Bonferroni inequality (or inclusion-exclusion principle) and thus,
\begin{align*}
    \mathbb{P}(\text{EPR rejected}) \geq \mathbb{P}(\mathcal{G}_1)+\mathbb{P}(\mathcal{G}_1)-\mathbb{P}(\mathcal{G}_1\wedge\mathcal{G}_2)
\end{align*}
For the first term above, based on the observation that if certain locations are faulty then they will surely cause a logical error, for $\mathcal{O}(\epsilon)$ terms, this will cause a logical error if it is in the initial $|\Phi\rangle$ or Enc$_{0\rightarrow1}$ while other locations are error-free. Note that the error probability for exRecs in $\text{Enc}_{l\rightarrow l+1}$ is $\varepsilon_i^{(l)}$, so for $\mathcal{O}(\epsilon)$ terms we have the lower bound (in the following $\gamma_7=111$ being the number of locations in $\text{Enc}_{0\rightarrow1}$,)
\begin{align*}
    8&\cdot2.8\cdot\epsilon(1-\epsilon)^{-1}(1-\epsilon_6)^4\prod_{i=0}^{k-1}\prod_{t=1}^{7}\left(1-\nu_t^{(i)}\right)^{8n_{in,t}}\\
    &+4\cdot \epsilon_6(1-\epsilon_6)^3\prod_{i=0}^{k-1}\prod_{t=1}^{7}\left(1-\nu_t^{(i)}\right)^{8n_{in,t}}
    \tag{$\dagger$}
    \label{first_order}
\end{align*} 
For the product term, we use the Weierstrass product inequality and obtain
\begin{equation*}
    \prod_{i=0}^{k-1}\prod_{t=1}^{7}(1-\nu_t^{(i)})^{8n_{in,t}}\geq\left(1-8\sum_{i=0}^{1}\sum_{t=1}^7n_{in,t}\nu_t^{(i)} \right)
\end{equation*}
for the early terms, we apply $(1+x)^r\geq1+rx$. Combining these with the specific values of $\nu_t^{(i)}$, we can simplify Eqn \eqref{first_order} as, to $\mathcal{O}(\epsilon^3)$,
\begin{align*}
    &22.4\epsilon(1-605.9\epsilon-4\epsilon_6+2423.6\epsilon\epsilon_6-1.10\times10^6\epsilon^2)\\
    +&4\epsilon_6(1-3\epsilon_6-606.9\epsilon-1.10\times10^6\epsilon^2+1820.7\epsilon\epsilon_6)
\end{align*}
Similarly, for two-fault terms, there are a number of cases to be considered, 
\begin{enumerate}
    \item Logical errors resulting from malignant pairs in $\text{Enc}_{0\rightarrow1}$, we have the lower bound,
    \begin{align*}
    &8\sum_{ij}\alpha_{in}(i,j)\sum_{s=0}^{k-1}\mu_i^{(s)}\mu_j^{(s)}(1-\nu_5^{(s)})^{-2}\dots\\
    \dots&\prod_{i=0}^{k-1}\prod_{t=1}^{7}(1-\nu_t^{(i)})^{8n_{in,t}}(1-\epsilon_6)^4\\
    \end{align*}
    \item For $k\geq2$, logical errors resulting from one faulty-exRec in $\text{Enc}_{1\rightarrow2}$. Note that for exRecs in $\text{Enc}_{l\rightarrow(l+1)}$ $l\geq3$ to be faulty, $\mathcal{O}(\epsilon^4)$ would be needed, so are not considered here. Hence for this case, we have the lower bound
    \begin{align*}
    &8\cdot2.8\cdot\mu_5^{(1)}(1-\nu_5^{(1)})^{-1}\dots\\
    \dots&\prod_{i=0}^{k-1}\prod_{t=1}^{7}(1-\nu_t^{(i)})^{8n_{in,t}}(1-\epsilon_6)^4
    \end{align*}
    \item If we have one fault in the initial EPR and one in one of the $\text{Enc}_{0\rightarrow1}$. From Table~\ref{dist_of_error1} we have
    \begin{align*}
    &4\cdot94.7\cdot\epsilon_6\epsilon(1-\epsilon)^{-1}(1-\epsilon_6)^3\prod_{i=0}^{k-1}\prod_{t=1}^{7}(1-\nu_t^{(i)})^{8n_{in,t}}
    \end{align*}
    \item We further have cases where there are two faults in two interfaces respectively. Apart from naively computing all combinations, there is a subtlety here, being that if two logical EPR pairs induce the same type of logical error, after EPP they end up canceling each other, for example, if on Alice's side, the first and second logical EPRs have $X$-error, the first logical EPR will still be accepted, so we should not include such cases. To this end, we will refer to Table \ref{dist_of_error1} and enumerate a lower bound of 192.8 such cases. So such cases occur with a probability
    \begin{equation*}
    192.8\cdot\epsilon^2(1-\epsilon)^{-2}(1-\epsilon_6)^4\prod_{i=0}^{k-1}\prod_{t=1}^{7}(1-\nu_t^{(i)})^{8n_{in,t}}
    \end{equation*}
\end{enumerate}
Summing up the above second-order cases and lower bound all the $(1-x)^r$'s, we have, to $\mathcal{O}(\epsilon^3)$
\begin{equation*}
    26768.7\epsilon^2+378.8\epsilon\epsilon_6-1.62\times10^7\epsilon^3-3.37\times10^5\epsilon^2\epsilon_6-1136.4\epsilon\epsilon_6^2
\end{equation*}
for $k\geq2$. For the next term, we note that when $k\geq2$, for $\mathbb{P}(\mathcal{G}_2))$ to cause a logical error we need at least $\mathcal{O}(\epsilon^4)$. Now for the last term, we note that for both components to have logical errors, we need at least $\mathcal{O}(\epsilon^5)$ in the circuit for $k\geq2$ because for $\mathcal{G}_2$ to occur we need $\mathcal{O}(\epsilon^4)$ and one fault in the EPR preparation. Combining all the terms above and the fact that $\epsilon\leq\epsilon_0$ we obtain the desired result.
\end{proof}
\section{Miscellaneous}\label{a_others}
\section{Distribution of logical error after CNOT-exRec}\label{dist_CNOT}
The probability distribution of malignant pairs that cause different types of logical errors for a CNOT-1 exRec (when $\epsilon_5=\epsilon_6$).
    \begingroup
    \setlength{\tabcolsep}{2pt}
    \renewcommand{\arraystretch}{1.35}
\begin{table}[H]
\centering
\begin{tabular}{c|c|c|c|c}
$\overline{YY}$                 & $\overline{XX}$                & $\overline{ZZ}$                & $\overline{XI}$                & $\overline{IX}$                \\ \hline
$7.21\times 10^{-5}$ & $9.01\times10^{-3}$ & $1.53\times10^{-2}$ & 0.161               & 0.298               \\ \hline
$\overline{ZI}$                 & $\overline{IZ}$                & $\overline{YI}$                & $\overline{XZ}$                & $\overline{ZX}$                \\ \hline
0.302                & 0.160               & $2.03\times10^{-2}$ & $5.65\times10^{-4}$ & $8.52\times10^{-3}$ \\ \hline
$\overline{IY}$                 & $\overline{YX}$                & $\overline{XY}$                & $\overline{ZY}$      l          & $\overline{YZ}$                \\ \hline
$1.97\times10^{-2}$  & $1.83\times10^{-3}$ & $1.20\times10^{-4}$ & $1.65\times10^{-3}$ & $1.44\times10^{-4}$ 
\end{tabular}
\label{dist_of_error1}
\end{table}
    \endgroup

\section{Bounds for Shor's parity measurement}\label{Shor-mmt}
Here we investigate the Shor parity measurement. We will exemplify this by examining the case where Alice performs a $\overline{XX}$ measurement on her side. Firstly we will obtain an upper bound on the rejection probability of the 14-qubit ancilla cat state. The number of single fault locations that will lead to rejection is [14, 0,0,1,8,0,0]. There are 38 locations in total with a breakdown of $\mathbf{n}_{\text{Shor}}=[14,1,7,1,15,0,0]$. Thus when 
we encode to level-$k$, we simply arrive at the upper bound
\begin{align*}
    \mathbb{P}(\text{Cat state rejected})\leq&\left(14\sigma_{U,1}^{\max}+\sigma_{U,3}^{\max}+8\right)\nu_5^{(k-1)}\\
    +&\sum_{\{i,j\}\in\binom{\mathbf{n}_{\text{Shor}}}{2}}\sigma_{U,i}^{\max}\sigma_{U,j}^{\max}\left(\nu_5^{(k-1)} \right)^2\\
    =& 11.8\nu_5^{(k-1)}+196.3\left(\nu_5^{(k-1)}\right)^2\\
    \leq&11.9\nu_5^{(k-1)}
\end{align*}
In general, two scenarios of the parity measurement would lead to the failure of the magic square game. First, if output logical errors occur during the initial two parity measurements on the data block, subsequent measurements are adversely affected. Second, faults within the measurement process itself can induce a 'parity-change', contributing to the failure. We say the Shor's measurement is bad if either of these happens. Having this observation, we compute the bounds. As this procedure was justified to be fault-tolerant, we again start by enumerating the malignant pairs. The MPM is listed in Appendix \ref{MPM}. So for level-$k$ concatenation, if we denote $\varepsilon^{(k)}_{\text{Shor,joint}}$ to be the case when all cat state ancillas are accepted but Shor's measurement is bad, following the same procedure as in Section \ref{Dir_Enc}, we arrive at the following bounds
\begin{align*}
     3.27\mu_0\left( \frac{\epsilon}{\mu_0
     }\right)^{2^k}\leq\varepsilon^{(k)}_{\text{Shor,joint}}\leq 1.27\epsilon_0\left(\frac{\epsilon}{\epsilon_0} \right)^{2^k}
\end{align*}
Hence for the conditional probability $\varepsilon^{(k)}_{\text{Shor}}$
\begin{equation*}
         3.27\mu_0\left( \frac{\epsilon}{\mu_0
     }\right)^{2^k}\leq\varepsilon^{(k)}_{\text{Shor}}\leq 1.27\epsilon_0\left(\frac{\epsilon}{\epsilon_0} \right)^{2^k}\left( 1-11.9\nu_5^{(k-1)}\right)^{-3}
\end{equation*}
We shall denote the lower and upper bounds by $\mu_{\text{Shor}}^{(k)}$ and $\nu_{\text{Shor}}^{(k)}$ respectively. 
\section{Malignant pair matrix(MPM)}\label{MPM}
\begin{enumerate}
    \item \textit{$|\overline{0}\rangle$\space/\space$|\overline{+}\rangle$-exRec}\\
    $\mathbf{n}_1=[11, 13, 9, 8, 47, 0, 0]$
    \begin{equation*}
    \alpha_1 = 
    \begin{pmatrix}
        10.0 & & & & & &\\
        0 & 11.0 & & & &&\\
        0 & 25.0 & 0 & & &&\\
        14.0 & 0 & 0 & 0 & &&\\
        70.0 & 69.9 & 81.2 & 74.4 & 172.2\\
        0&0&0&0&0&0 \\
        0&0&0&0&0&0&0
    \end{pmatrix}
    \end{equation*}
    \item \textit{$Z$-mmt-exRec}\\
    $\mathbf{n}_3=[8,8,8,15,36,0,0]$
    \begin{equation*}
        \begin{pmatrix}
            0   &&& &\\
 0     &0    && \\
 0 &    0   & 0 & & \\
 18.0 & 0 &     0  &  63.0 &\\
 0 &     0 &    0 &  42.8 &   0\\
         0&0&0&0&0&0 \\
        0&0&0&0&0&0&0
        \end{pmatrix}
    \end{equation*}
    \item \textit{EC-gadget}\\
    $\mathbf{n}_3=[8,8,8,8,36,0,0]$
    \begin{equation*}
        \begin{pmatrix}
5.0 &  &   &  &  \\
0 &   5.0 &  &   &   \\
0 & 7.0 &  0 &   &   \\
7 & 0 &  0 & 0 &  \\
44.0 & 42.0 & 59.5 & 59.0 & 130.2\\
        0&0&0&0&0&0 \\
        0&0&0&0&0&0&0
        \end{pmatrix}
    \end{equation*} 

    \item\textit{Identity gate}\\
    $\mathbf{n}_{in}=[16,16,16,16,72,0,7]$
    \begin{equation*}
    \begin{pmatrix}
    14.0 &  &  &  &  &  \\
 0 & 14.0 & &  &  &    \\
 0 & 49.0 &  42.0 &  &  & \\
 49.0  &  0 &   0 &   42.0 &  & \\
 100.7 & 100.7 &  181.4 &  181.4 & 278.6 & \\
 0 & 0 & 0 & 0 & 0 & 0 \\
 28.0 &   28.0 &  56.0 &  56.0 & 148.8 &0 &  16.4 \\
0&0&0&0&0&0&0

    \end{pmatrix}
    \end{equation*}

\item \textit{$\overline{\text{ebit}}^{(k)}$ from Direct Encoding}
\begin{equation*}
    \alpha_{\text{EPR}} = 
    \begin{pmatrix}
53 &  &  & & & \\
 0 & 53  &  &  &  &  \\
 0 &  189.96& 167.95  &   &   &      \\
 189.95 &  0  &   0  & 167.94  &  & \\
 360.36 &359.9 & 640.91& 646.04& 916.07 &   \\
 43.73 & 44.93&  89.56&  90.08& 231.49  &12.7 \\
 0&0&0&0&0&0&0
    \end{pmatrix}
\end{equation*}

    \item\textit{Interface},  $\text{Enc}_{0\rightarrow1}\circ \text{EC}$\\
    $\mathbf{n}_{in}=[14, 12, 9, 11, 58, 0, 7]$
    \begin{equation*}
    \alpha_{in} = 
    \begin{pmatrix}
        10.0 && & && \\
       5.0 & 18.0& & & &\\
       5.0&  39.0&   7.0  & &  &\\
       20.0& 15.0 &15.0 &   7.0 &  &\\
       92.6& 127.0 & 140.3& 137.9 & 281.5&\\
       0 & 0 & 0 & 0 & 0 & 0 \\  
       19.8 & 30.9 &35.0 &35.1& 121.6 & 0 &  16.4
    \end{pmatrix}
    \end{equation*}

    \item\textit{Logical EPP}  
    \begin{equation*}
    \alpha_{EPP} = 
    \begin{pmatrix}
        102 && & && \\
       0 & 34.0& & & &\\
       0&  110&   84.0  & &  &\\
       426 & 0 &0 &   420 &  &\\
       832 & 303 & 500 & 1626 & 1634&\\
       0 & 0 & 0 & 0 & 0 & 0 \\  
       0 &0 & 0 &0& 0 & 0 &  0
    \end{pmatrix}
    \end{equation*}
    
    \item\textit{Shor-$\overline{X}$-mmt (full)}\\
        $\mathbf{n}_{S_3}=[106, 67, 85,67, 375, 0,0]$\\
        \begin{equation*}
    \alpha_{\text{Shor}} = 
    \begin{pmatrix}
        249 & &   & & \\
       0 & 114 &  &  &  \\
       0 & 546 & 756 &  &   \\
       315 &  0 &  0 &  252 &   \\
       992 & 1094 & 2720 & 1492 & 3532\\
        0&0&0&0&0&0 \\
        0&0&0&0&0&0&0
    \end{pmatrix}
    \end{equation*}
    
\end{enumerate}

\end{document}